\definecolor{myblue}{named}{MidnightBlue}
\definecolor{mygreen}{RGB}{0,120,0}
\DeclareMathAlphabet{\mathscrbf}{OMS}{mdugm}{b}{n} 
\DeclareMathAlphabet\mathcalbf{OMS}{cmsy}{b}{n}		
\newcommand{\red}[1]{\textcolor{red}{#1}} 
\newtheorem{theorem}{Theorem} 
\newtheorem{lemma}[theorem]{Lemma}
\newtheorem{proposition}[theorem]{Proposition}
\def\wt{\rm wt} 
\def\even{\rm even} 
\def\odd{\rm odd} 
\def\>{\rangle} 
\def\<{\langle}
 \def\trnorm#1{{
 \left \| #1 \right \|_1   }} 
\DeclareMathOperator{\vect}{vec}
\def\thickhline{%
  \noalign{\ifnum0=`}\fi\hrule \@height \thickarrayrulewidth \futurelet
   \reserved@a\@xthickhline}
\def\@xthickhline{\ifx\reserved@a\thickhline
               \vskip\doublerulesep
               \vskip-\thickarrayrulewidth
             \fi
      \ifnum0=`{\fi}}
\begin{document}

\title{Tight bounds on the simultaneous estimation of incompatible parameters}
\author{Jasminder S. Sidhu}
\email{jsmdrsidhu@gmail.com}
\affiliation{Department of Physics and Astronomy, The University of Sheffield, Sheffield, S3 7RH, UK}
\affiliation{SUPA Department of Physics, The University of Strathclyde, Glasgow, G4 0NG, UK}
\author{Yingkai Ouyang}
\email{y.ouyang@sheffield.ac.uk}
\affiliation{Department of Physics and Astronomy, The University of Sheffield, Sheffield, S3 7RH, UK}
\author{Earl T. Campbell}
\affiliation{Department of Physics and Astronomy, The University of Sheffield, Sheffield, S3 7RH, UK}
\affiliation{AWS Center for Quantum Computing, Pasadena, CA 91125, USA}
\author{Pieter Kok}%
\affiliation{Department of Physics and Astronomy, The University of Sheffield, Sheffield, S3 7RH, UK}
\date{\today}

\begin{abstract}
The estimation of multiple parameters in quantum metrology is important for a vast array of applications in quantum information processing. However, the unattainability of fundamental precision bounds for incompatible observables has greatly diminished the applicability of estimation theory in many practical implementations. The Holevo Cram{\'e}r-Rao bound (\textsc{hcrb}) provides the most fundamental, simultaneously attainable bound for multi-parameter estimation problems. A general closed form for the \textsc{hcrb} is not known given that it requires a complex optimisation over multiple variables. In this work, we develop an analytic approach to solving the \textsc{hcrb} for two parameters. Our analysis reveals the role of the \textsc{hcrb} and its interplay with alternative bounds in estimation theory. For more parameters, we generate a lower bound to the \textsc{hcrb}. Our work greatly reduces the complexity of determining the \textsc{hcrb} to solving a set of linear equations that even numerically permits a quadratic speedup over previous state-of-the-art approaches. We apply our results to compare the performance of different probe states in magnetic field sensing,
and characterise the performance of state tomography on the codespace of noisy bosonic error-correcting codes. The sensitivity of state tomography on noisy binomial codestates can be improved by tuning two coding parameters that relate to the number of correctable phase and amplitude damping errors. Our work provides fundamental insights and makes significant progress towards the estimation of multiple incompatible observables.
\end{abstract}

\maketitle

\section{Introduction}
\label{sec:introduction}


\noindent
Physical quantities such as time, phase, and entanglement cannot be measured directly, but instead must be inferred through indirect measurements. An important category of such indirect measurements is parameter estimation. Quantum metrology describes the quantum mechanical framework that handles this estimation procedure. By recasting the problem as a statistical inference problem, parameter estimation can be associated with fundamental precision bounds. The key question in quantum metrology is what is the fundamental precision bound and how we can achieve it. Early applications of estimation theory focused on single parameter estimation such as phase measurements~\cite{Braunstein94, Giovannetti2011_NP, Toth14}. The ultimate precision bound for a single parameter is the quantum Cram{\'e}r-Rao lower bound (\textsc{qcrb}), which was proved by Helstrom and Holevo~\cite{Helstrom1968_IEEE,Helstrom1976,Holevo2011}.  Multi-parameter quantum metrology extends the single parameter case~\cite{Matsumoto2002_JPA, asymptotic_theory_book_2005,Gill2012_arxiv}, and is of fundamental importance in understanding a variety of practical applications, such as Hamiltonian tomography~\cite{Burgarth2017_PRL}, field sensing~\cite{Tsang2011_PRL, Ang2013_NJP, Baumgratz2016_PRL} and imaging~\cite{Tsang2009_PRL, Humphreys2013_PRL,Gagatsos2016_PRA,Pezze2017_PRL,Rubio2020_PRA}, and distributed sensing~\cite{Proctor2017_arxiv, Sidhu2017_PRA, Sidhu2018_arxiv,Rubio2020_JPA}. A central problem is to determine the optimal measurement strategies that saturate the \textsc{qcrb}~\cite{Paris2009_IJQI}. To achieve this, one must assume locally unbiased estimators~\cite{Fraser1964_JRSS}, which is reasonable given large amounts of prior information~\cite{Hall2012_PRX,DemkowiczDobrzanski15}, and with many independent repetitions of the experiment~\cite{Kay1993}. Several reviews on the topic highlight recent progress in the field~\cite{Pezze2014_AI, Degen2017_RMP,Sidhu2020_AVS}.

Each individual parameter we wish to estimate has an optimal measurement observable. However, when we wish to estimate two or more parameters simultaneously, the corresponding optimal observables may be incompatible. In this case, we can not achieve the optimal precision for each parameter individually. In this case the \textsc{qcrb} matrix bound is generally not simultaneously saturable for all parameters~\cite{Zhu2015_SR, Heinosaari2016_JPA,Ragy2016_PRA}. This motivates the search for tighter bounds that can be realised for practical applications of multi-parameter estimation theory. The Holevo Cram{\'e}r Rao bound (\textsc{hcrb}) encapsulates the difficulties associated with incompatible observables~\cite{Holevo2011}. It represents the best precision attainable with collective measurements on an asymptotically large number of identical copies of a quantum state~\cite{Guta2006_PRA,Hayashi2008_JMP,Yamagata2013_AS,Yang2019_CMP}.

Despite its importance, the \textsc{hcrb} has seen limited use in quantum metrology so far. There are several reasons for this. First, the \textsc{hcrb} is difficult to evaluate given that it is defined through a complex optimisation over a set of observables. Second, implementing collective measurements is generally a difficult task. Nevertheless, applications of the \textsc{hcrb} in metrological tasks do exist. Suzuki found closed form results for parameter estimation with qubits~\cite{Suzuki2016_JMP}, and explored connections between different types of metrological bounds in the special case of two parameter estimation theory. For pure states~\cite{Matsumoto2002_JPA} and displacement estimation with Gaussian states~\cite{Holevo2011}, it has been shown that the \textsc{hcrb} is attained by single-copy measurements.
The \textsc{hcrb} was also used as a tool to define the precision of state estimation for finite dimensional quantum systems~\cite{Yang2019_CMP}. Bradshaw \emph{et al.} calculated the \textsc{hcrb} for a joint parameter estimation of a displacement operation on a pure two-mode squeezed probe~\cite{Bradshaw2017_PLA}.

Arguably, the \textsc{hcrb} is most relevant in multi-parameter estimation. An increasing number of true multi-parameter estimation protocols has been explored~\cite{Knott2016_PRA, Rehacek17,Zhang17, Proctor2018_PRL}, and therefore the need for general, attainable bounds on multi-parameter quantum estimation is urgent. Recently, Albarelli \emph{et al.} have investigated the numerical tractability of calculating the \textsc{hcrb} for the simultaneous estimation of multiple parameters~\cite{Albarelli2019_PRL}. For finite-dimensional systems, they recast the evaluation of the \textsc{hcrb} as a semi-definite program, which is an optimisation problem that can be efficiently implemented. To date, no general analytic expression for the \textsc{hcrb} is known.

In this paper, we find that it is possible to recast the \textsc{hcrb} as a quadratic program with linear constraints, thereby providing tight bounds for multi-parameter estimation problems. We develop an analytical approach to solving the two-parameter \textsc{hcrb}, and provide expressions on when the analytical solution is tight.
Our analytical solution for the optimal observables that can saturate the \textsc{hcrb}
allows one to establish analytically the minimum penalty due to the incompatibility of the observables. 
Specifically, we generalise attainability constraints for simultaneous multi-parameter estimation problems where the commonly used Cram{\'e}r-Rao bounds can not be saturated due to incompatibility. The analytic two-parameter \textsc{hcrb} can be considered a generalised quantum uncertainty relation~\cite{Frowis2015_PRA}. For more than two parameters, our method does not provide tight bounds but still outperforms the \textsc{qcrb}.


\subsection{Summary of results}
\label{subsec:results_summary}

\noindent
The \textsc{hcrb} is defined as a constrained minimisation problem over measurement observables. By recasting the definition as a quadratic program with linear constraints, we find exact solutions to this minimisation and determine the optimal observables. Our method to solve this minimisation relies on the notion of duality in optimisation theory, where the primal problem is transformed to its dual problem. Through the duality gap, we are able to quantify the minimum penalty of estimating incompatible observables.

In this article, we introduce three new algorithms that derive bounds to the \textsc{hcrb} for two and more parameters. First we determine upper and lower bounds to the \textsc{hcrb} for two-parameter estimation problems that are not always tight. This leads to simple analytic expressions that are straightforwardly determined for probe states with full rank. The salient feature of this framework, from which the simplification is inherited, is that only the boundary values for the Lagrange dual variables are considered. Second, this method is extended to determine upper and lower bounds to the \textsc{hcrb} for more than two parameters. Finally, we return to the two-parameter \textsc{hcrb} to develop tight bounds. For this, we lift the constraint on the values for the Lagrange dual variables to explore the full generality permitted by our method. Our analysis for this shows that the \textsc{hcrb} is a general solution to a Sylvester equation in the measurement observables, and recovers the standard Lyapunov \textsc{sld qcrb} solution when the weak commutativity criterion is violated. This algorithm can be implemented numerically using a Bartel-Stewart's algorithm for linear equation solvers, and offers a quadratic speedup in runtime over state-of-the-art semi-definite programming approaches.

Table~\ref{tab:hcrb_results} provides a high level summary of these algorithms, along with any assumptions made. Our results provide a significant extension of the capabilities of previous approaches, and clarifies the role of the \textsc{hcrb} in the estimation of incompatible observables.

\newlength{\thickarrayrulewidth}
\setlength{\thickarrayrulewidth}{2.1\arrayrulewidth}
\renewcommand{\arraystretch}{1.86}
\setlength{\tabcolsep}{8pt}
\begin{table*}[t!]
  \centering
  \begin{tabular}{>{\centering\arraybackslash}m{4cm}|>{\centering\arraybackslash}m{5cm}|>{\centering\arraybackslash}m{7cm}}
    \thickhline
    \textbf{Nature of HCRB bound}\vspace{2pt} & \textbf{Assumptions}\vspace{2pt} & \textbf{Algorithm details} \vspace{2pt}\\
    \hline
    Algorithm~\ref{algo:Algo_hcrb2_2}: analytic two-parameter bound. & Full rank $\rho$, linearly independent $\rho_1$ and $\rho_2$, and analytic form for $Q$-matrix. & Provides upper $\mathscr{U}$ and lower $\mathscr{L}$ bounds, need not be tight. \\
    Algorithms~\ref{algo:hcrb_gt} and~\ref{algo:hcrb_multi}: hybrid multi-parameter bound. & Full rank $\rho$ and analytic form for $Q$-matrix. & Provides upper $\mathscr{U}$ and lower $\mathscr{L}$ bounds, need not be tight. \\
    Algorithm~\ref{fig:masteralgo}: hybrid two-parameter bound.  &  Full rank $\rho$, spectral decomposition of $\rho$.   Full rankness of intermediate. $\mathcal{Q}$-matrix is full rank and takes at most $\tau$ time to compute.
    & Analytic bounds for $u \in [0,1]$. Tight bounds certifiably attained by numerically varying $u$ to maximise $\mathscr{L}_u$. Computes in $\mathcal{O}({\rm polylog}(1/\epsilon )\tau D^{0})$. \\
    Numerical two-parameter bound using Eq.~\eqref{eqn:Ysol_main}. & Full rank $\rho$.  &  Computes in $\mathcal{O}({\rm polylog}(1/\epsilon ) D^{2.376})$ time using Bartel-Stewart's algorithm,
    or $\mathcal{O}({\rm polylog}(1/\epsilon ) D^3)$ time using Gaussian elimination. 
    \\
    SDP numerical algorithm. & Arbitrary $\rho$, $\rho_1$, and $\rho_2$. &  Computes in $\mathcal{O}({\rm polylog}(1/\epsilon ) D^{2 \times 2.376})$ time,
    or $\mathcal{O}({\rm polylog}(1/\epsilon ) D^6)$ time using Gaussian elimination.\\
    \thickhline
  \end{tabular}
  \caption{Algorithms and bounds to the \textsc{hcrb} for two and more parameters. Bounds are analytic, numerical, or a hybrid of analytical and numerical, as indicated in the first column. There is a trade-off between the assumptions taken for each algorithm and its complexity. Here $D$ is the dimension of the probe state, and $\epsilon$ a measure of how close the \textsc{hcrb} bound is to optimal. The final row provides comparitive details for the \textsc{sdp} approach in Ref~\cite{Albarelli2019_PRL}.}
  \label{tab:hcrb_results}
\end{table*}%
%


\subsection{Outline of paper}
\label{subsec:paper_outline}

\noindent
We begin in section~\ref{sec:qet} by providing an overview of multi-parameter quantum estimation. In section~\ref{sec:hcrb}, we introduce the four new algorithms for analytic and numerical results to the \textsc{hcrb} for two parameters and arbitrary number of parameters. We detail connections between alternative precision bounds and significantly extend the capabilities of previous approaches in the literature. Section~\ref{sec:applications} discusses applications of our results to magnetometry and explores how bosonic quantum codes can bestow resilience of parameter estimates against noise beyond practical control. These applications demonstrate the strengths of our results and extend deep connections between quantum metrology and quantum error correcting codes. Finally, conclusions and interesting extensions to our results are provided in section~\ref{sec:conclusions}.


\section{Multi-parameter quantum estimation}
\label{sec:qet}

\noindent
Quantum estimation theory provides fundamental bounds to the estimation precision of physical parameters and the optimal measurements that saturate these limits~\cite{Paris2009_IJQI}. We are interested in estimating multiple parameters simultaneously. The prototypical scheme requires that the vector of parameters $\smash{\bm{\theta} = (\theta_1,\ldots,\theta_d)^\top \in \mathbb{R}^d}$ be imprinted on a quantum state $\rho(\bm{\theta})$.
Denoting $\mathbb H_D$ as the set of all Hermitian matrices in the Hilbert space of dimension $D$, we can see that $\rho(\bm{\theta})$ is a positive semidefinite matrix in $\mathbb H_D$ with unit trace. We define measurement operators via a positive operator valued measure (\textsc{povm})
\begin{align}
    \bm{\Pi} = \left\{\Pi_\omega \geq 0, \omega \in \Omega \, \vert \sum_{\omega \in \Omega} \Pi_\omega = \mathbbm{1}_D\right\},
    \label{eqn:povm}
\end{align}
where $\mathbbm{1}_D$ denotes the identity operator, and $\Omega$ is the set of measurement outcomes. The outcomes of such a measurement can be used in a function called the estimator $\smash{\check{\bm{\theta}}}$, which gives an estimate of the parameters. A general estimation scheme requires access to multiple identical copies of the quantum probe state. A separable measurement can be individually applied to each copy of the state to obtain estimates of each parameter separately, whereas a collective measurement can be applied jointly on all copies of the state to acquire a simultaneous estimate of all parameters. The ultimate precision bound is the one that is asymptotically achieved by a sequence of the best collective measurements as the number of copies tends to infinity~\cite{Gill2000_PRA,BarndorffNielsen2000, Bagan2004_PRA, Bagan2006_PRA,Hayashi2008_JMP,Guta2006_PRA,Kahn09,Yamagata2013_AS,Gill2012_arxiv}.

The performance of the estimator $\smash{\check{\bm{\theta}}}$ under any measurement can be quantified in terms of its mean square error (\textsc{mse}) matrix
\begin{align}
    \bm{\Sigma}_{\bm{\theta}}\left(\bm{\Pi}, \check{\bm{\theta}}\right) = \sum_{\bm{\omega} \in \Omega^N} p(\bm{\omega} \vert\bm{\theta}) \left(\check{\bm{\theta}}(\bm{\omega}) - \bm{\theta}\right) \left(\check{\bm{\theta}}(\bm{\omega}) - \bm{\theta}\right)^\top,
    \label{eqn:msem}
\end{align}
where the probability of measurement outcomes is provided by Born's rule $p(\omega \vert\bm{\theta}) = \tr{\rho(\bm{\theta})\Pi_\omega}$, and $N$ is the number of independently repeated measurements. The set of estimators are said to be locally unbiased if for all $\omega \in \Omega$
\begin{align}
\sum_{\bm{\omega} \in \Omega^N} \left(\check{\theta}_j(\bm{\omega}) - \theta_j\right) p(\bm{\omega} \vert\bm{\theta}) = 0, \; \sum_{\bm{\omega} \in \Omega^N}\check{\theta}_j(\bm{\omega})\partial_k p(\bm{\omega} \vert\bm{\theta}) = \delta_{jk},
\label{eqn:locally_unbiased_constraints}
\end{align}
where $\partial_k \equiv \partial/\partial \theta_k$. Under these conditions, the \textsc{mse} matrix is equivalent to the covariance matrix of parameter estimates, and is lower bounded through generalisations of the Cram{\'e}r-Rao bound from classical statistics
\begin{align}
    \bm{\Sigma}_{\bm{\theta}}\left(\bm{\Pi}, \check{\bm{\theta}}\right) \geq \mathcal{F}\left(\rho(\bm{\theta}), \bm{\Pi}\right)^{-1},
    \label{eqn:CRBM}
\end{align}
where $\mathcal{F}$ is the classical Fisher information matrix~\cite{Helstrom1976, Holevo2011}. The Fisher information characterises the \textsc{mse} matrix for the best classical data manipulation given a measurement strategy in the asymptotic limit~\cite{Kay1993}. A well known quantum generalisation includes the symmetric logarithmic derivative (\textsc{sld}), 
$L_j \in \mathbb H_N$, which is implicitly defined through $2\partial_j\rho=\{L_j,\rho\}$ and generates the real symmetric quantum Fisher information matrix (\textsc{qfim}) $\mathcal{I}_{jk}^\mathsf{S} = \text{Re}\left[\tr{\rho L_jL_k}\right]$~\cite{Helstrom67,Helstrom1968_IEEE}. This is referred to as the \textsc{sld qfim}. Notice that for ease of notation, we drop the explicit dependence of the state on the vector of parameters $\bm{\theta}$. Similarly, the right logarithmic derivative (\textsc{rld}), 
$R_j \in \mathbb H_N$, defined through $\partial_j\rho=\rho R_j$, induces the complex Hermitian \textsc{rld qfim} $\mathcal{I}_{jk}^\mathsf{R} = \tr{\rho R_jR_k}$~\cite{Yuen73,Belavkin1976_TMP}. 

There is a fundamental trade-off on how well each parameter in $\bm{\theta}$ can be simultaneously estimated. Hence, a meaningful multi-parameter estimation protocol minimises the weighted sum of parameter estimate variances. For this, a size $d$ positive definite square weight matrix $W$ is chosen to define the weighted mean square error (\textsc{wmse}) $\tr{W \bm{\Sigma}_{\bm{\theta}}}$. Holevo proved an equivalence between a matrix inequality and its corresponding scalar inequality, which allows the \textsc{wmse} to be optimally minimised. In particular, for any real symmetric $V$ and $W$, and Hermitian $M$, 
the inequality $V \geq M$ implies that 
$\tr{W V} \geq  \tr{\text{Re}[W M]} + {\rm Tr}\abs{\sqrt{W} \mathrm{Im}[M]\sqrt{W}}$~\cite[Lemma 6.6.1]{Holevo2011}, where $\text{Re}[\cdot]$ and $\text{Im}[\cdot]$ denote the real and imaginary part of each matrix element, and ${\rm Tr} \abs{\cdot}$ denotes the sum of the absolute values of the eigenvalues of a matrix
\footnote{One can also show this by first proving that $V\ge M$ implies that $\tr{ V} \geq  \tr{\text{Re}[ M]} + {\rm Tr}\abs{ \mathrm{Im}[M]}$, and subsequently replacing $V$ and $M$ by $\sqrt W V \sqrt W$ and $\sqrt W M \sqrt W$ respectively. The multiple on both sides by $\sqrt W$ is to ensure that $\sqrt W V \sqrt W$ is symmetric and $\sqrt W M \sqrt W$ is Hermitian.}
. 
Since the sum of the absolute values of the eigenvalues of a matrix is in fact the sum of the singular values of a matrix, the function ${\rm Tr} \abs{\cdot}$ is equivalent to the more commonly used trace norm $\|\cdot \|_1$.
Given that the covariance matrix is always real and symmetric, we identify the matrix $V$ with $\smash{\bm{\Sigma}_{\bm{\theta}}}$. Hence we can write the \textsc{wmse} as
\begin{align}
    \Tr{W\bm{\Sigma}_{\bm{\theta}}} \geq  \Tr{W \text{Re}[M]} + \bigl\|\sqrt{W} \mathrm{Im}[M] \sqrt{W}\bigr\|_1.
\label{eqn:scalar_Cost_func}
\end{align}
Notice that the scalar cost function in Eq.~\eqref{eqn:scalar_Cost_func} appropriately assigns individual priority weights to different parameters. For a given weight matrix and Hermitian matrix $M$, we want to minimise the scalar \textsc{wmse} to derive better parameter estimates.

Now, we identify $M$ with the inverse of the family of definitions for the quantum Fisher information matrices to generate different lower bounds on the scalar \textsc{wmse} cost function. Specifically, the matrices $\smash{\mathcal{I}^\mathsf{S}}$ and $\smash{\mathcal{I}^\mathsf{R}}$ generate the following scalar cost functions on the \textsc{sld qcrb} and \textsc{rld qcrb}
\begin{align}
    C_\mathsf{S}({\bm{\theta}}) &= \Tr{W[\mathcal{I}^\mathsf{S}]^{-1}}, \label{eqn:sld_sbound}\\
    C_\mathsf{R}({\bm{\theta}}) &= \Tr{W\text{Re}[\mathcal{I}^\mathsf{R}]^{-1}} + \bigl\|\sqrt{W}\text{Im}[\mathcal{I}^\mathsf{R}]^{-1} \sqrt{W}  \bigr\|_1,
    \label{eqn:scalar_bounds}
\end{align}
respectively. Nagaoka investigated in detail the relationship between these bounds~\cite{Nagaoka82}. The central problem in quantum estimation theory is the minimisation of these scalar bounds over the family of probability of distributions defined by quantum measurements. 


The \textsc{sld} and \textsc{rld qcrb} do not always provide the best bounds to parameter estimates. 
For example, the attainability of the \textsc{sld qcrb} does not generally hold for multiple parameter estimations~\cite{asymptotic_theory_book_2005}. Intuitively, any incompatibility among the parameters $\bm{\theta}$ prohibits the simultaneous optimal estimation of all parameters. 
Correspondingly, the \textsc{rld qcrb} is not always attainable since the optimal estimators derived from the \textsc{rld} may not correspond to physical \textsc{povm}s~\cite{Genoni2013_PRA}.  

The problem with saturability of the multiparameter bound was noted by Holevo, who provided the most general quantum extension to the classical Cram{\'e}r-Rao bound, called the Holevo Cram{\'e}r-Rao bound (\textsc{hcrb}). Specifically, if a vector of Hermitian observables $\smash{\bm{X} = (X_1, \ldots, X_d)}$ satisfies the locally unbiased conditions $\smash{\tr{\rho X_j} = 0}$ and $\smash{\tr{\partial_j \rho X_k} = \delta_{jk}}$, its covariance matrix $Z(\bm{X})$ with matrix elements $\smash{[Z(\bm{X})]_{jk} = \tr{ \rho X_j X_k }}$ satisfies the inequalities~\cite{Helstrom1976,Holevo2011}
\begin{align}
    Z(\bm{X}) \geq [\mathcal{I}^\mathsf{S}]^{-1}, \quad \text{and} \quad Z(\bm{X}) \geq [\mathcal{I}^\mathsf{R}]^{-1}.
\label{eqn:Obse_cov}
\end{align}
From this, it is clear that 
identifying $M$ in Eq.~\eqref{eqn:scalar_Cost_func} with the Hermitian matrix $Z(\bm{X})$, such that 
\begin{align}
 \mathrm{Tr}[W \bm{\Sigma}_{\bm{\theta}}]   &  \geq  \mathrm{Tr}[W {\rm Re}[Z(\bm{X})]] + \| \sqrt{W} \mathrm{Im} [Z(\bm{X})]\sqrt{W} \|_1,
 \label{preHolevo}
\end{align}
we have a tighter bound on the scalar \textsc{wmse} than either of the bounds in Eq.~\eqref{eqn:sld_sbound} or Eq.~\eqref{eqn:scalar_bounds}.
By optimising the objective function in Eq.~\eqref{preHolevo} subject to appropriate unbiasedness constraints on $X$, we obtain the tightest bound on the \textsc{wmse}. This optimisation defines the \textsc{hcrb}, $C_\mathsf{H}(\bm{\theta})$, which explicitly is the minimum
of the following minimisation problem~\cite{Nagaoka82}
\begin{mini} 
{X_1,\ldots,X_d}{\tr{ W{\rm Re}Z(\bm{X})} + \|\sqrt{W} {\rm Im}[ Z(\bm{X})] \sqrt{W} \|_1  ,}
{}{}
\addConstraint{\Tr{\rho X_j}}{=0}
\addConstraint{\Tr{\partial_j \rho X_k}}{=\delta_{jk}.}
\label{opt0}
\end{mini}
The \textsc{hcrb} is the best asymptotically attainable precision with global, unbiased measurements of a set of parameters. By minimising over only the first term in the objective function of Eq.~\eqref{opt0}, we obtain the \textsc{sld qcrb}~\cite{Nagaoka82}
\begin{align}
    C_\mathsf{S}({\bm{\theta}}) = \Tr{W[\mathcal{I}^\mathsf{S}]^{-1}} = \min_{X_1,\ldots, X_d} \tr{ W{\rm Re}Z}.
    \label{eqn:sld_qcrb_minimisation_form}
\end{align}
This shows that the \textsc{hcrb} is a tighter bound than the \textsc{sld qcrb}, since the second term in Eq.~\eqref{opt0} is non-negative~\cite{Ragy2016_PRA}. In fact, the \textsc{hcrb} is more informative than both the scalar \textsc{sld} and \textsc{rld} \textsc{qcrb}s, and satisfies the inequality~\cite{Holevo82}
\begin{align}
    \Tr{W \bm{\Sigma}_{\bm{\theta}}} \ge C_\mathsf{H}({\bm{\theta}}) \ge \text{max}\left\{C_\mathsf{S}({\bm{\theta}}), C_\mathsf{R}({\bm{\theta}})\right\}, 
    \label{eqn:Holevo_bound_intro}
\end{align}
and gives
the best asymptotically attainable precision with global, unbiased measurements of a set of parameters. Specifically, Helstrom~\cite{Helstrom1976} and Holevo~\cite{Holevo82} demonstrated that $C_\mathsf{H}({\bm{\theta}})$ is attainable if the locally unbiased equality constraints in Eq.~\eqref{eqn:locally_unbiased_constraints} are satisfied.

We note that the \textsc{hcrb} is not defined explicitly in terms of a closed form for a given statistical model. This is in contrast to the classical case, where the Fisher information can be readily determined from a given statistical model. Recent efforts have focused on determining upper bounds to the \textsc{hcrb}~\cite{Carollo2019_JSM,Tsang2019_arxiv, Albarelli2019_arxiv}. Specifically, the \textsc{hcrb} is upper bounded by a quantity that is twice the \textsc{sld-qcrb}, such that~\cite{Carollo2019_JSM}
\begin{align}
    \text{max}\left\{C_\mathsf{S}({\bm{\theta}}), C_\mathsf{R}({\bm{\theta}})\right\} \leq C_\mathsf{H}({\bm{\theta}}) \leq 2 C_\mathsf{S}({\bm{\theta}}).
    \label{eqn:holevo_sandwich}
\end{align}
In this paper, we provide an analytic solution to the \textsc{hcrb} and provide conditions on when it is tight.

The \textsc{hcrb} is the best asymptotically achievable bound under the conditions stated in Refs~\cite{Guta2006_PRA, Hayashi2008_JMP, Kahn09, Yamagata2013_AS}. Both inequalities in Eq.~\eqref{eqn:Holevo_bound_intro} can be tight~\cite{Ragy16}. For instance, consider the skew-symmetric matrix $\smash{\text{Im}(\tr{L_jL_k\rho(\bm{\theta})})}$. When 
\begin{align}
    \text{Im}(\tr{L_jL_k\rho(\bm{\theta})}) = 0\, , 
    \label{eqn:weak_commutativity}
\end{align}
we have $C_\mathsf{H}({\bm{\theta}}) = C_\mathsf{S}({\bm{\theta}})$~\cite{asymptotic_theory_book_2005}.
This condition is referred to as the weak commutativity criterion~\cite{Suzuki2019_E}, and when it is fulfilled the \textsc{qcrb} is a good proxy for the \textsc{hcrb}. In the next section, we show how we can use methods from optimisation theory to address the minimisation over several Hermitian operators in the case where the weak commutativity criterion is not fulfilled.



\section{Holevo Cram{\'e}r-Rao Bound}
\label{sec:hcrb}

\noindent
In this section, we present algorithms to calculate bounds on the Holevo Cram\'er-Rao bound $C_\mathsf{H}$. We first derive simple analytic upper and lower bounds for $C_\mathsf{H}$ for two parameters in section~\ref{sec:two_parameter_Setting}. We show how these bounds are generated by studying the optimisation problem using the method of Lagrange multipliers. This has the advantage of reducing the complexity involved in evaluating bounds on $C_\mathsf{H}$ to that of solving two sets of linear equations. In section~\ref{sec:multi_parameter_setting}, we focus on deriving lower bounds on $C_\mathsf{H}$ for more than two parameters. At the expense of additional analysis, our formalism can be extended to also provide tight analytic solutions to the \textsc{hcrb}. We demonstrate this in section~\ref{eqn:new_thrm}, where we provide a complete exposition of analytic bounds on the two-parameter \textsc{hcrb}, and provide conditions for when they bounds are tight.


\subsection{Simple bounds in the two-parameter setting}
\label{sec:two_parameter_Setting}

\noindent
We first consider the \textsc{hcrb} for two parameters $\smash{\bm{\theta} = (\theta_1, \theta_2)^\top}$. To obtain simple analytic bounds to the \textsc{hcrb}, we must define the weight matrix $W$ for the scalar bound. For simplicity, we use the identity weight matrix and determine upper and lower bounds to the two-parameter \textsc{hcrb} using optimisation theory~\cite{nocedal2006numerical}. We want to solve the minimisation in Eq.~\eqref{opt0}, which is convex but not quadratic. Hence, we first manipulate Eq.~\eqref{opt0} into a quadratic form in the variables $X_1$ and $X_2$. Then, such an optimisation problem can be studied analytically using the method of Lagrange multipliers. 

Choosing $Y = X_1 + i X_2$, Eq.~\eqref{opt0} can be written as an optimisation program (see appendix~\ref{app:2_param})
\begin{mini} 
{Y,t}{ t,}
{}{}
\addConstraint{\Tr{ Y \rho Y^\dagger } \le t,\; \Tr{ Y^\dagger \rho Y } }{ \le t }
\addConstraint{ \tr{ \rho Y}=0, \;  \tr{ \partial_1  \rho Y} = 1, \;  \tr{ \partial_2  \rho Y}}{ = i. }
\label{main-opt1}
\end{mini}
Note that by considering both the real and imaginary parts of the above equality constraints, the actual number of real-valued equality constraints is six, which is consistent with the number of equality constraints corresponding to the minimisation in Eq.~\eqref{opt0}. Here $Y$ is optimised over all complex matrices of dimension $D$, and is in general not a Hermitian matrix. By mapping $Y$ and $t$ into a real vector ${\bf x}$, we cast this optimisation program into the standard form of
\begin{align}
    \min_{\bf x} \{f({\bf x}): c_i({\bf x}) \le 0 , h_i({\bf x}) = 0\},\label{func_minimisation}
\end{align} 
where $f({\bf x})$ is a real linear objective function, while $h_i({\bf x})$ and $c_i({\bf x})$ are the corresponding equality and inequality constraint functions that must also be real. Eq.~\eqref{main-opt1} is a convex program, since its equality constraints are linear and its inequality constraints are quadratic and convex. To check whether we can use optimality conditions from optimisation theory, we check whether Slater's constraint qualification holds. This amounts to checking that all the inequality constraints in Eq.~\eqref{main-opt1} can strictly hold. Since $t$ can be arbitrarily large, this indeed is the case. The optimality conditions for a continuous optimisation program are best stated in terms of the Lagrangian of Eq.~\eqref{main-opt1}, given by 
\begin{align}
\mathsf{L}({\bf x}, {\bm \lambda},{\bm z}) = f({\bf x}) + \sum_{i=1}^2 \lambda_i c_i({\bf x}) + \sum_{i=1}^6 z_i h_i({\bf x}),
\label{eqn:lagrangian}
\end{align}
where the coefficients $\lambda_i \geq 0$ and $z_i \in \mathbbm{R}$ are Lagrange multipliers for the inequality and equality constraints respectively. Since Eq.~\eqref{func_minimisation} is a convex program and Slater's constraint qualification holds, the first order Karush-Kuhn-Tucker (\textsc{kkt}) conditions of stationarity, primal and dual feasibility, and complementary slackness are necessary and sufficient~\cite{nocedal2006numerical} to determine the optimality of Eq.~\eqref{main-opt1}.

For our problem we have dual variables ${\bm \lambda} = (\lambda_1, \lambda_2) = (u,v)^\top$ and ${\bf z} = (z_1,\dots,z_6)^\top$, which are vectors of Lagrange multipliers. The primal variables are $Y$ and $t$, and the Lagrangian is given by
\begin{align}
\mathsf{L}(Y,t,u,v,{\bf z}) = &
\, \, t(1-u-v) - {\bf b}^\top {\bf z} + u \tr{Y \rho Y^\dagger}
\notag\\
 & + v \tr{Y^\dagger \rho Y}+\tr{A Y} + \tr{A^\dagger Y^\dagger}\, .
 \label{eqn:Lagrangian_bad_boy}
 \end{align}
Here $\smash{{\bf b} = (0,1,0,0,0,1)^\top}$ is a column vector that encodes the equality constraints in Eq.~\eqref{main-opt1}, constructed in Appendix~\ref{subsec:2param-lagrangian}. The operator $A$ is a linear superposition of $\rho$ and its derivatives, 
\begin{align}
 A = z_1 A_1 + \dots + z_6 A_6,
\label{eqn:A_defos}
\end{align}
where 
\begin{align}
\begin{split}
     A_1 = \frac12 \rho\, ,  & \qquad A_4 = -i A_1 , \cr
     A_2 = \frac12 \partial_1 \rho\, , & \qquad A_5 = -i A_2 , \cr
     A_3 = \frac12 \partial_2 \rho\, , & \qquad A_6 = -i A_3 .
\end{split}     
\label{def:Ajs}
\end{align}
Due to the duality principle in optimisation theory~\cite{nocedal2006numerical}, we may equivalently view the optimisation by considering the Lagrange dual function $g( {\bm \lambda},{\bf z}) = \inf_{\bf x} \mathsf{L}({\bf x},  {\bm \lambda},{\bf z})$ of Eq.~\eqref{eqn:lagrangian}. 
Since the Lagrangian $\mathsf{L}$ is quadratic in ${\bf x}$, 
the Lagrange dual can be found analytically 
by an unconstrained minimisation of the Lagrangian with respect to ${\bf x}$ 
for fixed values of the dual variables ${\bm \lambda}$ and ${\bf z}$~\cite{nocedal2006numerical}. 
Due to the structure of the Lagrangian in Eq.~\eqref{eqn:Lagrangian_bad_boy}, the Lagrange dual is never unbounded from below whenever $u+v=1$.  Hence, maximising the Lagrange dual function corresponds to an unconstrained maximisation problem. Since the Lagrange dual is also a quadratic function in terms of its dual variables ${\bf z}$, it can be easily maximised exactly with respect to ${\bf z}$.

Note that our Lagrange dual is not a quadratic function with respect to ${\bm \lambda} = (u,v)$. To bound $C_\mathsf{H}$, it suffices to evaluate the Lagrangian for feasible values of $(u,v)$ that satisfy $u+v=1$. Two such values are the boundary values $(u,v) = \{(0,1), (1,0)\}$, for which the Lagrangian in Eq.~\eqref{eqn:Lagrangian_bad_boy} is greatly simplified.
For each case, we first determine the stationary point of the resulting Lagrangian with respect to $Y$, where $Y$ has an implicit dependence on ${\bf z}$, and then perform a maximisation over ${\bf z}$. By evaluating the primal and dual objective functions, 
we obtain simple analytic two-sided bounds for $C_\mathsf{H}$.
Specifically, an analytic lower bound $\mathscr L$ to the \textsc{hcrb} is determined through finding $\smash{{\bf z} \in \mathbb R^6}$ that solves
\begin{align}
   2 {\rm Re}(Q_j) {\bf z} + {\bf b} = 0,\qquad j=1,2
\label{eqn:linear_eqn_set}
\end{align}
where $\smash{Q_j}$ has the matrix elements 
\begin{align}
 [Q_1]_{ik} = \tr{ A_i ^\dagger \rho^{-1}  A_k} \quad\text{and}\quad [Q_2]_{ik} = \tr{ A_i \rho^{-1}  A_k ^\dagger}.\label{def:Qjs}
\end{align}
Details for this are delegated to appendix~\ref{app:2_param}. The matrices ${\rm Re}(Q_j)$ are full rank when the derivatives $\partial_1\rho$ and $\partial_2\rho$ are linearly independent. Armed with these dual variables ${\bf z}$, we collect the result of this optimisation in the following theorem:%
\begin{theorem}%
\label{thm:2-param}%
Let $\partial_1\rho$ and $\partial_2 \rho$ be linearly independent. With $Q_j$ defined in Eq.~\eqref{def:Qjs} and the matrices $A_1,\dots,A_6$ given in Eq.~\eqref{def:Ajs}, the \textsc{hcrb} $C_\mathsf{H}$ for two parameters satisfies the inequality  
\begin{align*}
    \max_{j=1,2} \{l_j\} =\mathscr L
    \le C_\mathsf{H} \le  \mathscr U =
    \min_{j=1,2}\{ \max \{l_j,m_j \} \},
\end{align*}
where
\begin{align}
    l_j &=\frac{1}{4}{\bf b}^\top {\rm Re}(Q_j)^{-1} {\bf b},\\
    m_j &= \sum_{a,b=1}^6 \tr{ \rho^{-2} A_a  \rho A_b^\dagger } z_{a,j}z_{b,j}.
\end{align}
and
\begin{align}
    z_{a,j} = 
    -\frac{1}{2}\left([{\rm Re}(Q_j)^{-1}]_{a2}  +
    [{\rm Re}(Q_j)^{-1}]_{a6}\right).%
\end{align}%
\end{theorem}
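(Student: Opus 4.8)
The plan is to read the program~\eqref{main-opt1} as a convex quadratically-constrained quadratic program and exploit strong duality, which holds because Slater's condition is met. Weak duality will deliver the lower bound (every dual-feasible point lower bounds $C_\mathsf{H}$), while an explicitly constructed primal-feasible point will deliver the upper bound. I would organise the whole argument around the two vertex choices $(u,v)\in\{(1,0),(0,1)\}$ of the multipliers for the quadratic constraints: at these points the coefficient $1-u-v$ multiplying $t$ in the Lagrangian~\eqref{eqn:Lagrangian_bad_boy} vanishes, the dual stays bounded below, and only one of the two forms $\tr{Y\rho Y^\dagger}$, $\tr{Y^\dagger\rho Y}$ survives, so the minimisation over $Y$ becomes a single unconstrained convex quadratic.

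For the lower bound, for each $j$ I would minimise the Lagrangian over the unconstrained complex matrix $Y$. The surviving objective is quadratic in $Y$ with linear part $\tr{AY}+\tr{A^\dagger Y^\dagger}=2\,{\rm Re}\,\tr{AY}$, so setting the Wirtinger derivative to zero gives the closed-form stationary point $Y_1^\ast=-A^\dagger\rho^{-1}$ for the vertex $(1,0)$ and $Y_2^\ast=-\rho^{-1}A^\dagger$ for $(0,1)$. Back-substitution collapses the dual function to the concave quadratic $-{\bf b}^\top{\bf z}-{\bf z}^\top{\rm Re}(Q_j){\bf z}$ in the real vector ${\bf z}$, with $Q_j$ as in~\eqref{def:Qjs}; the real part appears because ${\bf z}$ is real and $Q_j$ is Hermitian. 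Its unconstrained maximiser solves the linear system~\eqref{eqn:linear_eqn_set}, giving ${\bf z}_j=-\tfrac12{\rm Re}(Q_j)^{-1}{\bf b}$ and optimal dual value $l_j=\tfrac14{\bf b}^\top{\rm Re}(Q_j)^{-1}{\bf b}$. Invertibility of ${\rm Re}(Q_j)$ follows since $\tr{A^\dagger\rho^{-1}A}>0$ unless $A=0$, and $A=0$ would force a vanishing combination of $\rho,\partial_1\rho,\partial_2\rho$; taking the trace kills the $\rho$ component and the assumed linear independence of $\partial_1\rho,\partial_2\rho$ kills the rest. Weak duality then gives $l_j\le C_\mathsf{H}$ for both vertices, hence $\mathscr L=\max_j l_j\le C_\mathsf{H}$.

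For the upper bound the key observation is that the same maximiser ${\bf z}_j$ renders $Y_j^\ast$ primal feasible. Because ${\bf z}$ enters the Lagrangian only through the linear equality-constraint terms $\sum_i z_i h_i$ built from the $A_a$ of~\eqref{def:Ajs}, the envelope theorem gives $\partial g/\partial z_i=h_i(Y_j^\ast)$; vanishing of this gradient at the unconstrained maximum therefore forces every equality constraint $h_i(Y_j^\ast)=0$ to hold. With a genuinely feasible $Y_j^\ast$ in hand, the smallest admissible $t$ in~\eqref{main-opt1} is $\max\{\tr{Y_j^\ast\rho Y_j^{\ast\dagger}},\tr{Y_j^{\ast\dagger}\rho Y_j^\ast}\}$; one trace reproduces $l_j$ (the form that was minimised) and the other evaluates to the complementary quadratic form $m_j$ assembled from $\tr{\rho^{-2}A_a\rho A_b^\dagger}$. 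Each vertex thus yields $C_\mathsf{H}\le\max\{l_j,m_j\}$, and taking the better of the two gives $C_\mathsf{H}\le\mathscr U=\min_j\max\{l_j,m_j\}$.

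I expect the main obstacle to be the bookkeeping that identifies the two primal quadratic forms with $l_j$ and $m_j$. The complex equality constraints must be split into six real constraints matching ${\bf b}=(0,1,0,0,0,1)^\top$, and the $A_1,\dots,A_6$ split into Hermitian ($a\le3$) and anti-Hermitian ($a\ge4$) pieces; carrying the resulting signs through, one must reconcile the orderings $\tr{\rho^{-2}A^\dagger\rho A}$ and $\tr{\rho^{-2}A\rho A^\dagger}$ that arise at the two vertices. These are governed by the conjugation $Q_2=DQ_1D$ with $D=\mathrm{diag}(1,1,1,-1,-1,-1)$, which is exactly the symmetry that exchanges the two quadratic constraints under $Y\mapsto Y^\dagger$ (flipping the sign of the imaginary-part constraints, i.e. $i\mapsto-i$ in $\tr{\partial_2\rho Y}=i$). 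Establishing the invertibility of ${\rm Re}(Q_j)$ and verifying that the stationary $Y_j^\ast$ satisfies \emph{all} six equality constraints rather than merely some of them are the remaining technical points, but both reduce to the positive definiteness of $\rho$ together with the linear-independence hypothesis.
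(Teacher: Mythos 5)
Your proposal is correct and follows essentially the same route as the paper: the reformulation in terms of $Y=X_1+iX_2$, Lagrangian duality at the two vertices $(u,v)\in\{(1,0),(0,1)\}$, closed-form minimisation over $Y$, the quadratic dual $-{\bf b}^\top{\bf z}-{\bf z}^\top{\rm Re}(Q_j){\bf z}$ for the lower bounds $l_j$, and evaluation of the primal objective at the stationary $Y_j^\ast$ for the upper bounds $\max\{l_j,m_j\}$. Your two refinements --- the envelope-theorem argument making the primal feasibility of $Y_j^\ast$ explicit (which the paper leaves implicit), and the direct positive-definiteness argument for the invertibility of ${\rm Re}(Q_j)$ in place of the paper's Gram-matrix row-reduction --- are sound but do not change the overall strategy.
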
%
For a detailed proof of this theorem, consult appendices~\ref{subsec:2param-lagrangian}-\ref{subsec:full_rank_of_Q}.
Theorem~\ref{thm:2-param} gives a simple procedure for finding analytic upper and lower bounds to two-parameter \textsc{hcrb}. Notice that the complexity of determining these bounds are commensurate with linear equation solvers that are used in determining the Lagrange dual variables. This makes these bounds readily accessible for general two-parameter applications. Fig.~\ref{algo:Algo_hcrb2_2} shows the pseudocode for this procedure. 

%
\begin{figure}[t!]
\centering
\includegraphics[width =\columnwidth]{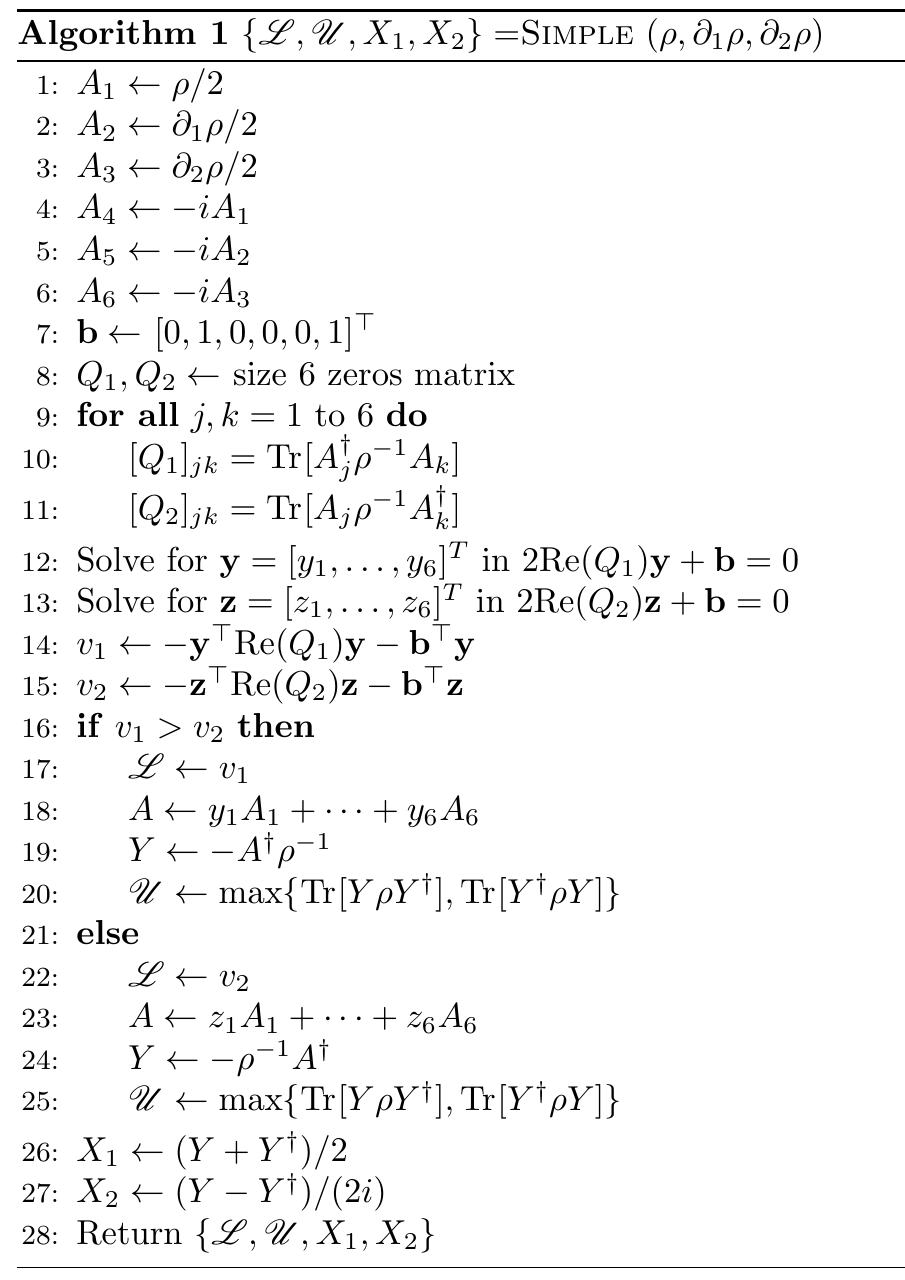}
\caption{Pseudocode to determine simple bounds to the two-parameter \textsc{hcrb} and its associated optimal measurement observables $X_1$ and $X_2$. Note that this algorithm depends only on the state $\rho$ and its two derivatives $\partial_1\rho$ and $\partial_2\rho$.}
\label{algo:Algo_hcrb2_2}
\end{figure}%

Next, we establish how to construct the observables that saturate these bounds. Specifically, there are two choices for $Y$ that minimise the Lagrangian in Eq.~\eqref{eqn:Lagrangian_bad_boy}, corresponding to the two choices for $(u, v)$:
\begin{align}
 (u,v) = (1,0) &: \quad Y = - A^\dagger \rho^{-1},\\
 (u,v) = (0,1) &: \quad Y = -\rho^{-1} A^\dagger,
\end{align}
that correspond to the choice $Q_1$ and $Q_2$, respectively. Then, using Eq.~\eqref{eqn:A_defos} and the optimised values for ${\bf z}$, we construct the analytic form for the observables $X_1$ and $X_2$. Since the matrices $Q_j, j \in \{1,2\}$ are only 6 dimensional matrices, determining $\smash{{\rm Re}(Q_j)}$ is easy and hence it is straightforward to find analytic bounds to the two-parameter \textsc{hcrb}. The procedure is shown algorithmically in Fig.~\ref{algo:Algo_hcrb2_2}.

Finally, we note that if our lower bound to $C_{\mathsf{H}}$ is strictly larger than $C_{\mathsf{S}}$, then 
we know that the skew-symmetric matrix ${\rm Im}(\tr{L_jL_k\rho(\bm \theta)})$ cannot be equal to zero, and the weak commutativity criterion does not hold. 


\subsection{Lower bound in the multi-parameter setting}
\label{sec:multi_parameter_setting}

\noindent
For more than two parameters, we can also use the method of Lagrange multipliers to bound the \textsc{hcrb}. However, this method is considerably more involved than the two-parameter case. In the two-parameter case, we could obtain a simple quadratic expression for ${\rm Re}\tr{Z} + \|{\rm Im Z}\|_1$ that appears in the objective function of Eq.~\eqref{opt0}. However for the corresponding generalisation to more parameters, ${\rm Re}\tr{Z} + \|{\rm Im Z}\|_1$ is no longer a quadratic form in the variables $X_j$. For example, for three parameters $Z$ takes the form
\begin{align}
    Z(\bm{X})  = 
    \begin{pmatrix}
    \tr{ \rho X_1^2 } & \tr{ \rho X_1 X_2 } &  \tr{ \rho X_1 X_3 }\\
    \tr{ \rho X_2 X_1 } & \tr{ \rho X_2^2 }  & \tr{ \rho X_2 X_3 }\\
    \tr{ \rho X_3 X_1 } & \tr{ \rho X_3 X_2 }  & \tr{ \rho X_3^2 }
    \end{pmatrix}.
\end{align}
The trace norm of ${\rm Im Z}$ is related to the eigenvalues of ${\rm Im Z}$, and the eigenvalues of a $3 \times 3$ matrix involve a cubic equation. This renders evaluating the trace norm incompatible with our methodology. To address this, we obtain a lower bound to $\|{\rm Im Z}\|_1$ that allows ${\rm Re}\tr{Z} + \|{\rm Im Z}\|_1$ to be written as a quadratic form. As shown in appendix~\ref{app:lower_bounds}, this yields an optimisation problem whose optimal value is a lower bound to the \textsc{hcrb}, and which is given by 
 \begin{align}
     \min\{t : \Tr{\rho X_j} = 0, \Tr{\partial_j \rho X_k} = \delta_{jk},  V_{\bm{\alpha}}  \le t \} ,
     \label{eq:main-text-opt1-d}
 \end{align}
where the minimisation is performed over $t$ and the Hermitian matrices $X_1,\dots, X_d$, with $j,k = \{1,\dots, d\}$, and the inequality constraint $V_{\bm{\alpha}}$ is a function of a binary string $\bm{\alpha}$ such that
\begin{align}
V_{\bm{\alpha}}
    &=
\frac 1 2 \sum_{j=1}^d
\tr{( X_j + (-1)^{\alpha_j} i X_{j+1}) 
\rho ( X_j + (-1)^{\alpha_j} i X_{j+1})^\dagger}\, ,
\label{main-text-eq:Vt-defi}
\end{align}
with $X_{d+1}=X_1$. The inequality constraints $V_{\bm{\alpha}}$ arise from the structure of our lower bound on the trace norm of ${\rm Im Z}$ (see appendix~\ref{app:lower_bounds}). By substituting $Y_j = \sum_{k=1}^d S_{jk} X_j$ where 
\begin{align}
 S= \begin{cases}
    \sum_{j \in \mathbb Z_d} 
    \left( |j\>\<j|  + i |j\>\<j \oplus 1|   \right)  & d \neq 0 \pmod{4} \\
     \sum_{j \in \mathbb Z_d} 
    \left( |j\>\<j|  + (-1)^{\delta_{j,d}}i|j\>\<j \oplus 1|   \right) & \mbox{otherwise }  \label{def:S-matrix_main},
    \end{cases}
\end{align}
we can write the matrices $X_j$ in terms of the matrices $Y_j$, as before. We next interpret the $Y_j$ as arbitrary complex matrices of size $n$, and impose Hermicity conditions for the corresponding $X_j$ matrices.

The Lagrangian of such an optimisation problem is a function of the complex matrices $\{Y_1,\dots, Y_d\}$, and also a function of its Lagrange multipliers. Its Lagrange multipliers are given by the non-negative multipliers ${\bf v}\in \mathbb R^{2^d}$ for the inequality constraints, ${\bf z} \in \mathbb R^{d(d+1)}$ for the equality constraints, and Hermitian multipliers $\xi_1,\dots, \xi_d$ for the Hermitian constraints. Most importantly, the inequality constraints can be satisfied strictly, so Slater's constraint qualification holds, and we can use the \textsc{kkt} to determine the optimality conditions for Eq.~\eqref{main-text-eq:Vt-defi}. We minimise the Lagrangian constructed from the optimisation problem in Eq.~\eqref{eq:main-text-opt1-d}. Since the Lagrangian is a convex quadratic form in the variables $Y_1,\dots,Y_d$, it can be minimised exactly. When this is done, we obtain the Lagrange dual function, which only depends on the Lagrange multipliers ${\bf v}, {\bf z}$ and $\xi_1,\dots, \xi_d$. 
The Lagrange dual function always gives a lower bound for the primal optimisation problem.

While the Lagrange dual is quadratic in ${\bf z}$ and $\xi_1,\dots, \xi_d$, it is not quadratic in ${\bf v}$. By minimising the Lagrangian over $t$ and using the \textsc{kkt} conditions, we conclude as before that the sum of the components in ${\bf v}$ is 1. We obtain a lower bound for the Lagrange dual by maximising over a discrete set of feasible Lagrange multipliers ${\bf v}$, which corresponds to the tightness of the constraints $V_{\bm{\alpha}} \le t$. Thus, we created a quadratic optimisation problem for three or more parameters that leads to a lower bound on $C_\mathsf{H}$. However, there is no guarantee that this lower bound is tight.

\begin{figure}[t!]
\centering
\includegraphics[width =\columnwidth]{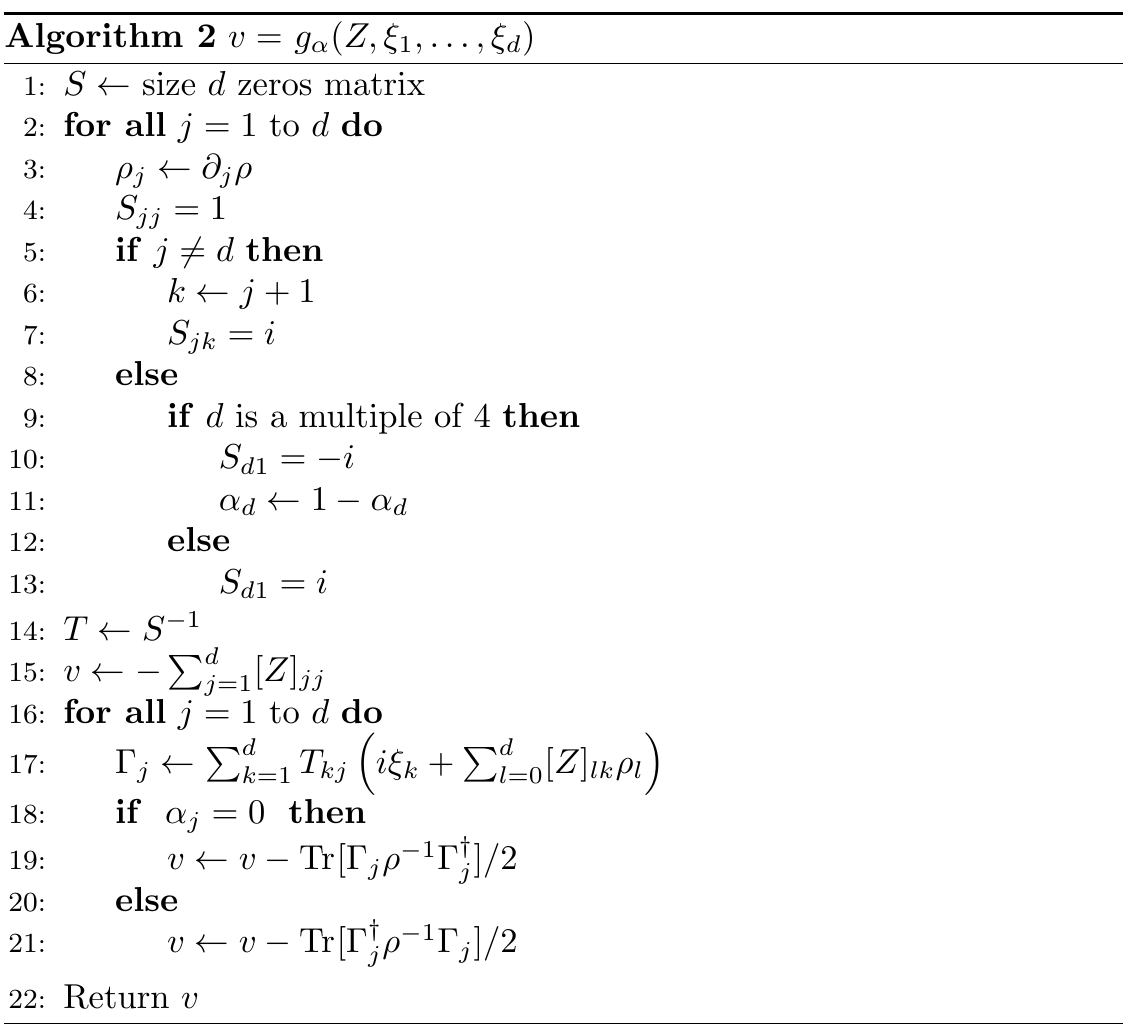}
\caption{Pseudocode to generate the Lagrange dual functions defined in Eq.~\eqref{eqn:gv_defo}.}
\label{algo:hcrb_gt}
\end{figure}

Next, we study the Lagrange dual function. By carefully choosing ${\bf v}$, the Lagrangian is quadratic in $\{Y_1, \ldots, Y_d\}$ and can be minimised individually for each $Y_j$. The coefficients for $Y_j$ in the Lagrangian are given by $\Gamma_j$, where 
\begin{align}
\Gamma_j &=
     \sum_{k=1}^d 
    T_{k,j}
    \left( 
        \sum_{l=0}^d Z_{l,k} \rho_l  
        +
        i \xi_k
    \right) ,
\end{align} 
where $\rho_0 = \rho$, $\rho_j = \partial_j\rho$ for $j=\{1,\ldots,d\}$, and $T_{k,j}$ are matrix elements that relate the  $Y_j$ to the $X_k$. Specifically, $T$ is the matrix inverse of $S$.
Then the optimal value for the Lagrange multipliers can be obtained by maximising the Lagrange dual functions 
\begin{align}
g_{\bm{\alpha}}
=&
- \sum_{j=1}^d z_{j,j} 
- \sum_{j=1}^d
\frac{
\delta_{0,\bar{\alpha}_j} \tr{\Gamma_j \rho^{-1} \Gamma_j^\dagger}
+
\delta_{1,\bar{\alpha}_j} \tr{\Gamma_j^\dagger \rho^{-1} \Gamma_j}
}{2}.
\label{eqn:gv_defo}
\end{align}
with respect to the scalar variables $z_{j,k}$ and the Hermitian variables $\xi_j$, where $\bar{\bm{\alpha}}=\bm{\alpha}$ when $d$ is not a multiple of 4, and when $d$ is a multiple of 4 then $\bar{\bm{\alpha}}$ differs from $\bm{\alpha}$ by simply flipping the last bit. Our lower bound to $C_\mathsf{H}$ in the multi-parameter setting is then given by the following theorem.
\begin{figure}[t!]
\centering
\includegraphics[width =\columnwidth]{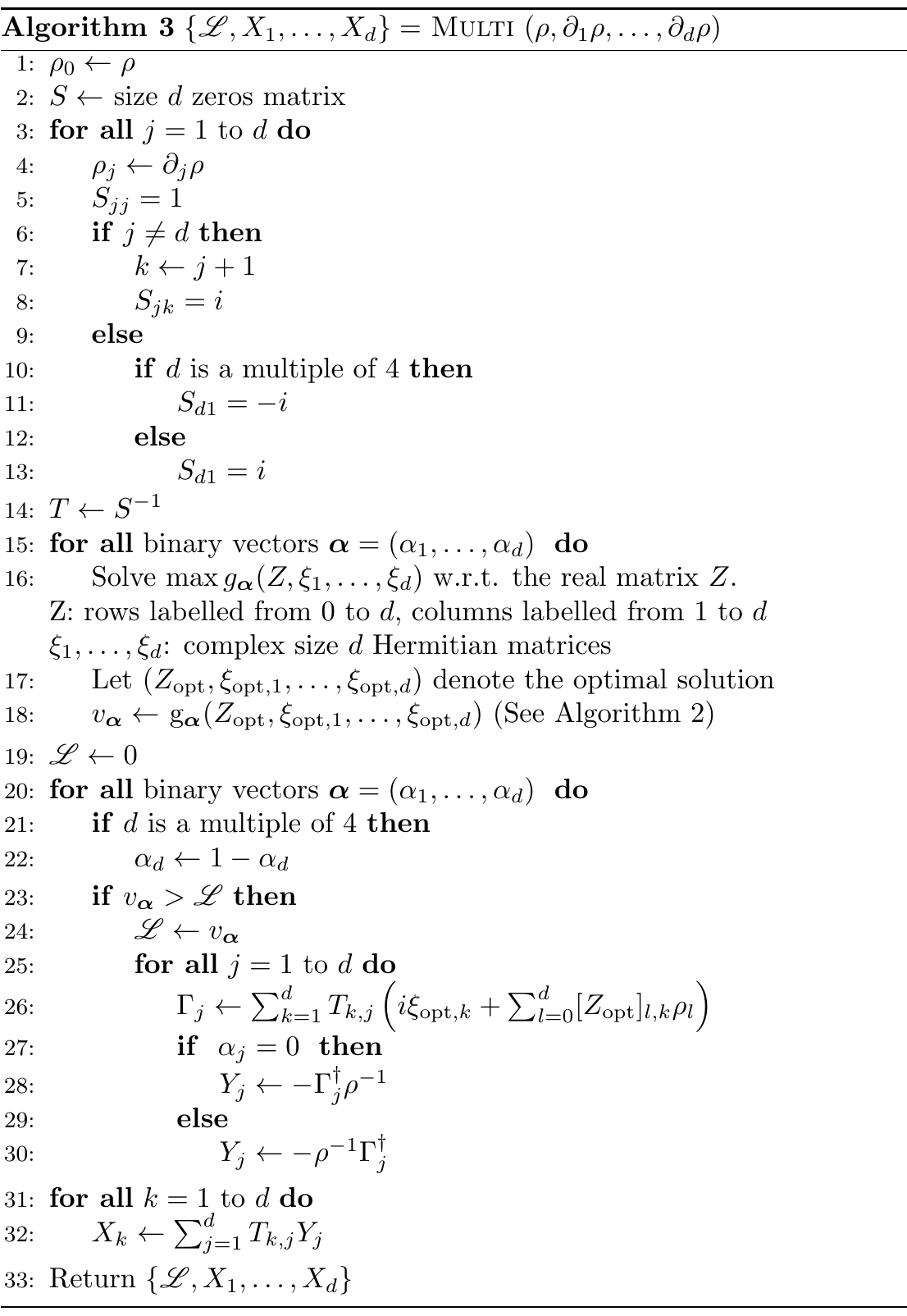}
\caption{Pseudocode to generate a lower bound to the \textsc{hcrb} for multiple parameters.}
\label{algo:hcrb_multi}
\end{figure} 
\begin{theorem}\label{thm:multi-param-lower-bound}
Let $d \ge 3$, ${\bf z} \in \mathbb R^{d(d+1)}$ and $\xi_1,\dots, \xi_d$ are Hermitian matrices. Then 
\begin{align*}
 C_\mathsf{H} \ge   \max_{{\bm{\alpha}} \in\{0,1\}^d }\max_{{\bf z}, \xi_1,\dots, \xi_d }  g_{\bm{\alpha}} ,
\end{align*}   
where $g_{\bm{\alpha}}$ is given by Eq.~\eqref{eqn:gv_defo}.
\end{theorem}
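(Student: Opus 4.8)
The plan is to realise $\max_{\bm{\alpha}}\max_{\mathbf z,\xi}g_{\bm{\alpha}}$ as a feasible dual value of a relaxation of the Holevo program and then chain two inequalities: weak duality, which makes any feasible dual value a lower bound on the relaxed primal optimum, and the relaxation itself, which makes the relaxed primal optimum a lower bound on $C_{\mathsf H}$. First I would show that replacing the trace norm $\|\mathrm{Im}\,Z\|_1$ in the objective of Eq.~\eqref{opt0} by the smaller quantity $\sum_j |\mathrm{Im}\,Z_{j,j+1}|$, with cyclic indexing $X_{d+1}=X_1$, can only decrease the objective, so the minimum of the relaxed program lower bounds $C_{\mathsf H}$. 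Expanding Eq.~\eqref{main-text-eq:Vt-defi} and using that the $X_j$ are Hermitian, the cross terms collapse to $-(-1)^{\alpha_j}\mathrm{Im}\,Z_{j,j+1}$ and the diagonal terms sum to $\mathrm{Re}\,\mathrm{Tr}[Z]$, giving $V_{\bm{\alpha}}=\mathrm{Re}\,\mathrm{Tr}[Z]-\sum_j(-1)^{\alpha_j}\mathrm{Im}\,Z_{j,j+1}$ and hence $\max_{\bm{\alpha}}V_{\bm{\alpha}}=\mathrm{Re}\,\mathrm{Tr}[Z]+\sum_j|\mathrm{Im}\,Z_{j,j+1}|$. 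This identifies the epigraph form in Eq.~\eqref{eq:main-text-opt1-d} with exactly this relaxed program.

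The crux of the relaxation step, and the step I expect to be the main obstacle, is the matrix inequality $\sum_j|\mathrm{Im}\,Z_{j,j+1}|\le\|\mathrm{Im}\,Z\|_1$. I would prove it by trace-norm duality: writing $B=\mathrm{Im}\,Z$, which is real and skew-symmetric, and taking $U$ to be the signed cyclic permutation with $U_{j+1,j}=-\mathrm{sgn}(B_{j,j+1})$ and zeros elsewhere, one has $\|U\|_\infty\le1$ and $\mathrm{Tr}[U^\top B]=\sum_j|B_{j,j+1}|$, so $\sum_j|B_{j,j+1}|\le\|B\|_1$. This also clarifies why only nearest-neighbour pairs along a single cycle through the indices appear in $V_{\bm{\alpha}}$: they are precisely the entries a single signed permutation can expose simultaneously.

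Next I would construct the Lagrangian of Eq.~\eqref{eq:main-text-opt1-d}. Using the change of variables $Y_j=\sum_k S_{jk}X_k$ with $S$ from Eq.~\eqref{def:S-matrix_main}, chosen so that $S$ is invertible for every $d$ (the sign flip in its last column is exactly what restores invertibility when $d\equiv0\pmod 4$, and it is this flip that sends $\bm{\alpha}$ to $\bar{\bm{\alpha}}$ in the final formula, with $T=S^{-1}$), I would treat the $Y_j$ as free complex matrices and reimpose Hermiticity of the $X_j$ through Hermitian multipliers $\xi_j$, with $\mathbf v\ge0$ for the $2^d$ constraints $V_{\bm{\alpha}}\le t$ and $\mathbf z\in\mathbb R^{d(d+1)}$ for the unbiasedness constraints. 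Since $t$ can be made arbitrarily large, the inequality constraints hold strictly, Slater's qualification applies, and weak duality is available. Minimising the Lagrangian over the free scalar $t$ forces $\sum_{\bm{\alpha}}v_{\bm{\alpha}}=1$, and because the Lagrangian is a convex quadratic form in the $Y_j$, I would minimise it exactly by completing the square; the stationary solution produces the coefficient $\Gamma_j$ and requires $\rho$ full rank so that $\rho^{-1}$ exists.

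Finally I would handle the non-quadratic dependence of the dual on $\mathbf v$. Instead of optimising over the whole simplex $\{\mathbf v\ge0:\sum_{\bm{\alpha}}v_{\bm{\alpha}}=1\}$, I would evaluate the dual only at its vertices, that is at $v_{\bm{\alpha}}=1$ with all other entries zero. Each vertex is dual-feasible and collapses the inequality part of the Lagrangian to the single label $\bm{\alpha}$, so that after the minimisation over the $Y_j$ and $t$ the dual reduces precisely to $g_{\bm{\alpha}}$ of Eq.~\eqref{eqn:gv_defo}. Weak duality together with the relaxation then gives $g_{\bm{\alpha}}\le C_{\mathsf H}$ for every $\bm{\alpha}$ and every admissible $\mathbf z,\xi_1,\dots,\xi_d$, and maximising over all of these yields the claimed bound.
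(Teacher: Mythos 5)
Your proposal is correct and follows essentially the same route as the paper: lower-bound $\|\mathrm{Im}\,Z\|_1$ by testing against the signed cyclic permutations $U_{\bm{\alpha}}$ (your contraction-duality argument is the same variational characterisation the paper invokes over unitaries), identify the resulting quadratic relaxation with the epigraph program of Eq.~\eqref{eq:main-text-opt1-d}, pass to the $Y_j=\sum_k S_{jk}X_k$ variables with Hermiticity reimposed via the $\xi_k$, and extract $g_{\bm{\alpha}}$ by weak duality after forcing $\sum_{\bm{\alpha}}v_{\bm{\alpha}}=1$ and evaluating at the vertices of the simplex. The explicit identity $\max_{\bm{\alpha}}V_{\bm{\alpha}}=\mathrm{Re}\,\mathrm{Tr}[Z]+\sum_j\lvert\mathrm{Im}\,Z_{j,j+1}\rvert$ is a slightly cleaner packaging of the paper's lemma, but the substance is identical.
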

\noindent
This optimisation problem can be solved exactly using a single step of Newton's method. It requires the input state $\rho$ and its derivatives $\partial_j\rho$. The algorithm to implement this lower bound is illustrated in Fig.~\ref{algo:hcrb_multi}.



\subsection{Tight two-parameter bounds}
\label{eqn:new_thrm}

\noindent
Notice that for Theorem~\ref{thm:2-param}, we constrained the values of the Lagrange multiplier $u$ to two values. This does not provide the most general case and as a result, the analysis can generate observables that are not always optimal. That is, the corresponding upper and lower bounds are not always tight. By lifting this restriction, we expand the analysis to explore the full generality of our formalism to generate tight bounds to the estimation of incompatible observables. As we observe in this section, this is necessary to develop an intuition into multiparameter quantum estimation that is captured by the construction of the \textsc{hcrb}. To achieve this, we revisit the two-parameter scenario. Specifically, for fixed $u$, we minimise the Lagrangian and find the optimal observables that attain these stationary points.  In doing so, for every feasible value of $u$, we obtain an upper and lower bound on the \textsc{hcrb}. Since the lower bound is a concave and smooth function, then optimisation theory guarantees a solution to both the \textsc{hcrb} and the observables that attain it.

\begin{figure}[t!] \includegraphics[width=\linewidth]{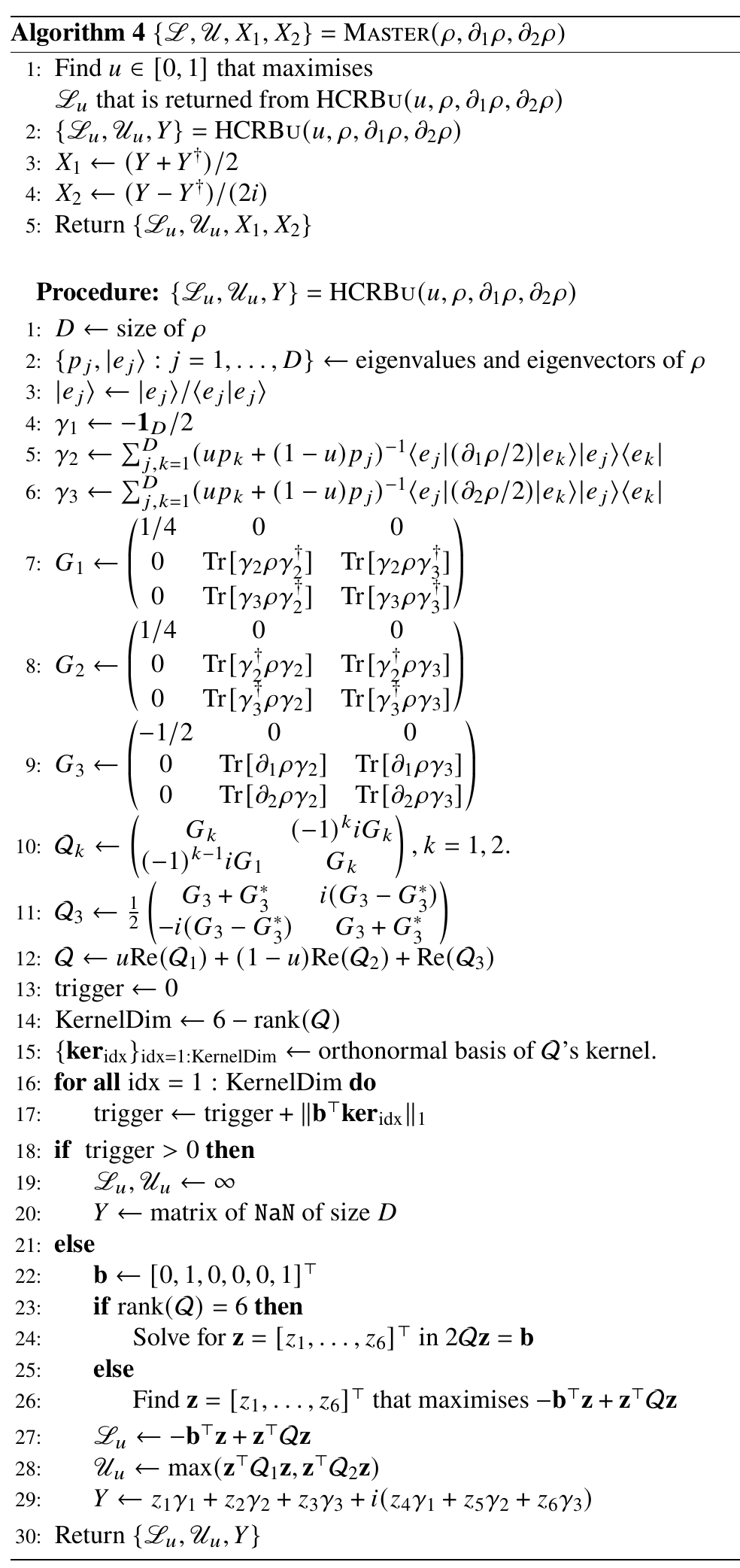}
\caption{Master algorithm to generate the analytic form of the two-parameter \textsc{hcrb}.} 
\label{fig:masteralgo}
\end{figure}%

Recall the Lagrangian in Eq.~\eqref{eqn:Lagrangian_bad_boy}, which for $u + v = 1$ becomes
\begin{align}
\begin{split}
    \mathsf{L}(Y,u,{\bf z}) =& -{\bf b}^\top {\bf z} + u \tr{Y \rho Y^\dagger} + (1-u) \tr{Y^\dagger \rho Y} \\
    &+ \tr{A Y} + \tr{A^\dagger Y^\dagger}.
\end{split}
\label{eqn:orig_Lagrangian_main}    
\end{align}
As we show in appendix~\ref{app:app_lag_min_arb_u}, this Lagrangian is minimised when $Y$ is chosen such that
\begin{align}
    u Y \rho + (1-u) \rho Y = - A^\dagger.
\label{eqn:general_lds}    
\end{align}    
Notice that $Y$, and hence the optimal observables, is the solution to a Sylvester equation. 
When ${\rm Im} Z = 0$, we know that the observables $X_1$ and $X_2$ commute and the weak commutativity criterion is preserved. Then, Eq.~\eqref{eqn:general_lds} reduces to solving a Lyapunov equation, where $Y$ generates the well explored \textsc{sld}. This corresponds to 
\begin{align}
    |\tr{\rho X_1 X_2} - \tr{\rho X_2 X_1}| = 0\label{eq:zero-gap}\, ,
\end{align} 
for the optimal $X_1$ and $X_2$, which recovers the weak commutativity condition. In this way, the solution to $Y$ in Eq.~\eqref{eqn:general_lds} provides the most general definition to quantum logarithmic derivatives for multiple incompatible parameters $\bm{\theta}$. From our definition of $A$, it defines exactly how the optimal observables depend on dynamics in both parameters. 

Similar to analytic solutions to the \textsc{sld}, given the spectral decomposition of the state $\rho = \sum_j p_j \ket{\smash{e_j}}\bra{\smash{e_j}}$, we can analytically solve Eq.~\eqref{eqn:general_lds} to obtain the $Y$ that minimises the Lagrangian:
\begin{align}
 Y = -\sum_{j,k} (u p_k + (1-u) p_j)^{-1} \bra{\smash{e_j}} A^{\dagger}\ket{e_k} \ket{\smash{e_j}}\bra{e_k}.   
\label{eqn:Ysol_main}  
\end{align}
By taking the trace, it is clear to see that $Y$ has a zero expectation value. This recovers the unbiasedness condition on the observables as required. Using $Y = X_1 + iX_2$ and the definition for $Y$ in Eq.~\eqref{eqn:Ysol_main}, we can write an analytic solution for the observables that saturate the \textsc{hcrb} in terms of the optimal Lagrange multipliers ${\bf z}$. Specifically, defining the statistical admixture $\smash{\varrho^{(s)} = \sum_{l=0}^2 z_{l+s}\rho_l}$, then in the eigenbasis of $\rho$, we have
\begin{align}
   [X_1]_{jk} &= \frac{i(p_j - p_k) (1-2u) [\varrho^{(4)}]_{jk} - (p_j + p_k) [\varrho^{(1)}]_{jk}}{4(u p_k + (1-u)p_j)(u p_j + (1-u) p_k)},\label{eqn:sols_for_x1}\\
    [X_2]_{jk} &= -\frac{i(p_j - p_k) (1-2u) [\varrho^{(1)}]_{jk} + (p_j + p_k) [\varrho^{(4)}]_{jk}}{4(u p_k + (1-u)p_j)(u p_j + (1-u) p_k)},
\label{eqn:sols_for_x}    
\end{align}
with $\smash{[\varrho^{(s)}]_{jk} = \braket{e_j\vert\varrho^{(s)}\vert e_k}}$. The Hermiticity of the state and its derivatives guarantees the Hermiticity of these observables such that $\smash{[X_l]_{jk} = [X_l]_{kj}^*}$. Equations~\eqref{eqn:sols_for_x1} and~\eqref{eqn:sols_for_x} shows exactly how each observable depends on the dynamics of each parameter.


With access to the spectral decomposition of $\rho$, we can also find analytic expressions to the \textsc{hcrb}. The master algorithm in Fig.~\ref{fig:masteralgo} concisely clarifies this procedure. With this procedure, theorem~\ref{theorem:thight_bounds} concretely demonstrates how to construct tight bounds on the \textsc{hcrb}, and is central to our result.


\begin{theorem}[]
\label{theorem:thight_bounds}
Let ${\mathcal Q_1}$, ${\mathcal Q_2}$ and ${\mathcal Q_3}$ be matrices defined in the Master algorithm in Fig.~\ref{fig:masteralgo}, and let ${\bf b} = (0,1,0,0,0,1)^\top$ be a column vector.
Let $0<u<1$ and ${\mathcal Q} = u{\rm Re}{\mathcal Q_1} + (1-u) {\rm Re}{\mathcal Q_2} + {\rm Re}{\mathcal Q_3}$ be a negative definite matrix. Then when ${\mathcal Q}$ is full rank for all $0<u<1$, the \textsc{hcrb} is bounded through $\smash{ \mathscr{L}_u \le C_{\mathsf{H}} \le  \mathscr{U}_u}$
where
\begin{align}
    \mathscr{L}_u &=  -\frac{1}{4} {\bf b}^\top  {\mathcal Q} ^{-1} {\bf b}\\
     \mathscr{U}_u &= \frac{1}{4} \max\left\{ 
      {\bf b}^{\top}{\mathcal Q} ^{-1} {\mathcal Q_1} {\mathcal Q} ^{-1} {\bf b} 
    , {\bf b}^{\top} {\mathcal Q} ^{-1} {\mathcal Q_2} {\mathcal Q} ^{-1} {\bf b} \right\}
\end{align}
with equality on both sides attained at the stationary point of the lower bound with respect to $u$ when  
$    \smash{\frac{d}{du}({\bf b}^\top  {\mathcal Q} ^{-1} {\bf b}) 
= 
  ({\bf b}^\top
  {\mathcal Q}^{-1} \frac{d{\mathcal Q}}{du} {\mathcal Q}^{-1} {\bf b}) 
= 0}$.
\label{thrm:hcrb_bounds_arb_u}
\end{theorem}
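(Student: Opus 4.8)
The plan is to read the three quantities $\mathscr{L}_u$, $C_\mathsf{H}$, and $\mathscr{U}_u$ as, respectively, a Lagrange-dual value, the primal optimum, and the primal objective evaluated at an explicit feasible point, and then to close the gap between the outer two at the stationary point in $u$. The backbone is the convex program \eqref{main-opt1}, for which Slater's condition has already been verified in the text; hence strong duality holds and the \textsc{kkt} conditions are necessary and sufficient. For fixed $u\in(0,1)$ (so $v=1-u$) I would first perform the inner minimisation over $Y$ in the Lagrangian \eqref{eqn:orig_Lagrangian_main}. Its stationarity condition is the Sylvester equation \eqref{eqn:general_lds} with spectral solution \eqref{eqn:Ysol_main}; substituting this back and using that $Y$ is linear in $A^\dagger$, hence in ${\bf z}$, collapses the dual to the concave quadratic $g(u,{\bf z}) = -{\bf b}^\top{\bf z} + {\bf z}^\top\mathcal{Q}\,{\bf z}$, whose Hessian is the matrix $\mathcal{Q} = u\,\mathrm{Re}\mathcal{Q}_1 + (1-u)\,\mathrm{Re}\mathcal{Q}_2 + \mathrm{Re}\mathcal{Q}_3$ assembled by the Master algorithm (the $\mathrm{Re}$ is harmless, since for real ${\bf z}$ only the symmetric part of each Hermitian $\mathcal{Q}_j$ survives in ${\bf z}^\top\mathcal{Q}_j{\bf z}$). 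With $\mathcal{Q}$ negative definite by hypothesis, $g$ is strictly concave with maximiser ${\bf z}^\star = \tfrac12\mathcal{Q}^{-1}{\bf b}$, giving $\max_{\bf z} g = -\tfrac14{\bf b}^\top\mathcal{Q}^{-1}{\bf b} = \mathscr{L}_u$. Weak duality then yields $\mathscr{L}_u \le C_\mathsf{H}$ for every $u$.

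Next I would establish the upper bound by exhibiting a feasible primal point. Feeding ${\bf z}^\star$ into \eqref{eqn:Ysol_main} fixes $Y$, and hence the observables $X_1,X_2$ of \eqref{eqn:sols_for_x1}--\eqref{eqn:sols_for_x}; the dual stationarity $2\mathcal{Q}\,{\bf z}^\star = {\bf b}$ is precisely the statement that these observables meet the unbiasedness constraints of \eqref{main-opt1}, so the point is primal feasible. The primal objective there is $t = \max\{\tr{Y\rho Y^\dagger},\tr{Y^\dagger\rho Y}\}$, and I would identify the two traces with the quadratic forms $T_1 := \tfrac14{\bf b}^\top\mathcal{Q}^{-1}\mathcal{Q}_1\mathcal{Q}^{-1}{\bf b}$ and $T_2 := \tfrac14{\bf b}^\top\mathcal{Q}^{-1}\mathcal{Q}_2\mathcal{Q}^{-1}{\bf b}$. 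Feasibility gives $C_\mathsf{H}\le\max\{T_1,T_2\}=\mathscr{U}_u$ for every $u$, completing the sandwich $\mathscr{L}_u \le C_\mathsf{H}\le\mathscr{U}_u$.

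For tightness I would use the spectral solution to compute $\tr{Y\rho Y^\dagger}=\sum_{jk}p_k\,|Y_{jk}|^2$ and $\tr{Y^\dagger\rho Y}=\sum_{jk}p_j\,|Y_{jk}|^2$ with $|Y_{jk}|^2 = |\langle e_j|A^\dagger|e_k\rangle|^2 / (u p_k+(1-u)p_j)^2$ from \eqref{eqn:Ysol_main}. The convex combination then telescopes, $u\,T_1+(1-u)\,T_2 = \sum_{jk} |\langle e_j|A^\dagger|e_k\rangle|^2/(u p_k+(1-u)p_j) = -{\bf z}^{\star\top}\mathcal{Q}\,{\bf z}^\star = \mathscr{L}_u$. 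Thus $\mathscr{L}_u$ is a convex combination of $T_1,T_2$ whereas $\mathscr{U}_u=\max\{T_1,T_2\}$, so $\mathscr{U}_u-\mathscr{L}_u\ge 0$ with equality exactly when $T_1=T_2$. To tie this to $u$-stationarity I would invoke the envelope theorem on the nested minimisation-then-maximisation: the inner optimality of $Y$ and the outer optimality of ${\bf z}^\star$ annihilate their implicit $u$-derivatives, leaving $\tfrac{d}{du}\mathscr{L}_u = T_1 - T_2 = \tfrac14{\bf b}^\top\mathcal{Q}^{-1}\tfrac{d\mathcal{Q}}{du}\mathcal{Q}^{-1}{\bf b}$, which is the vanishing-derivative condition of the theorem. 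Since $g$ is jointly concave in the dual variables and remains concave when restricted to the line $u+v=1$, the map $u\mapsto\mathscr{L}_u$ is concave; an interior stationary point is therefore its global maximiser, which by strong duality equals $C_\mathsf{H}$. At that $u$ we get $T_1=T_2$, the gap closes, and $\mathscr{L}_u = C_\mathsf{H} = \mathscr{U}_u$.

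I expect the main obstacle to be the last step: rigorously justifying the envelope identity $\tfrac{d}{du}\mathscr{L}_u = T_1-T_2$ in the presence of the implicit $u$-dependence of both $Y$ and ${\bf z}^\star$ (differentiating through the Sylvester solution and the $u$-dependent denominators $u p_k+(1-u)p_j$ requires care), together with establishing concavity of $\mathscr{L}_u$ so that an interior stationary point is certified as the maximiser rather than a saddle. The remaining algebra, namely the spectral evaluation of the traces and the bookkeeping that packages them into $\mathcal{Q}_1,\mathcal{Q}_2,\mathcal{Q}_3$ consistently with the negative-definiteness of $\mathcal{Q}$, is routine but must be done with sign conventions fixed once and for all. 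When no interior stationary point exists the maximiser sits on the boundary $u\in\{0,1\}$ and the bound reduces to Theorem~\ref{thm:2-param}, which I would note to delimit the scope of the tight regime.
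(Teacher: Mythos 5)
Your proposal is correct and follows essentially the same route as the paper's own proof in Appendix~\ref{app:app_lag_min_arb_u}: minimise the Lagrangian over $Y$ via the Sylvester equation, collapse the dual to the concave quadratic $-{\bf b}^\top{\bf z}+{\bf z}^\top\mathcal{Q}\,{\bf z}$, obtain $\mathscr{L}_u$ by weak duality and $\mathscr{U}_u$ from the primal objective at the resulting feasible point, and close the gap at the interior stationary point of the concave map $u\mapsto\mathscr{L}_u$. Your explicit telescoping identity $u\,T_1+(1-u)\,T_2=\mathscr{L}_u$, which exhibits the lower bound as the $u$-convex combination of the two primal traces whose maximum is $\mathscr{U}_u$ and thereby makes the zero-gap condition $T_1=T_2$ (equivalently $d\mathscr{L}_u/du=0$) transparent, is a worthwhile refinement that the paper leaves implicit.
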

We refer the reader to Appendix~\ref{app:app_lag_min_arb_u} for a complete proof of this theorem. For applications where the spectral decomposition of the state is not known, the Sylvester equation~\eqref{eqn:Ysol_main} can be efficiently solved numerically using a variant of the Bartel-Stewart algorithm~\cite{Bartels1972_ACM}.

Before concluding, we clarify an important subtlety. Our analysis assumes that the probe state is fixed. However, there are multi-parameter sensing applications that permit full control over the probe states used. In this case, it is possible to extend our formalism to determine the optimal probe state. To see how, note that the \textsc{hcrb} is a bi-convex function of the probe state $\rho$ and the observable $X$. We have already determined the optimal observable corresponding to a chosen state: $C_{\mathsf{H},\rho}(X) = C_\mathsf{H}(\rho,X)$. Conversely, fixing $X$ amounts to solving a convex problem in $\rho$ to determine the optimal state corresponding to the choice in $X$: $C_{\mathsf{H},X}(\rho) = C_\mathsf{H}(\rho,X)$. Based on this, we can implement an efficient iterative bi-convex program that alternatively updates the state and optimal observables by fixing one and solving the corresponding convex optimisation problem~\cite{Gorski2007_MMOR}.


\section{Applications}
\label{sec:applications}

\noindent
Quantum metrology has applications in both spin and bosonic systems. We demonstrate the broad applicability of our results by showing how our bounds work in each of the two settings. First, for spin systems, one natural problem to consider is that of estimating the different components of a magnetic field. When the total magnetic field is known, there are only two independent components of a magnetic field to estimate, and such a problem can be tackled directly using our analytical approach for two-parameter estimation. In particular, our simple approach using the Algorithm in Fig.~\ref{algo:Algo_hcrb2_2} already gives interesting insights into the problem of quantum magnetometry on various types of noisy probe states.

Secondly, for bosonic systems a key obstacle in determining the ultimate precision limits on parameter estimation is the infinite dimension of such systems. We show that using our analytical approach, this obstacle can be overcome. Specifically, we use our tight analytical bounds (algorithm~\ref{fig:masteralgo}) to calculate the precision bounds on estimating the incompatible components of a logical Bloch vector of a pure bosonic codestate when mixed with a thermal state.


\subsection{Magnetometry}
\label{subsec:magnetometry}
\noindent
We use our simple two-parameter bounds to consider magnetic field sensing, which has important technological applications in navigation, position tracking, and imaging~\cite{Razzoli2019_PRA}. We apply our method of finding the \textsc{hcrb} to the estimation of a magnetic field $\mathbf{B} = (B_x,B_y,B_z)$ in three dimensions. Quantum magnetometry is an important application of quantum metrology, and is essential for detecting defects and realising compact magnetic resonance imaging scanners~\cite{Liu2019_PRL}. Estimating each component individually allows us to attain the quantum limit~\cite{Paris2009_IJQI}, and this has been demonstrated in several studies~\cite{Pang2014_PRA, Jones2009_S}. However, in many practical applications, knowledge of multiple parameters is required simultaneously, and we must consider joint estimation strategies.

The three parameters of interest $\smash{\bm{\theta} = (\theta_1, \theta_2, \theta_3)^\top}$ appear in the single spin Hamiltonian $\hat{H}_j(\bm{\theta}) = \bm{\theta}\cdot\mathbf{S}_j$, where $\smash{\mathbf{S}_j}$ is the spin operator for the $j^{\text{th}}$ spin.
Local depolarising noise, described by the single spin \textsc{cptp} map 
\begin{align}
 \mathcal{D}_{g}[\rho] = (1-g)\rho + g \frac{\mathbbm 1_2}{2} \, ,
\end{align} 
provides a general description for a noisy environment, where $g$ denotes the depolarisation magnitude and takes values between 0 and 1. The parameters $\bm{\theta}$ are imprinted on the probe state via the unitary evolution $\smash{\hat{U} = \exp[-i\hat{H}_j(\bm{\theta})]}$. For our example, we assume that the magnetic field in the $z$-direction is known, and we therefore wish to estimate the two parameters $B_x$ and $B_y$. We choose an identity weight matrix to equally prioritise each parameter into a weighted scalar mean square error. We consider three families of $n$-spin probe states, namely the traditional \textsc{ghz} states for single-parameter estimation, the modified 3D-\textsc{ghz} states introduced by Baumgratz and Datta~\cite{Baumgratz2016_PRL}, and the \textsc{gnu} states introduced by Ouyang in the context of quantum error correction \cite{Ouyang2014_PRA}.

First, the 3D-\textsc{ghz} state can be written as 
\begin{align}
    \ket{\psi_n^{\text{3D-GHZ}}} = \frac{1}{\mathcal{N}}\sum_{j=1}^3 \left(\ket{\smash{\phi^+_j}}^{\otimes n} + \ket{\smash{\phi^-_j}}^{\otimes n}\right),
    \label{eqn:3d_ghz_states}
\end{align}
where $n$ is the total number of spins, $\mathcal{N}$ is the normalisation constant of the state and $\ket{\smash{\phi^{\pm}_j}}$ are the eigenvectors corresponding to the $\pm 1$ eigenvectors of the $j^{\text{th}}$ spin matrix. The evolved state then becomes 
\begin{align}
 \rho(\bm{\theta}) = \hat{U}(\bm{\theta})\mathcal{D}_{g}^{\otimes n}[\ket{\smash{\psi_n^{\text{3D-\textsc{ghz}}}}}\bra{\smash{\psi_n^{\text{3D-\textsc{ghz}}}}}]\hat{U}(\bm{\theta})^\dagger\, . 
\end{align} 
We illustrate how the upper bound to the \textsc{hcrb} and the \textsc{qcrb} change with the number of probe qubits $n$ for different depolarising channel strengths $g$ in Fig.~\ref{fig:mag_hcrb_bound}. We observe that the upper bound to the \textsc{hcrb} is indeed tighter than the \textsc{qcrb}. Both variance bounds increase with an increasing depolarising probability of the depolarising channel, as expected. The 3D-\textsc{ghz} state attains the Heisenberg precision scaling for the noiseless case.

\begin{figure}[t!]
\centering
\includegraphics[width =\columnwidth]{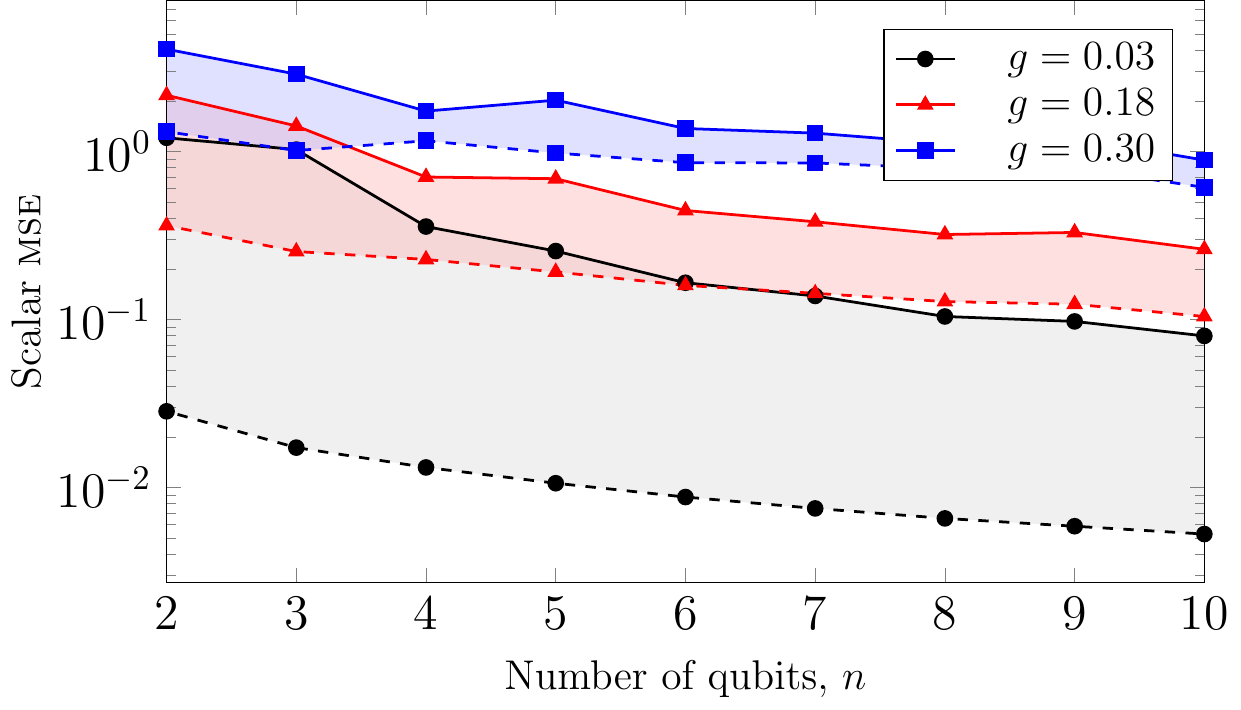}
    \caption{Estimation precision of two directional components of the a magnetic field $B_x$ and $B_y$ with increasing number of spins in a depolarising environment parameterised by $g$, using modified 3D-\textsc{ghz} states. The corresponding Hamiltonian is $ \sum_{j=1}^n \hat H_j({\bm \theta})$ and has no interactions terms. With the identity weight matrix, the dashed lower lines illustrate the \textsc{qcrb} and the solid lines the upper bound $\mathscr{U}$ to the \textsc{hcrb} as given by Theorem \ref{thm:2-param}.}
\label{fig:mag_hcrb_bound}
\end{figure}%
%

Second, we consider the class of \textsc{gnu} states that are robust to a constant amount of erasure and dephasing \cite{Ouyang19_arxiv}:
\begin{align}
|\varphi_1\>   =  \frac{1}{2} \sum_{j=0}^2 \sqrt{\binom 2 j }|D^{n}_{G j}\>, \label{eq:resource-state-general}
\end{align}
Here, for every $w = 0,\dots, n$, the Dicke state $|D^n_w\>$ is a uniform superposition over all computation basis states $|x_1\>\otimes \dots \otimes |x_n\>$ with Hamming weight $w$. Since $n=2G$, where $G$ is related to the number of bit-flip errors that can be corrected, we present results for the \textsc{gnu} states for which $n$ is even. These are shown in Fig.~\ref{fig:holevo_State_comp}, and compared with traditional \textsc{ghz} states and 3D-\textsc{ghz} states. The traditional \textsc{ghz} states give a worse estimation for larger qubit number at constant depolarisation rate, as is well-known. The \textsc{gnu} states perform similarly to the 3D-\textsc{ghz} states.

Finally, we use our formalism to determine the optimal $n$-qubit observables $X_1$ and $X_2$ that attain the \textsc{hcrb} using the 3D-\textsc{ghz} states. Unlike the single qubit estimation case, analytic solutions to these observables are challenging and the dimension of $X_j$ scales as $\smash{2}^n$. Instead, we  numerically determine their structure, shown in Fig.~\ref{pic:grid_deformation_types}. We plot the real and imaginary parts of the matrices $X_1$ and $X_2$, and the Hermiticity of the observables is clearly observed.

\begin{figure}[t!]
\centering
\includegraphics[width =\columnwidth]{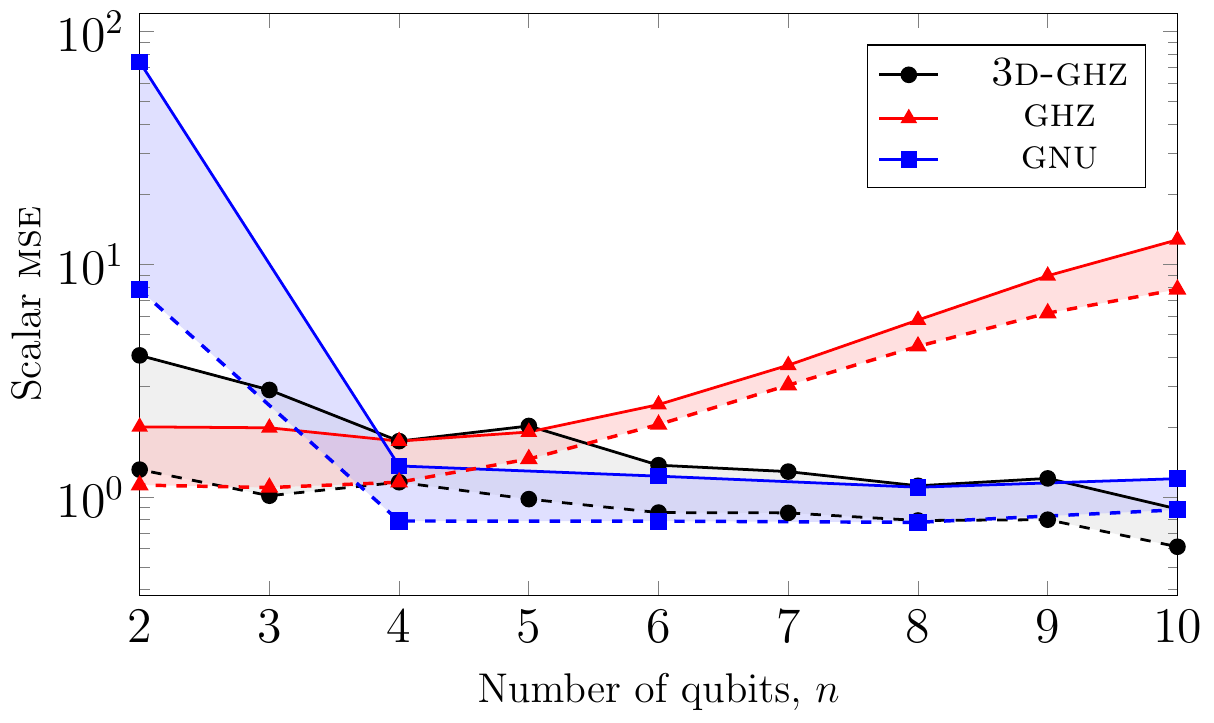}
\caption{Using theorem \ref{thm:2-param}, we depict two-sided bounds on the \textsc{hcrb} for estimating the elements of a magnetic field in a depolarising channel. All plots with depolarising strength $g=0.3$. This plot compares the performance of 3D-\textsc{ghz} states with $n$-qubit \textsc{ghz} states, and the permutation-invariant \textsc{gnu} states. We observe an interesting crossover point between the tightest bound generated by the n-qubit \textsc{ghz} states and 3D-\textsc{ghz} states, with increasing number of qubits. The \textsc{gnu} states are defined over the even number of qubit numbers generates the lowest Holevo bound for small number of qubits.}
\label{fig:holevo_State_comp}
\end{figure}%
%


\subsection{Bosonic quantum codes}
\label{eqn:codes}

\noindent
Our formalism for the \textsc{hcrb} allows performance characterisation of fault tolerant quantum codes in the context of quantum metrology. In this section, we apply our master algorithm in Fig.~\ref{fig:masteralgo} to explore how bosonic error-correcting codes can improve characterisation of logically encoded states in the presence of noise.

Susceptible quantum information can be safeguarded from decoherence by storing it in quantum error-correcting codes (\textsc{qecc}), which in the case of continuous variable (bosonic) quantum systems are subspaces of infinite-dimensional Hilbert space. The working principle of \textsc{qecc} is to project states with errors with high probability onto correctible subspaces labelled by the error syndromes, and dynamically evolve the projected state back to the original code space. 
When these codes are well-chosen, they can correct against errors that are introduced in physically realistic noise models. While bosonic codes on multiple-modes that correct against displacement errors~\cite{GKP01,noh2019-bosonic2bosonic} and photon loss~\cite{CLY97,WaB07,BvL16,ouyang2019permutation} exist, a key attraction of bosonic codes is that they can be used even on a single mode. For example, to protect codes against photon loss and phase errors on a single-mode, one can use codes gapped in the Fock basis~\cite{BinomialCodes2016,GCB20-PhysRevX.10.011058}, or a single-mode \textsc{gkp} code for displacement errors~\cite{GKP01}. For a complete exposition of fault tolerant quantum computing and error correcting codes, the reader is directed to references~\cite{Terhal2015_RMP, Michael2016_PRX,Terhal2020_QST}.

\begin{figure}
\subfloat[$\text{Re}(X_1)$]{\includegraphics[width=0.49\columnwidth]{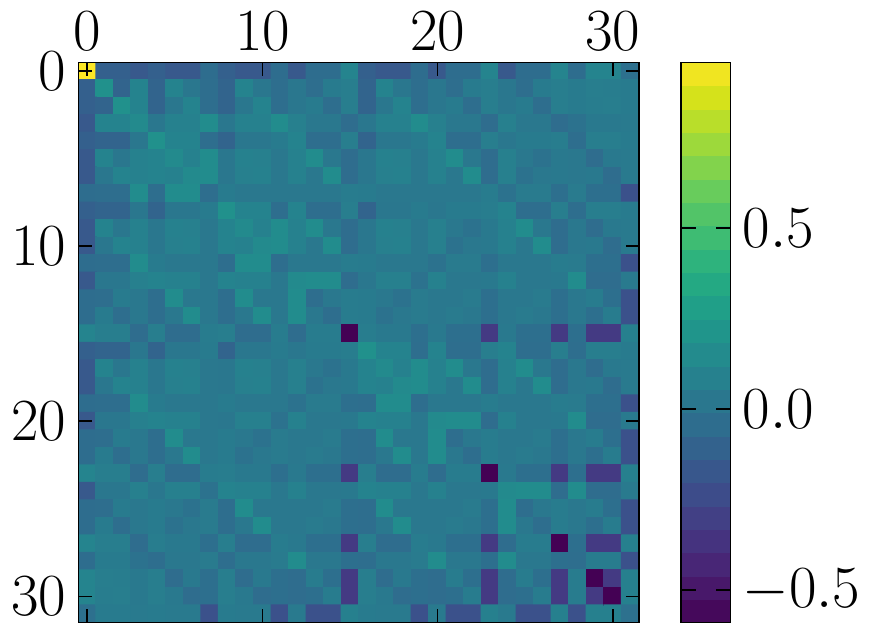}\label{fig:ReX1n5}} 
\subfloat[$\text{Im}(X_1)$]{\includegraphics[width=0.49\columnwidth]{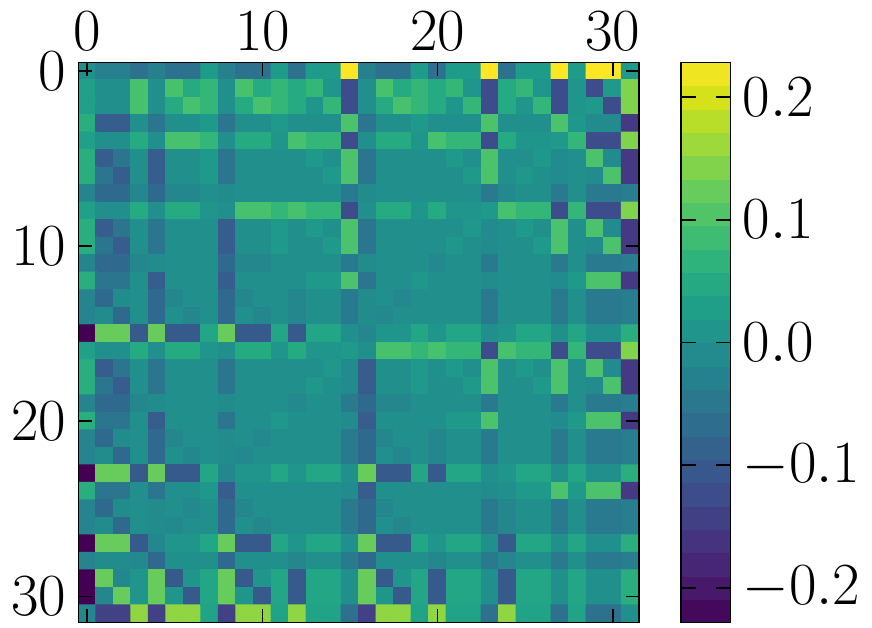} \label{fig:ImX1n5}} \\
\subfloat[$\text{Re}(X_2)$]{\includegraphics[width=0.49\columnwidth]{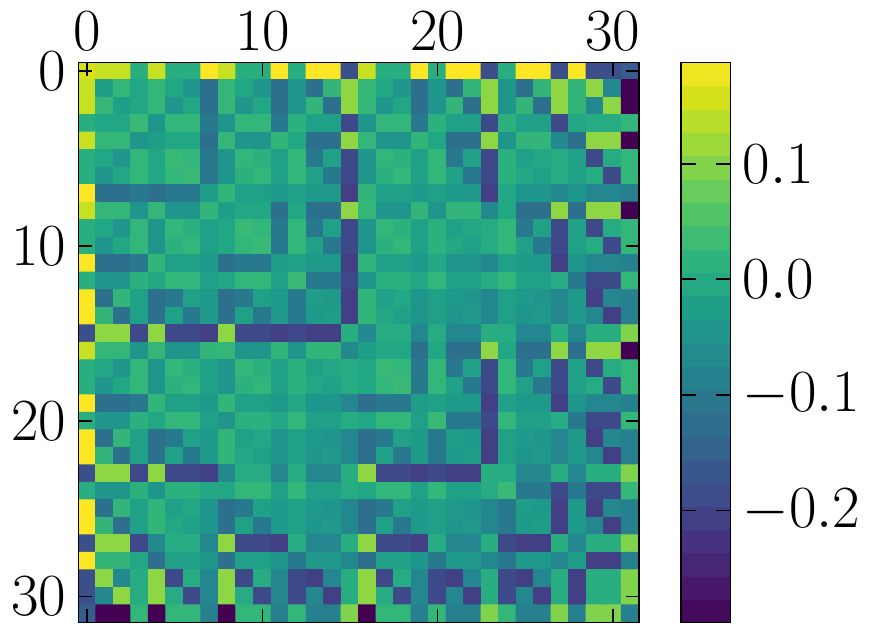} \label{fig:ReX2n5}} 
\subfloat[$\text{Im}(X_2)$]{\includegraphics[width=0.49\columnwidth]{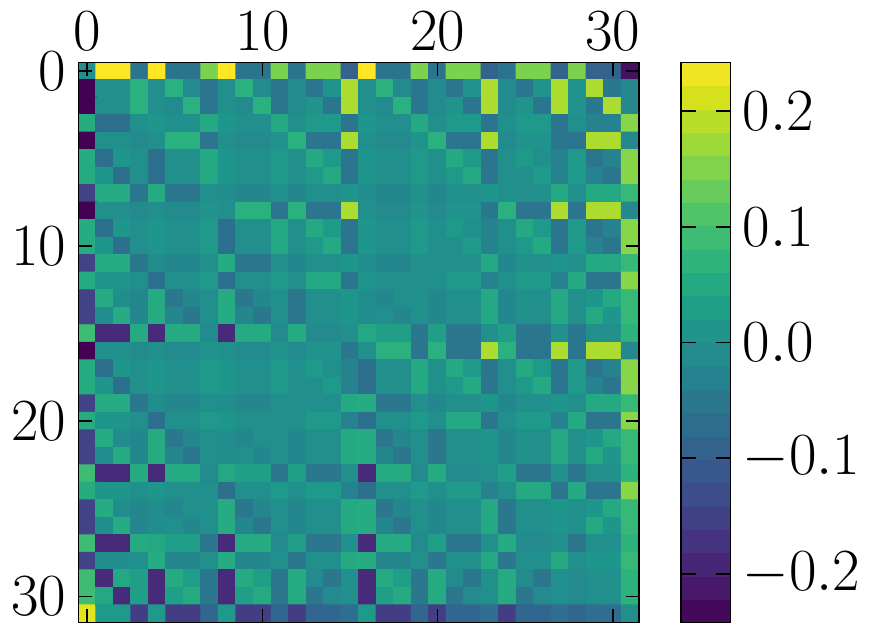} \label{fig:ImX2n5}}
\caption{Using Algorithm 1, we obtain a heatmap for the measurement observables $X_1$ and $X_2$ for the depolarised five-qubit 3D-\textsc{ghz} state, under a depolarising strength $g=0.3$. We plot the real and imaginary parts separately. The Hermiticity of these observables are clearly illustrated.}
\label{pic:grid_deformation_types}
\end{figure}%

We focus on bosonic codes, which, unlike two-level systems, has infinite energy levels per mode. We define a logically encoded state that is parameterised by the coordinates $\bm{\theta} = (x, \phi)^\top$ with $x=\cos(\theta/2)$. We evaluate the \textsc{hcrb} for this bivariate estimation scheme by using theorem~\ref{thrm:hcrb_bounds_arb_u} to evaluate upper and lower bounds to the \textsc{hcrb} for fixed $u$. We then tune the parameters of the binomial codes to effect improvements to estimates of $\bm{\theta}$. We consider binomial codes that protect codewords against number-shift and phase errors. In particular, we analyse the ultimate limits of estimating the complex coefficients of a pure binomial codestate in the presence of thermal noise. 

The logical codewords for the binomial code are supported on a bounded number of Fock states through
\begin{align}
|0_L\> &= \sum_{\substack{ j \ge 0  \\ j \ \even} }
2^{-\frac{(n-1)}{2}}\sqrt{\binom n j} |Gj\>, \\
|1_L\> &= \sum_{\substack{ j \ge 0  \\ j \ \odd} }
2^{-\frac{(n-1)}{2}}\sqrt{\binom n j} |Gj\>,
\end{align}
where $G, n \in \mathbbm{R}$ are related to the number of correctable number-shift and phase errors respectively \footnote{The parameters $G$ and $n$ here are reminiscent of \textsc{Gnu}-states in permutation-invariant quantum codes \cite{Ouyang2014_PRA}}.
From Ref.~\cite[Eq (7)]{BinomialCodes2016}, one requires $G \ge G_{\rm bin}+L_{\rm bin}+1$ to correct $G_{\rm bin}$ gain and $L_{\rm bin}$ loss errors, and $n-1 \ge \max\{L_{\rm bin},G_{\rm bin},2D_{\rm bin}\}$ to correct $D_{\rm bin}$ phase errors. 
For fixed $G$ and $n$, we construct the logical state $\smash{\rho_L = \ket{\psi_L}\bra{\psi_L}}$ with
\begin{align}
    |\psi_L\> =  x |0_L\> + \sqrt{1-x^2}  e^{i \phi}|1_L\>,
\end{align}
where $x \in[-1,1]$ and $\phi \in \mathbb R$. In the noisy scheme, we thermalise this logical pure state through $\smash{\rho = \lambda_{\rm th} \rho_{\rm th} +  (1-\lambda_{\rm th}) \rho_L}$, where
\begin{align}
\rho_{\rm th} = \frac{1}{1-e^{-\beta }} \sum_{k=0}^\infty e^{-k \beta }|k\>\<k| ,    
\end{align}
is a thermal state with temperature $\beta$. Since $\smash{\partial_j\rho = (1-\lambda_{\rm th})\partial_j\rho_L}$, $j=\{x, \phi\}$
and $\rho_L$ is only supported on the Fock states $|0\>, |G\> , \dots, |Gn\>$,
the state derivatives $\partial_j\rho$ are only supported on the Fock states $|0\>, |G\>, \dots ,|Gn\>$.
Using this property, we can determine that in the calculation of the \textsc{hcrb}, we need only consider the evaluation of $\rho$ on the support of the Fock states $|0\>, |G\>, \dots ,|G n\>$.  
Denoting such a state as $\tau$, we can write
\begin{align}
    \tau = \sum_{j,k=0}^n |j\>\<k| \<Gj|\rho|Gk\>,
\end{align}
and observe that its has spectral decomposition
\begin{align}
    \tau = \sum_{k=0}^n t_k |\tau_k\>\<\tau_k| \quad \text{with} \quad |\tau_k\> = \sum_{j=0} ^n \tau_{k,j} |j\>.
\end{align}
The key implication is that $\tau$ is now an effective size $(n+1)$ matrix, and unlike $\rho$, does not have infinite dimensions.
Now define
\begin{align}
    |T_k\> =  \sum_{j=0} ^n \tau_{k,j} |Gj\>.
\end{align}
From the form of our noise model, $\tau$ is a full rank matrix because it is a convex combination of a positive definite matrix, and a positive semi-definite matrix. The positive definite matrix arises from a truncation of the thermal state on the Fock states $|0\>,|G\>,\dots, |Gn\>$, and the positive semi-definite matrix arises from $\rho_L$. Since $\tau$ is a full rank matrix, it follows that the spectral decomposition of $\rho$ is
\begin{align}
\rho =&
\sum_{k=0}^n t_k |T_k\>\<T_k| +
\lambda_{\rm th} \sum_{k=0}^{n-1}
\sum_{j=1}^{g-1} \frac{e^{-\beta(Gk+j)}}{1-e^{-\beta}} |Gk+j\>\<Gk+j|\notag\\
&+ \lambda_{\rm th} \sum_{k=Gn+1}^\infty 
\frac{e^{-\beta k}}{1-e^{-\beta}} |k\>\<k|.\label{terms-in-rho}
\end{align}
From the above spectral decomposition of $\rho$, it is clear that only the first summation term contributes to the state derivatives. This makes the effective dimension of the problem equal to the dimension of $\tau$ instead of that of $\rho$. Because of this reduction in the effective dimensionality of the problem, we can efficiently use algorithm~\ref{fig:masteralgo} to evaluate upper and lower bounds to the \textsc{hcrb} for fixed $u$ to benchmark parameter estimates for $\smash{\bm{\theta}}$. 

\begin{figure*}[t!]
\subfloat[Thermalisation effect.]{\includegraphics[width=0.315\linewidth]{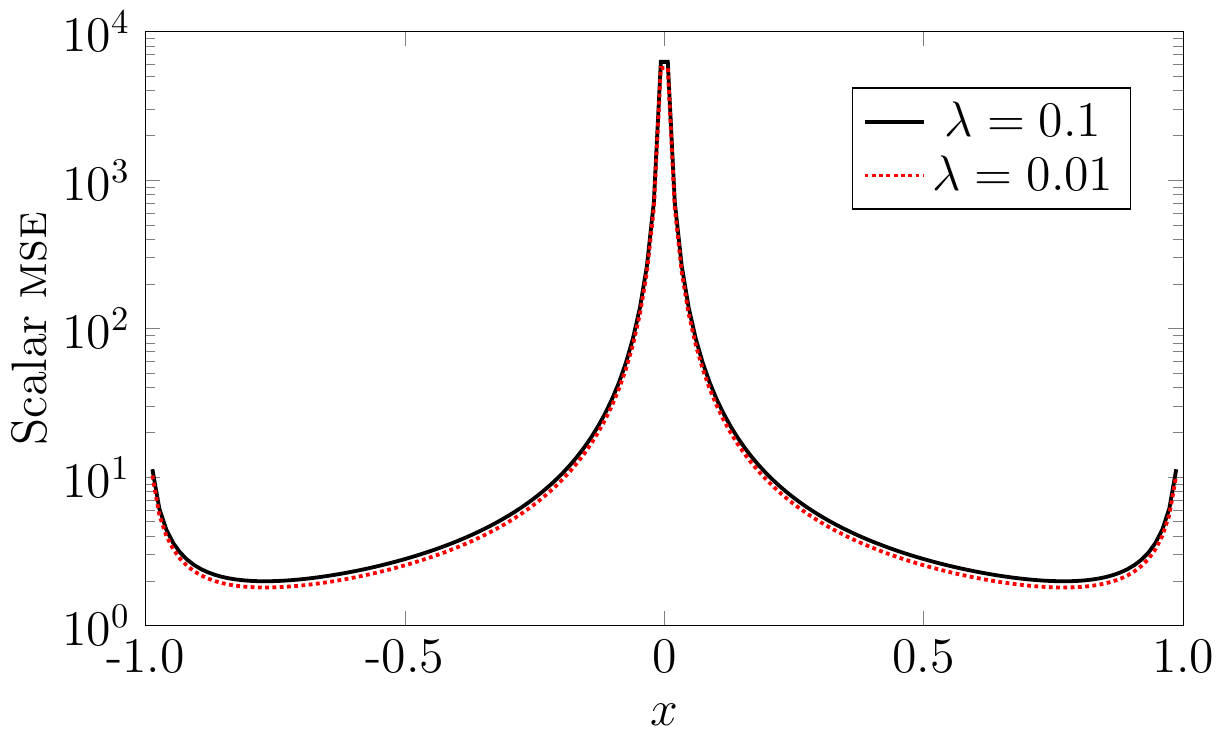}\label{fig:lambda_hcrb}} \hspace{2pt}
\subfloat[Amplitude damping error parameter, $G$.]{\includegraphics[width=0.323\linewidth]{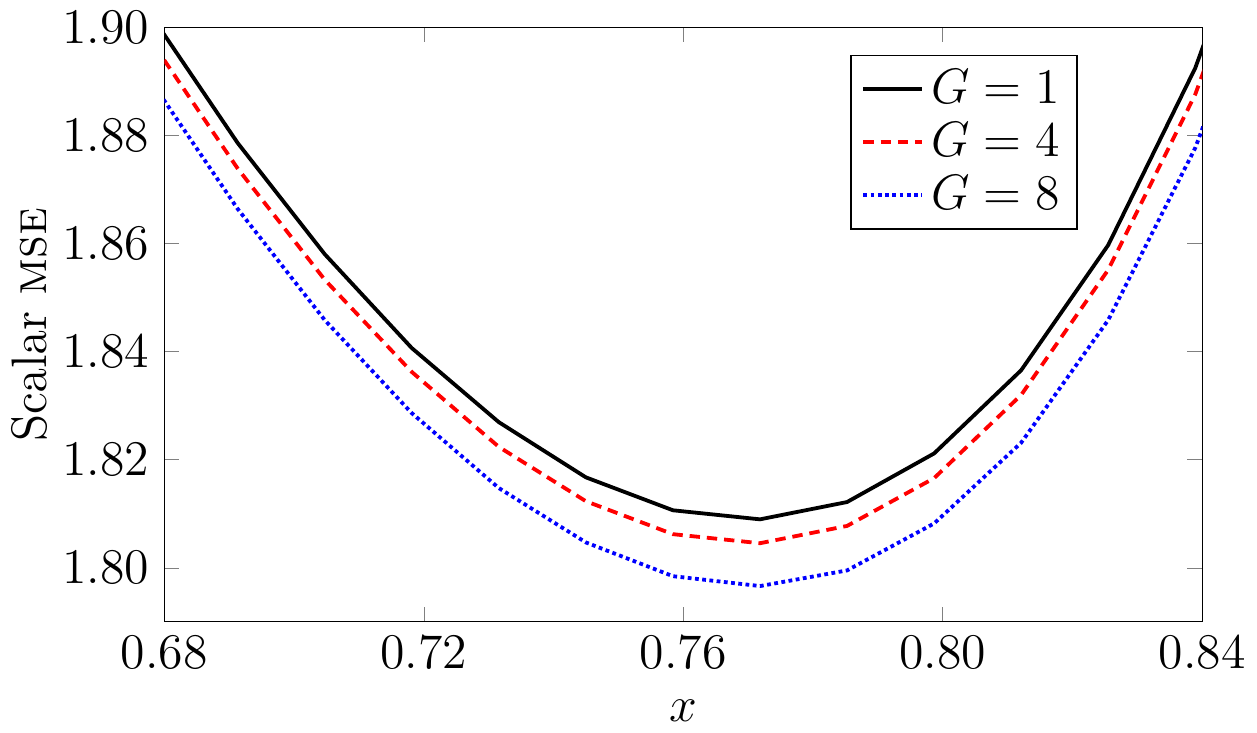} \label{fig:g_hcrb}}\hspace{2pt}
\subfloat[Phase error parameter, $n$]{\includegraphics[width=0.337\linewidth]{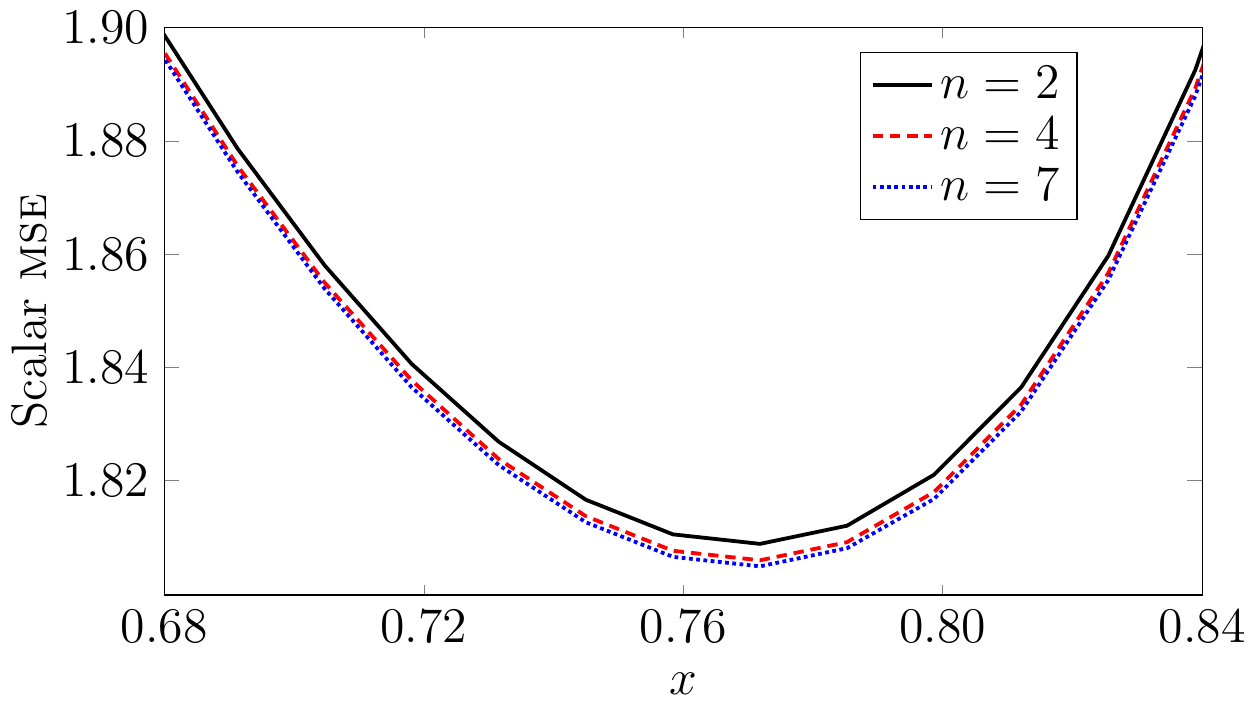} \label{fig:n_hcrb}}
\caption{Variation of \textsc{hcrb} with thermalisation and binomial code parameters. For the plot in Fig.~\ref{fig:lambda_hcrb}, we take $n=2$, $\beta=0.01$, and $G=1$. In Fig.~\ref{fig:g_hcrb}, we see the effect of noise by varying the number of correctable amplitude damping errors of the binomial code, using $n=2$, $\lambda_{\rm th}=0.01$, and $\beta=1$. In Fig.~\ref{fig:n_hcrb} we illustrate the effect of increasing the correctable phase errors of the binomial code on the \textsc{hcrb} using $\lambda_{\rm th}=0.01$, $\beta=1$, and $G=1$.}
\label{pic:holevo_binomial_codes}
\end{figure*}%

By optimising over $u$, we find exact values of the \textsc{hcrb}. In Fig.~\ref{pic:holevo_binomial_codes} we illustrate how the \textsc{hcrb} changes with different values of the noise parameters $\lambda_{\rm th}$ and $\beta$ and the code parameters $G$ and $n$. Fig.~\ref{fig:lambda_hcrb} shows that increasing the thermalisation increases the scalar \textsc{mse}. We also note that the value for $x$ that minimises the \textsc{mse} is insensitive to the amount of thermalisation in the state. Notice also that the effect of binomial codes in this application is limited at high thermalisation, where the state has a lot of thermal noise. The binomial codes are able to protect against errors resulting on the state due to low temperatures. To see this, we observe the behaviour of the \textsc{hcrb} in the region where it is minimised with respect to $x$. In Fig.~\ref{fig:g_hcrb} we illustrate the behaviour of the \textsc{hcrb} within $x= [0.68,0.84]$ for different values of $G$. Notice that increasing the number of correctable amplitude damping errors by increasing the value of $G$ improves the precision of the simultaneous estimate for $x$ and $\phi$. In Fig.~\ref{fig:n_hcrb}, we illustrate a similar improvement by increasing $n$, which is related to the number of correctable phase errors. This demonstrates that the error correcting codes can be used to improve simultaneous parameter estimates in the low error regime.

It is worth noting the performance of the simple bounds in theorem~\ref{thm:2-param} for this application. For the specific choice of parameters $u=0$, $x=-0.8$, $\phi=0.7$, $n=5$, $G=3$, $\lambda_{\rm th}=0.2$, $\beta=0.1$, we get the exact \textsc{hcrb}. In general, theorem~\ref{thm:2-param} returns non-tight bounds to the \textsc{hcrb} for alternative values in the parameter space. This illustrates that if the tightness of bounds are crucial, then one should apply theorem~\ref{thrm:hcrb_bounds_arb_u}.


\section{Conclusions and Discussions} 
\label{sec:conclusions}

\noindent
Quantum metrology promises practical near term quantum technologies. Experimental developments in sensing are demonstrating early theoretical results and advancements in estimation theory. On the theoretical front, one prominent limitation that remains is the estimation of multiple non-compatible observables. Specifically, the optimal strategy to define the fundamental limits to precision estimates and their attainability is not known. Efforts to estimate multiple non-compatible observables have largely been focused on approaching the fundamental quantum Cram{\'e}r-Rao bound (\textsc{qcrb}). This has led to efforts to devise non trivial measurement schemes that approach the \textsc{qcrb}. An alternative approach is to focus on the tighter Holevo Cram{\'e}r-Rao bound (\textsc{hcrb}), which is physically attainable. However, the \textsc{hcrb} is difficult to evaluate since it involves a difficult optimisation over two observables. This has limited its application in quantum estimation theory. 

In this paper, we have made significant progress in analytically solving the \textsc{hcrb} for two-parameter estimation problems, and providing bounds for larger number of parameters. In the two-parameter case, we reduce the complexity of the optimisation procedure to that of solving a set of linear equations, which can be easily solved using most numerical software packages. We also provide analytic expressions for the optimal positive operator valued measurements (\textsc{povm}). Our results readily apply to a large range of physical applications. This will provide deeper insight into the role of quantum measurements in quantum sensing, and help continue the drive of realising quantum technologies.  

We illustrate an application of our results by considering the estimation of a magnetic field using noisy multi-qubit probe states \cite{Ouyang19_arxiv,ouyang2020weight}. A recent numerical study by Albarelli \emph{et al.} demonstrated the necessity of using the \textsc{hcrb} over the \textsc{qcrb}, based on a violation of the weak commutation condition~\cite{Albarelli2019_PRL}. Here, we provide further insight into the role that the \textsc{hcrb} plays in quantum estimation theory. We provide conditions for when this bound is tighter than the \textsc{sld qcrb} (or the Helstrom bound) and provide the corresponding optimal measurement observables. 


A second application of our results explores how bosonic quantum error correction codes can improve noise resilience of parameter estimates. Bosonic codes are interesting because of their potential in reducing the number of physical systems required while having some robustness against errors. However, the infinite dimensionality of bosonic systems renders a brute force numerical approach to determining the \textsc{hcrb} intractable. Instead, through our analytical approach, we reduce this problem to a finite dimensional problem, and thereby evaluate the corresponding precision bounds efficiently.


There are several clear extensions of our work that can be readily addressed. The first would be to use the analytic expressions that we derive to provide further insight into more protocols in estimation theory. We hope that this will help to drive the wave for experimental validation. A second line of work would consider an extension of the Holevo bound to parameters with arbitrary choice of weight matrices. In this work we have considered unit weight matrices, which was motivated through placing equal importance to each parameter. A more general weight matrix would provide a more general bound. A final line of work would consider the optimal implementation of the general measurements that were derived in this work. This would provide an immediate access to the tighter \textsc{hcrb} through experimental implementation.


\section{Acknowledgments}
\label{sec:acknowledgments}

\noindent
J.S.S. and P.K. acknowledge the support of EPSRC via the Quantum Communications Hub through grant number EP/M013472/1. Y.O. and E.C. acknowledge the support of EPSRC through grant number EP/M024261/1. This work was completed and submitted when E.C. was working at the University of Sheffield.


\appendix

\section{Matrix calculus}
\label{app:matrix-calculus}

\noindent
We prove some elementary facts about matrix calculus that we use repeatedly in our analysis of the turning points of the Lagrangian functions that occur throughout the manuscript.

We begin by defining some notations. Since a complex matrix of size $n$ is a map from $\mathbb C^n$ to $\mathbb C^n$, we use $L(\mathbb C^n)$ to denote the set of all complex matrices of size $n$. Here, the notation $L(\mathbb C^n)$ reflects the fact that a matrix is a linear mapping that is an automorphism on $\mathbb C^n$.
At times we are interested in matrices that are also Hermitian, which means that they are equal to their complex conjugates. In this scenario, we use $\mathbb H_n$ to denote the set of all complex matrices that are also Hermitian. Clearly for instance, $\mathbb H_n$ is a strict subset of $L(\mathbb C^n)$.

Now let $f : L (\mathbb C^n) \to \mathbb C$ denote a function that maps a complex matrix to a complex scalar. If $f(Y)$ is differentiable at $Y$ in the direction $H$, we use 
\begin{align}
    \nabla_{Y,H} f(Y) = \lim_{h\to 0} \frac{ f(Y + h H) - f(Y) } {h}
\end{align}
to denote the Fr\'echet derivative of $f(Y)$ in the direction $H$. In the above formula, $h$ is a real infinitesimal parameter. Properties of these Fr\'echet derivatives continues to be an active area of research~\cite{deadman2016}, and they have also been recently used in quantum information theory~\cite{ouyang2019mems}. 

In this paper, we are interested in matrix functions that are either linear or quadratic in the matrix variable $Y$. This leads us to analyse the Fr\'echet derivatives given by the following lemma.

\begin{lemma}
\label{lem:Fr\'echet-derivatives}
Let $Y,H \in L(\mathbb C^n)$.
Then 
\begin{align}
 \nabla_{Y,H} \tr{A Y} &= \tr{A H} \\
 \nabla_{Y,H} \tr{A Y^\dagger} &= \tr{A H^\dagger} \\
 \nabla_{Y,H} \tr{Y A Y^\dagger} &= \tr{A Y^\dagger H + Y A H^\dagger} \\
 \nabla_{Y,H} \tr{Y^\dagger A Y} &= \tr{Y^\dagger A H + A Y H^\dagger} .
\end{align}
\end{lemma}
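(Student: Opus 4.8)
The plan is to prove all four identities by direct substitution into the definition of the Fréchet derivative and expanding in powers of the real infinitesimal $h$. The single structural fact I would lean on throughout is that $h$ is \emph{real}, so that $(Y+hH)^\dagger = Y^\dagger + h H^\dagger$; this makes the adjoint split linearly across the perturbation and lets the whole calculation proceed without any Wirtinger-type book-keeping. For the two linear functionals the argument is immediate: I would write $\tr{A(Y+hH)} = \tr{AY} + h\tr{AH}$ and, using the reality of $h$, $\tr{A(Y+hH)^\dagger} = \tr{AY^\dagger} + h\tr{AH^\dagger}$. In each case the difference quotient is already constant in $h$, so dividing by $h$ and sending $h\to 0$ returns $\tr{AH}$ and $\tr{AH^\dagger}$ respectively, with no limit subtlety at all.

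For the two quadratic functionals the essential step is to expand the product to second order and read off the linear coefficient. Taking $f(Y) = \tr{YAY^\dagger}$, I would substitute $Y+hH$ and collect terms by degree in $h$:
\begin{align}
\tr{(Y+hH)A(Y+hH)^\dagger}
= \tr{YAY^\dagger} + h\bigl(\tr{YAH^\dagger} + \tr{HAY^\dagger}\bigr) + h^2\,\tr{HAH^\dagger}.
\end{align}
The difference quotient is then $\tr{YAH^\dagger} + \tr{HAY^\dagger} + h\,\tr{HAH^\dagger}$, whose $h^2$ remainder vanishes in the limit. A single use of the cyclic property of the trace, $\tr{HAY^\dagger} = \tr{AY^\dagger H}$, recasts the survivor as $\tr{AY^\dagger H + YAH^\dagger}$, matching the stated identity. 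The fourth identity follows in exactly the same way after interchanging the roles of $Y$ and $Y^\dagger$ in the expansion, this time invoking cyclicity as $\tr{H^\dagger A Y} = \tr{AYH^\dagger}$ to arrive at $\tr{Y^\dagger A H + AYH^\dagger}$.

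There is essentially no obstacle to overcome: every functional here is a polynomial of degree at most two in the perturbation, so each Taylor expansion terminates at order $h^2$ and existence of the limit is automatic without any estimates. The only point genuinely worth flagging—rather than a difficulty—is the reality of $h$, which is what guarantees the clean splitting $(Y+hH)^\dagger = Y^\dagger + hH^\dagger$. Were $h$ permitted to be complex, the functionals involving $Y^\dagger$ would not be complex-differentiable, and only the real-linear derivative used here would survive; I would state this explicitly so that the later Lagrangian stationarity computations, which treat $Y$ and $Y^\dagger$ as independent directions, are on firm footing.
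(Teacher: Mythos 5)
Your proposal is correct and follows essentially the same route as the paper, which simply invokes direct application of the definition of the Fr\'echet derivative together with the cyclic property of the trace for the two quadratic cases. Your expansion in powers of the real parameter $h$ and the explicit remark that reality of $h$ gives $(Y+hH)^\dagger = Y^\dagger + hH^\dagger$ merely spell out the details the paper leaves implicit.
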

\begin{proof}
The proof of the above results from direct application of the definition of the Fr\'echet derivative for the first two equations. For the last two equations, we also use the cyclic property of the trace.
\end{proof}

We are often faced with the unconstrained minimisation of a quadratic form, and we show in the following lemma what the optimal solution to these optimisation problems are.

\begin{lemma}
\label{lem:matrix-minimize}
Let $A \in L(\mathbb C^n)$ and let $\rho$ be a full rank matrix in $\mathbb H_n$.
Then 
\begin{align}
\min_{Y \in L(\mathbb C^n)} \left( \tr{Y \rho Y^\dagger} + \tr{A Y} + \tr{A^\dagger Y^\dagger}   \right)
&= - \tr{A^\dagger \rho^{-1} A}\label{eq:lem-min-1} \\
\min_{Y \in L(\mathbb C^n)} \left( \tr{Y^\dagger \rho Y} + \tr{A Y} + \tr{A^\dagger Y^\dagger}   \right)
&= - \tr{A \rho^{-1} A^\dagger} ,\label{eq:lem-min-2} 
\end{align}
with the minimum achieved by setting $Y =- A^\dagger \rho^{-1}$ and $Y = -\rho^{-1} A^\dagger$ respectively. 
\end{lemma}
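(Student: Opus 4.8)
The plan is to solve each unconstrained minimisation by setting the Fr\'echet derivative of the quadratic objective to zero in every direction $H$, using Lemma~\ref{lem:Fr\'echet-derivatives}, and then verifying that the resulting stationary point is a genuine minimum because the quadratic form is convex (this is where full rank and positive semidefiniteness of $\rho$ enter). For Eq.~\eqref{eq:lem-min-1}, I would write the objective as $F(Y) = \tr{Y \rho Y^\dagger} + \tr{A Y} + \tr{A^\dagger Y^\dagger}$ and compute
\begin{align}
\nabla_{Y,H} F(Y) = \tr{\rho Y^\dagger H + Y \rho H^\dagger} + \tr{A H} + \tr{A^\dagger H^\dagger}.
\end{align}
Grouping the terms that pair with $H$ and those that pair with $H^\dagger$, stationarity in all directions $H$ forces both $\tr{(\rho Y^\dagger + A) H} = 0$ and $\tr{(Y\rho + A^\dagger) H^\dagger} = 0$ for every $H$, which gives $Y\rho = -A^\dagger$, i.e.\ $Y = -A^\dagger \rho^{-1}$, using that $\rho$ is invertible.

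Substituting $Y = -A^\dagger \rho^{-1}$ back into $F$ then gives the claimed value. I would compute $\tr{Y\rho Y^\dagger} = \tr{A^\dagger \rho^{-1}\rho\rho^{-1}A} = \tr{A^\dagger\rho^{-1}A}$ and $\tr{AY} + \tr{A^\dagger Y^\dagger} = -\tr{A A^\dagger \rho^{-1}} - \tr{A^\dagger \rho^{-1} A} = -2\tr{A^\dagger \rho^{-1}A}$ (using Hermiticity of $\rho^{-1}$ and the cyclic property of the trace), so that $F$ evaluates to $-\tr{A^\dagger \rho^{-1}A}$, as stated. The derivation of Eq.~\eqref{eq:lem-min-2} is entirely parallel, now using the fourth identity of Lemma~\ref{lem:Fr\'echet-derivatives}: stationarity yields $\rho Y = -A^\dagger$, hence $Y = -\rho^{-1}A^\dagger$, and back-substitution gives $-\tr{A\rho^{-1}A^\dagger}$.

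The only genuinely delicate point is justifying that the critical point is a global minimum rather than merely a stationary point, and that the objective is not unbounded below. I would argue this from convexity: writing $Y = Y_0 + \Delta$ with $Y_0$ the stationary point, the first-order terms cancel by construction and the remainder is $\tr{\Delta\rho\Delta^\dagger} \ge 0$ (respectively $\tr{\Delta^\dagger\rho\Delta}\ge 0$) since $\rho \ge 0$, so $F(Y) \ge F(Y_0)$ with equality at $\Delta = 0$. Because $\rho$ is full rank the quadratic term is in fact strictly positive for $\Delta \neq 0$, so the minimiser is unique, which is all that Lemma~\ref{lem:matrix-minimize} asserts.
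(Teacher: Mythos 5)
Your proof is correct and follows essentially the same route as the paper's: compute the Fr\'echet derivative via Lemma~\ref{lem:Fr\'echet-derivatives}, set it to zero to get $Y\rho+A^\dagger=0$ (resp.\ $\rho Y+A^\dagger=0$), invert $\rho$, and substitute back. Your explicit completion-of-the-square argument for global optimality is a welcome addition where the paper merely asserts convexity (note that both arguments tacitly use $\rho\ge 0$, which holds for the density matrices to which the lemma is applied but is not literally implied by ``full rank Hermitian'').
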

\begin{proof}
We first prove \eqref{eq:lem-min-1} and \eqref{eq:lem-min-2}. The corresponding objective functions that are to be minimised are convex and differentiable, so it suffices to find when their Fr\'echet derivatives are equal to zero for any direction $H$. For this, we use Lemma \ref{lem:Fr\'echet-derivatives}, from which we find that we must have $\rho Y^\dagger+ A = 0$ and $Y^\dagger \rho + A = 0$ respectively. Making use of the fact that $\rho$ is invertible whenever it has full rank, we multiply both sides of the equations, and find that the optimal $Y$s are given by 
$Y = -A^\dagger \rho^{-1}$ and 
$Y = -\rho^{-1} A^\dagger$ respectively. Substituting this back into the objective functions gives the result.
\end{proof}

\section{Simple two-parameter bounds to the \textsc{hcrb}}
\label{app:2_param}

\noindent
We explicitly derive the \textsc{hcrb} for the two-parameter case. In the two-parameter setting, the \textsc{hcrb} with a weight matrix $W$ is given by the optimisation problem
\begin{mini} 
{X_1,X_2}{\tr{ W{\rm Re}Z} + \| W {\rm Im} {Z} \|_1  ,}
{}{}
\addConstraint{\Tr{\rho X_j}}{=0}
\addConstraint{\Tr{\partial_j \rho X_k}}{=\delta_{jk},}
\label{app:opt0}
\end{mini}
where $X_j$ are constrained to be Hermitian matrices in $\mathbb H_N$,  
and $Z$ is a matrix given by
\begin{align}
    Z  = 
    \begin{pmatrix}
    \tr{ \rho X_1^2 } & \tr{ \rho X_1 X_2 }  \\
    \tr{ \rho X_2 X_1 } & \tr{ \rho X_2^2 }  \\
    \end{pmatrix}.
\end{align}
Note that $W$ is always taken to be a positive definite matrix.
For simplicity, we only consider the scenario where $W$ is the identity matrix.
\subsection{Reformulation of the optimisation problem}\label{subsec:2param-reparam}

\noindent
The optimisation problem \eqref{opt0} can be solved analytically primarily from our ability to rewrite the objective function as a quadratic function in the optimisation variables $X_1$ and $X_2$.
The method of Lagrange multipliers when applied to problems with quadratic objective functions and linear equality constraints is well-known to be exactly solvable, for example in theory of portfolio optimisation in finance \cite{Best2010_book}. 
A similar argument will allow us to solve \eqref{opt0} using this method.

We begin by showing why the objective function is quadratic. 
To see this, we first note that that the diagonal terms of $Z$ are positive numbers, because $X_1$ and $X_2$ are Hermitian and $X(\cdot)X^\dagger$ is a completely positive map. 
Second, the positivity of the diagonal entries of $Z$ implies that 
\[\rm{Re} \tr{Z}=\tr{Z} = \tr{X_1 \rho X_1^\dagger} +  \tr{X_2 \rho X_2^\dagger}.\]
Third, the positivity of the diagonal entries of $Z$ implies that the trace norm of ${\rm Im} Z$ can be explicitly evaluated. This is because
the diagonal entries of ${\rm Im} Z$ must be zero.
Since $X_1,X_2$ and $\rho$ are Hermitian matrices,
it follows that 
\[{\rm Im} Z = \frac{1}{2i} \begin{pmatrix}
0 &  w \\
-w & 0 \\
\end{pmatrix},\]
where $w = \tr{\rho X_1 X_2 }  - \tr{X_2 X_1 \rho  }$ is an imaginary number.
The eigenvalues of ${\rm Im}Z$ are therefore $\pm w/2$, which implies that the trace norm of ${\rm Im} Z$ is $\max\{iw,-iw\}$.
From this, 
we get
\begin{align}
    {\rm Re} \tr{Z} +  i w 
=&\Tr{ (X_1 + i X_2) \rho (X_1 + i X_2)^\dagger }\\
     {\rm Re} \tr{Z}    -i w
=&\Tr{ (X_1  - i  X_2) \rho ( X_1 - i X_2)^\dagger }.
\end{align}
Now let us make the substitution
$Y = X_1 + i X_2$.
In this scenario, we can rewrite the equality constraints in \eqref{app:opt0} as 
\begin{align}
\tr{ \rho Y} &= 0 \notag\\
\tr{ \partial_1 \rho Y} &= 1 \notag\\
\tr{ \partial_2 \rho Y} &= i .
\end{align}
Hence the optimisation problem \eqref{opt0} can be written as 
\begin{mini} 
{Y,t}{ t,}
{}{}
\addConstraint{\Tr{ Y \rho Y^\dagger } }{ \le t }
\addConstraint{\Tr{ Y^\dagger \rho Y } }{ \le t }
\addConstraint{ \tr{ \rho Y} }{=0}
\addConstraint{ \tr{ \partial_1  \rho Y} }{=1}
\addConstraint{ \tr{ \partial_2  \rho Y} }{=i.} 
\label{app:opt1}
\end{mini}
Note that the optimisation problem \eqref{app:opt1} is a linear optimisation problem with convex quadratic and linear constraints. 
When the equality constraints are satisfied, the quadratic terms in the inequality constraints are non-negative, and by setting $t$ to be arbitrarily large, we can see that the inequality constraints in \eqref{app:opt1} can always be strictly satisfied. Since \eqref{app:opt1} is also a convex optimisation problem because of its linear objective function and convex constraint functions, the Slater constraint qualification holds with respect to \eqref{app:opt1}. This implies that the first order Karush-Kuhn-Tucker (\textsc{kkt}) conditions suffices to determine the optimality of \eqref{app:opt1}.
\subsection{Analysing the Lagrangian} \label{subsec:2param-lagrangian}

\noindent
The \textsc{kkt} conditions are stated in terms of the Lagrangian of \eqref{app:opt1}.
The column vector of Lagrange multipliers corresponding to the equality constraints is 
\begin{align}
    {\bf z} &= (z_1, z_2, z_3, z_4, z_5, z_6).
\end{align}
The Lagrangian of \eqref{app:opt1} is
\begin{align}
\mathsf{L}(Y,t,u,v,{\bf z})=&
\; t + 
u \tr{ Y \rho Y^\dagger } - u t + v \tr{ Y^\dagger \rho Y } - v t\notag\\
 &+ z_{1} {\rm Re}\tr{\rho Y} 
  + z_{4} {\rm Im} \tr{\rho Y}  \notag\\
&  + z_2 \left( {\rm Re}\tr{\partial_1 \rho Y} -1 \right)
   + z_5 {\rm Im} \tr{\partial_1 \rho Y} \notag\\
&  + z_3 {\rm Re}\tr{\partial_2 \rho Y} 
   + z_6 \left( {\rm Im} \tr{\partial_2 \rho Y} - 1\right),
\label{eqn:Lagr}
\end{align}
where $u,v$ are non-negative Lagrange multipliers corresponding to the inequality constraints.

There are four types of \textsc{kkt} conditions. First is the stationarity of the derivative of the Lagrangian with respect to the primal variables. 
Second is complementary slackness, which states that the product of the constraint functions~\footnote{A constraint function is $f(x)$ written in the canonical form $f(x)=0$ or $f(x) \le 0$} and their corresponding Lagrange multipliers is always zero. 
Third is the feasibility of the primal variables, and fourth is feasibility of the dual variables.
If these \textsc{kkt} optimality conditions hold, then we can obtain the optimal solution and value of the corresponding optimisation problem. 

Now we use the fact that
\begin{align}
{\rm Re}\tr{\rho Y}  &= \frac{\tr{\rho Y} + \tr{\rho Y^\dagger}}{2} \\
{\rm Im}{\tr{\rho Y} } &= \frac{\tr{\rho Y} - \tr{\rho Y^\dagger}}{2i} \\
{\rm Re}\tr{\partial_j \rho Y}  &= \frac{\tr{\partial_j\rho Y} + \tr{\partial_j\rho Y^\dagger}}{2} \\
{\rm Im}{\tr{\partial_j \rho Y} } &= \frac{\tr{\partial_j\rho Y} - \tr{\partial_j\rho Y^\dagger}}{2i}
.
\end{align}
Using this, it follows that
\begin{align}
\mathsf{L}(Y,t,u,v,{\bf z}) =&
t(1-u-v) - {\bf b}^\top {\bf z} + u \tr{Y \rho Y^\dagger}
\notag\\
 & + v \tr{Y^\dagger \rho Y}+\tr{A Y} + \tr{A^\dagger Y^\dagger},
 \end{align}
 where ${\bf b}$ is the column vector $(0,1,0,0,0,1)$ and 
 \begin{align}
 A = z_1 A_1 + z_2 A_2+ z_3 A_3 + z_4 A_4 + z_5 A_5  + z_6 A_6,
 \end{align}
 where
 \begin{align}
 A_1 =  \frac{ \rho}{2} , \quad A_2 = \frac{\partial_1 \rho}{2}, \quad A_3 = \frac{\partial_2 \rho}{2}
\end{align}
and $\smash{\{A_4, A_5, A_6\} = -i\{A_1, A_2, A_3\}}$.

Before we proceed to derive the Lagrange dual function, we note the following.
\begin{enumerate}
\item We prove that the optimal $t$ must be strictly positive from the positive definiteness of $\rho$.
From the positive definiteness of $\rho$, $t$ is equal to zero if and only if $Y$ is 0, but this would violate the feasibility constraints. Hence $t$ cannot be equal to zero.
\item The stationarity \textsc{kkt} condition requires that the derivative of Lagrangian in Eq.~\eqref{eqn:Lagr} be zero with respect to $t$. From this, we observe that the optimal dual variables must satisfy $u+v=1$.
\item The \textsc{kkt} conditions require that the complementary slackness conditions hold for the inequality constraints in \eqref{app:opt1}.
This means that 
\begin{align}
u \left( \tr{Y \rho Y^ \dagger }  - t\right) = 0 \notag\\
v \left( \tr{Y^\dagger \rho Y }  - t\right) = 0 \label{eq:complementary slackness}
\end{align}
If $\tr{Y \rho Y^ \dagger} \neq \tr{Y^\dagger \rho Y }$, exactly one of the constraints corresponding to $u$ and $v$ must be tight, and complementary slackness implies that the optimal $(u,v)$ must be either $(u,v)=(1,0)$ or $(u,v)=(0,1)$.
This corresponds to the scenario where the \textsc{qcrb} is not equal to the \textsc{hcrb}.
If $\|{\rm Im}Z\|_1 = 0$, $(u,v)=(1,0)$ and $(u,v)=(0,1)$ do not necessarily optimize the value of the Lagrange dual, and in general provide a lower bound to the Lagrange dual. 

However if $\tr{Y \rho Y^ \dagger} = \tr{Y^\dagger \rho Y }$, then the ansatzes $(u,v)=(0,1)$ and $(u,v)=(0,1)$ will not yield tight bounds, because complementary slackness will not further constrain the optimal values of $u$ and $v$.
\end{enumerate}

\subsection{Deriving the Lagrange dual functions}\label{subsec:lagrange-duals}

\noindent
When $(u,v)=(1,0)$, the Lagrangian evaluates to 
\begin{align}
\mathsf{L}(Y,t,1,0,{\bf z}) =&
-  {\bf b}^\top  {\bf z}+ 
 \tr{Y \rho Y^\dagger} 
 +\tr{A Y} + \tr{A^\dagger Y^\dagger},
\end{align}
where $ {\bf b} = (0,1,0,0,0,1)^\top$.
Since $\rho$ is full rank, $\rho$ is invertible. Using Lemma \ref{lem:matrix-minimize}, the above is minimised with respect to $Y$ when
$Y = - A^\dagger \rho^{-1}$
with optimal value
$-\tr{ A^\dagger \rho^{-1} A }.$
In this scenario, the Lagrange dual function of \eqref{app:opt1} evaluated with $(u,v)=(1,0)$ is 
\begin{align} 
g(1,0,{\bf z}) = - \tr{ A^\dagger \rho^{-1} A  }
-  {\bf b}^\top  {\bf z}.
\label{dual-1,0}
\end{align}
Similarly when $(u,v)=(0,1)$, the Lagrangian evaluates to
\begin{align}
\mathsf{L}(Y,t,0,1,{\bf z}) &= 
-  {\bf b}^\top  {\bf z}+ 
 \tr{Y^\dagger \rho Y } +\tr{A Y} + \tr{A^\dagger Y^\dagger},
\end{align}
and is minimised when
$Y = -\rho^{-1} A^\dagger$
with an optimal value of 
$-\tr{ A \rho^{-1} A^\dagger }.$
In this scenario, the Lagrange dual function of \eqref{app:opt1} evaluated with $(u,v)=(0,1)$ is 
\begin{align} 
g(0,1, {\bf z}) = - \tr{ A \rho^{-1} A^\dagger }  
-  {\bf b}^\top  {\bf z}.
\label{dual-0,1}
\end{align}
The Lagrange dual functions 
$g(1,0,{\bf z})$ and $g(0,1,{\bf z})$
can be rewritten in terms of 
the matrices $Q_1$ and $Q_2$ where in the Dirac bra-ket notation, we have
\begin{align}
    Q_1 &= \sum_{j,k=1,\dots,6} \tr{ A_j ^\dagger \rho^{-1}  A_k} |j\>\<k|,\notag\\
    Q_2 &= \sum_{j,k=1,\dots,6} \tr{ A_j \rho^{-1}  A_k ^\dagger} \ket{j}\bra{k}.
    \label{eqn:qmatdefoss}
\end{align}
Here $\ket{j}$ denotes a column vector and $\bra{k}$ denotes a row vector.
The Lagrange dual function that we consider are thus
\begin{align}
    g(1,0, {\bf z}) &= - {\bf z}^\top Q_1  {\bf z} - {\bf b}^\top  {\bf z}, \notag\\
    g(0,1, {\bf z}) &= - {\bf z}^\top Q_2  {\bf z} - {\bf b}^\top  {\bf z} .\label{eq:lagrange-dual-functions}
\end{align}
\subsection{Upper and lower bounds}
\noindent
Using the fact that the Lagrange dual functions  \eqref{eq:lagrange-dual-functions}
and the dual variables are real, 
lower bounds to the \textsc{hcrb} are given by
\begin{align}
    \max_{j=1,2} 
    \max_{{\bf z}\in\mathbb R^6} \left(-{\bf z}^\top {\rm Re}(Q_j) {\bf z} - {\bf b}^\top {\bf z}\right)  ,\label{dual-optimizations}
\end{align}
from which it follows that 
\[
2{\rm Re}( Q_j ){\bf z} + {\bf b} = 0 
\]
is the correct optimality condition to consider.
Thus, when ${\rm Re}(Q_j)$ is full rank, the lower bounds to the \textsc{hcrb} can be written as
\begin{align}
   \max_{j=1,2} l_j, \quad \text{where}\quad l_j = \frac{1}{4}{\bf b}^\top {\rm Re}(Q_j)^{-1} {\bf b}
\end{align}
Interestingly, when $\rho$ is full rank, the matrices ${\rm Re}(Q_j)$ are also full rank. We demonstrate this in the next subsection.

To obtain upper bounds to the \textsc{hcrb}, we appeal to the form of the primal problem is closely related to \eqref{app:opt1}, that has an objective function of 
\begin{align}
\max \{ \tr{Y \rho Y^\dagger } ,  \tr{Y^\dagger \rho Y }\} 
    \label{app:primal-objective-function}.
\end{align}
The upper bounds will be expressed in terms of the dual variables that optimise \eqref{dual-optimizations}, which we can write as ${\bf z}_1 = (z_{1,1},\dots, z_{6,1})$
and 
 ${\bf z}_2 = (z_{1,2},\dots, z_{6,2})$
 where
\begin{align}
    z_{a,j} = 
    -\frac{1}{2}([{\rm Re}(Q_j)^{-1}]_{a2}  +
    [{\rm Re}(Q_j)^{-1}]_{a6}).
\end{align}
and $[{\rm Re}(Q_j)^{-1}]_{ab}$ denotes the matrix element in the $a$th row and $b$th column of the inverse of ${\rm Re}(Q_j)$.

Recall that when $(u,v)=(1,0)$, the optimal solution to $Y$ in minimising the Lagrangian is $-A^\dagger \rho^{-1}$. By choosing $A=A_1 z_{1,1}+ \dots + A_6 z_{6,1}$,
we find that the \textsc{hcrb} is thus upper bounded by
\begin{align}
    P_1 = \max \left\{ \tr{A^\dagger \rho^{-1} A } ,  \tr{ \rho^{-2} A \rho A^\dagger}  \right\}.
\end{align}
Then,
\begin{align}
    P_1 = \max \{ l_1,m_1\}    ,
\end{align}
where
\begin{align}
    m_j = \sum_{a,b=1}^6
    \tr{ \rho^{-2} A_a \rho A_b^\dagger  }
   z_{a,j}z_{b,j}.
\end{align}
When $(u,v)=(0,1)$, the optimal solution to $Y$ in minimising the Lagrangian is $- \rho^{-1}A^\dagger$. In this case the primal objective function is equal to 
\begin{align}
    P_2 = \max \{ l_2,m_2 \} .
\end{align}
Hence the \textsc{hcrb} is at most 
\begin{align}
\min_{j=1,2}\left\{ \max \{l_j,m_j \} \right\}.
\end{align}
This concludes the proof of theorem~\ref{thm:2-param} for bounds on the two-parameter \textsc{hcrb}.

\subsection{Full-rankness of Q}
\label{subsec:full_rank_of_Q}
\noindent
The analytic solution to the \textsc{hcrb} requires ${\rm Re}(Q_j)$ to have full rank such that the solution can be determined. In this subsection, we demonstrate that the full-rankness of the probe state $\rho$ entails the full-rankness of these matrices. Since the regularity conditions of estimation theory require the state to be full-rank, our solution to the \textsc{hcrb} always exists. 

Notice that the matrices $Q_j$ defined in Eq.~\eqref{eqn:qmatdefoss} can be written
\begin{align}
	Q_{1} =  \begin{pmatrix} H &-iH  \\
	iH & H
	\end{pmatrix}, \quad	Q_{2} = \begin{pmatrix} H &iH  \\
	-iH & H
\end{pmatrix},
\end{align}
where $H$ is Gram matrix defined as follows.  We consider the Hilbert-Schmidt inner-product $\langle X,Y \rangle=\mathrm{Tr}[X^\dagger Y]$.  We define the operators 
\begin{align}
		B_1 & = \rho^{-1/2} A_1 = \rho^{-1/2} \rho / 2 = \rho^{1/2} / 2 , \\ 
		B_2 & = \rho^{-1/2} A_2 =   \rho^{-1/2} \delta_1 \rho / 2, \\ 
		B_3 & = \rho^{-1/2} A_3 = \rho^{-1/2} \delta_2 \rho / 2 .
\end{align}	
Then, we have that $H$ is a Gram matrix with respect to this set of operators
\begin{equation}
    H_{i,j} = \langle B_i, B_j \rangle .
 \end{equation}   
As a Gram matrix, it is positive semi-definite. Furthermore, we know that $H$ will be full-rank if and only if the set $\{ B_1, B_2, B_3 \}$ is linearly independent.  We note that $A_1$ cannot be written as a sum of $A_2$ and $A_3$  (since $A_1$ has nonzero trace whereas $A_2$ and $A_3$ are traceless).  Also by a trace-argument, if $\{ A_1, A_2, A_3 \}$ are linearly dependent, we must have that $A_2$ is proportional to $A_3$.  But if $A_2$ and $A_3$ are proportional, then it is really a one-parameter problem and not a two-parameter problem. Hence, $\{ A_1, A_2, A_3 \}$ are linearly independent.  If we assume $\rho$ is full-rank, then $\{ B_1, B_2, B_3 \}=\rho^{-1/2} \{ A_1, A_2, A_3 \}$ is also a linearly independent set.  So full-rankness of $\rho$ entails full-rankness of $H$.  
 
We are now interested in the real part of the $Q_j$ matrices, looking for solutions of $2 {\rm Re}(Q_j) z + b = 0$.  Considering $j=2$
\begin{align}
 	{\rm Re}(Q_{2}) = \left(  \begin{array}{cc} {\rm Re}(H) & - {\rm Im}(H)  \\
 		{\rm Im}(H) & {\rm Re}(H)
 	\end{array} \right) ,
\end{align}   
By performing elementary row operations by taking a linear combination of rows, followed by elementary column operators by taking a linear combination of columns, we get
\begin{align}
 	{\rm Re}(Q_{2}) \rightarrow \left(  \begin{array}{cc} H & iH  \\
 		{\rm Im}(H) & {\rm Re}(H)
 	\end{array} \right) \rightarrow \left(  \begin{array}{cc} H & 0  \\
 		{\rm Im}(H) & H^*
 	\end{array} \right) ,
\end{align}  
where we used ${\rm Re}(H) + i{\rm Im}(H) = H$. Since both rows are linearly independent, ${\rm Re}(Q_j)$ is also always full-rank. Therefore, we have that if the state is full-rank, then so too is the matrix ${\rm Re}(Q_{j})$.

\section{Lower bound in the multi-parameter setting}
\label{app:lower_bounds}

\noindent
By restricting ourselves to the identity weight matrix,
recall that the \textsc{hcrb} is the optimal value of the following optimisation problem over the Hermitian matrices $X_j$ in $\mathbb H_N$ given by
\begin{mini} 
{X_1,\dots ,X_d}{
\tr{ {\rm Re}Z} + \|  {\rm Im} {Z} \|_1  ,}
{}{}
\addConstraint{\Tr{\rho X_j}}{=0}
\addConstraint{\Tr{\partial_j \rho X_k}}{=\delta_{jk}.}
\label{opt0-d}
\end{mini}
For $j,k=\{1,\dots,d\}$, let
 \begin{align}
     w_{j,k} =  
        \tr{\rho X_j X_k} - \tr{\rho X_k X_j}.
        \label{eq:wjk-definition-appendix}
 \end{align}
 
 \subsection{Deriving a lower bound for the objective function}
 \noindent
 In general for a $d$-parameter estimation problem, we have 
 \begin{align}
     \tr{ {\rm Re} Z  } &= 
     \sum_{j=1}^d \tr{X_j \rho X_j^\dagger }  
     \label{ReZ-d}
     \\
     {\rm Im}Z  &= \frac{1}{2i}
     \sum_{\substack{ 1 \le j < k \le d} }
     w_{j,k}( |j\>\<k| - |k\>\<j| ).
     \label{ImZ-d}
 \end{align}
Note that since $Z$ is a Hermitian matrix, ${\rm Im}Z$ is always a skew-Hermitian matrix. For example, when $d=3$, we have
 \begin{align}
     {\rm Im}Z 
     &= \frac{1}{2i}
     \begin{pmatrix}
     0 & w_{1,2} &   w_{1,3} \\
     -w_{1,2} & 0 &  w_{2,3} \\
     -w_{1,3} &  -w_{2,3} & 0\\
     \end{pmatrix}.
 \end{align}
Whenever $d \ge 3$, the trace norm of ${\rm Im}Z$ fails to be a quadratic form in the observables $X_1,\dots, X_d$. Hence, the objective function of the optimisation problem~\ref{opt0-d} fails to be quadratic for us to apply the techniques in Appendix~\ref{subsec:2param-reparam}. 
We can however obtain lower bounds for the trace norm of ${\rm Im}Z$ that do have a quadratic structure, namely by exploiting the following decomposition for the trace norm,  
 \begin{align}
     \| {\rm Im}Z \|_1 = \max\{ \tr  { U {\rm Im}Z   } :   U \mbox{ is a unitary matrix}\}.
 \end{align}
Fortunately, it is possible to pick unitary matrices $U$ such that $\tr  { U {\rm Im}Z}$ are quadratic in the observables $X_1 , \dots , X_d$, which we prove in the following subsection.
We achieve this by constructing unitary matrices labeled by binary vectors ${\bm \alpha} = (\alpha_1, \dots, \alpha_d)$ given by
\begin{align}
    U_{\bm \alpha} 
    = 
    |d\>\<1|(-1)^{\alpha_d} + 
    \sum_{j=1}^{d-1}
    |j\>\<j+1| (-1)^{\alpha_j}.
    \label{Ualpha}
\end{align}
Using the unitary matrices in  
\begin{align}
\mathcal U
=
\left\{
U_{\bm \alpha}
    : {\bm \alpha} \in \{0,1\}^d 
\right\},
\end{align}
we obtain the lower bound
\begin{align}
    \trnorm{ {\rm Im} Z } \ge \max\{  \tr{U {\rm Im} Z  } : U \in \mathcal U \}.
\end{align} 

 \subsection{Recasting the optimisation problem}
 \noindent
In this subsection, we prove the following lemma.
\begin{lemma}
Let $d$ be a positive integer where $d \ge 2$. Now, given a binary vector ${\bm \alpha} = (\alpha_1, \dots, \alpha_d)$,
 let $U_{\bm \alpha}$ be as defined in \eqref{Ualpha}, and let us define 
\begin{align}
V_{\bm \alpha}
    =&
    \frac 1 2 
\tr{( X_d + (-1)^{\alpha_1} i X_{1}) 
\rho ( X_d + (-1)^{\alpha_1} i X_{1})^\dagger}\notag\\
&+
\frac 1 2 \sum_{j=1}^{d-1}
\tr{( X_j + (-1)^{\alpha_j} i X_{j+1}) 
\rho ( X_j + (-1)^{\alpha_j} i X_{j+1})^\dagger},
\label{eq:Vt-defi}
\end{align}
where $X_1,\dots, X_d$ are Hermitian matrices, and $\rho$ is a density matrix. Let
$Z = \sum_{j,k=1}^d\tr{\rho X_j X_k}|j\>\<k| $.
Then 
\begin{align}
\tr{ {\rm Re}Z}
+
\tr{ U_{\bm \alpha} {\rm Im} Z  }
=
V_{\bm \alpha}.
\end{align}
\end{lemma}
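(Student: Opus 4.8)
The plan is to verify the identity by evaluating each side explicitly and matching the diagonal and off-diagonal contributions separately, relying only on the Hermiticity of $X_1,\dots,X_d$ and $\rho$, the cyclicity of the trace, and the closed forms for $\tr{{\rm Re}Z}$ and ${\rm Im}Z$ already recorded in Eqs.~\eqref{ReZ-d} and~\eqref{ImZ-d}.

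First I would compute the left-hand side. Since $\tr{|a\rangle\langle b|\,M}=\langle b|M|a\rangle=M_{ba}$ for any matrix $M$, and $U_{\bm\alpha}$ in Eq.~\eqref{Ualpha} is supported only on the super-diagonal positions $(j,j+1)$ for $j=1,\dots,d-1$ together with the corner $(d,1)$, only the matching entries of ${\rm Im}Z$ survive in the trace:
\[
\tr{U_{\bm\alpha}\,{\rm Im}Z}=(-1)^{\alpha_d}\,[{\rm Im}Z]_{1,d}+\sum_{j=1}^{d-1}(-1)^{\alpha_j}\,[{\rm Im}Z]_{j+1,j}.
\]
Reading the entries off Eq.~\eqref{ImZ-d} gives $[{\rm Im}Z]_{1,d}=w_{1,d}/(2i)$ and $[{\rm Im}Z]_{j+1,j}=-w_{j,j+1}/(2i)$. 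Together with $\tr{{\rm Re}Z}=\sum_{m=1}^{d}\tr{X_m\rho X_m}$ this fixes the left-hand side entirely.

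Next I would expand the quadratic forms that define $V_{\bm\alpha}$. Writing $c=(-1)^{\alpha_j}\in\{\pm1\}$ and using $(X_j+ciX_{j+1})^\dagger=X_j-ciX_{j+1}$, a generic block expands, after taking the trace and using cyclicity, as
\[
\tr{X_j\rho X_j}+\tr{X_{j+1}\rho X_{j+1}}+c\,i\,w_{j,j+1},
\]
with $w_{j,j+1}$ as in Eq.~\eqref{eq:wjk-definition-appendix}; the corner block built from $X_d$ and $X_1$ produces $-(-1)^{\alpha_d}\,i\,w_{1,d}$ by the same computation, since there the cross term carries $w_{d,1}=-w_{1,d}$. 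Summing the diagonal pieces, each $\tr{X_m\rho X_m}$ occurs in exactly two consecutive blocks of the cyclic list, so the prefactor $\tfrac12$ leaves $\sum_{m}\tr{X_m\rho X_m}=\tr{{\rm Re}Z}$; and since $1/(2i)=-i/2$, the cross terms reassemble precisely into the expression for $\tr{U_{\bm\alpha}\,{\rm Im}Z}$ obtained above. Matching the two parts then yields the claim.

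The difficulty here is entirely combinatorial rather than analytic. The two points that require care are first, that the wrap-around block must pair $X_d$ with $X_1$ carrying the sign $(-1)^{\alpha_d}$ so that it aligns with the corner entry $(d,1)$ of $U_{\bm\alpha}$, consistent with the cyclic convention $X_{d+1}=X_1$ of Eq.~\eqref{eq:Vt-defi}; and second, that the factor $\tfrac12$ in $V_{\bm\alpha}$ exactly cancels the double-counting of each diagonal block $\tr{X_m\rho X_m}$. Once these sign and counting conventions are pinned down, the diagonal and off-diagonal parts agree term by term and the lemma follows.
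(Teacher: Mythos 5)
Your proof is correct and follows essentially the same route as the paper's: expand $\tr{U_{\bm \alpha}\,{\rm Im}Z}$ entrywise using the support of $U_{\bm \alpha}$ on the super-diagonal and the $(d,1)$ corner, expand the quadratic blocks of $V_{\bm \alpha}$, and match the diagonal and cross terms through the quantities $w_{j,k}$. You also rightly attach $(-1)^{\alpha_d}$ to the wrap-around block pairing $X_d$ with $X_1$, which is what the paper's own proof and the main-text definition of $V_{\bm \alpha}$ use (the $(-1)^{\alpha_1}$ appearing in the appendix statement of the lemma is a typo, and the identity would fail with it).
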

\begin{proof}
Rewriting \eqref{ImZ-d}, we get
\begin{align}
    {\rm Im} Z =
    \frac{1}{2i}
    \sum_{j,k=1}^d w_{j,k} |j\>\<k|\label{ImZ-version2},
\end{align}
where $w_{j,k}$ is as defined in \eqref{eq:wjk-definition-appendix}. 
Using \eqref{ReZ-d} and \eqref{ImZ-version2}, we can find that 
\begin{align}
&\tr{ {\rm Re}Z}
+
\tr{U_{\bm \alpha} {\rm Im} Z  }
\notag\\
=&
\sum_{j=1}^d
\tr{X_j \rho X_j^\dagger}
+
\frac{1}{2i}\sum_{j=1}^d \sum_{k=1}^d
\tr{w_{j,k}
U_{\bm \alpha}  |j\>\<k|    }.\label{lem:eq1}
\end{align}
Now 
\begin{align}
    &\tr{ U_{\bm \alpha} |j\>\<k|  }\notag\\
    =&
    \tr{ (|d\>\<1|) |j\>\<k|  }(-1)^{\alpha_d}
    +
    \sum_{a=1}^{d-1}
    \tr{ (|a\>\<a+1|) |j\>\<k|  }(-1)^{\alpha_a}
    \notag\\
    =& \delta_{j,1}\delta_{k,d} (-1)^{\alpha_d}
    +
    \sum_{a=1}^{d-1}
    \delta_{j,a+1}\delta_{k,a} (-1)^{\alpha_a}.
\end{align} 
Hence
\begin{align}
\sum_{j,k=1}^d w_{j,k} \tr{ U_{\bm \alpha} |j\>\<k|  }
&= w_{1,d} (-1)^{\alpha_d}
+ \sum_{a=1}^{d-1} w_{a+1,a}(-1)^{\alpha_a} .
\end{align}
 Now note that for any $j,k = 1,\dots,d$, we have 
\begin{align}
\frac{w_{j,k}}{2i} 
    &=
\frac{-i}{2} (\tr{X_k \rho X_j } - \tr{X_j \rho X_k  } )\notag\\
&=
\frac{1}{2} (\tr{X_k \rho (i X_j)^{\dagger} } + \tr{(iX_j) \rho X_k^\dagger  } ).
\end{align}
Hence we get
\begin{align}
  &\frac{1}{2i}  \sum_{j=1}^d \sum_{k=1}^d
    \tr{w_{j,k}
U_{\bm \alpha}  |j\>\<k|   }
\notag\\
=&
\frac{1}{2}\left(  \tr{ X_d \rho (iX_1)^\dagger } + \tr{ (iX_1) \rho X_d^\dagger}  \right) (-1)^{\alpha_d}
\notag\\
&+   
\sum_{k=1}^{d-1} 
\frac{1}{2}\left( \tr{ X_k \rho (iX_{k+1})^\dagger} + \tr{(iX_{k+1})^\dagger \rho X_k^\dagger  } \right)
    (-1)^{\alpha_k} .\label{lem:eq2}
\end{align}
Now 
\begin{align}
    &\frac{1}{2}\tr{
    (X_d + (-1)^{\alpha_d} i X_1)
    \rho
    (X_d + (-1)^{\alpha_d} i X_1)^\dagger
    }\notag\\
    =&
    \frac{1}{2} \tr{X_d \rho X_d^\dagger}
    +
    \frac{1}{2} \tr{X_1 \rho X_1^\dagger}
    \notag\\
    &+
    \frac{1}{2} \tr{X_d \rho
    ((-1)^{\alpha_d} iX_1)^\dagger}
    +
    \frac{1}{2} \tr{
    ((-1)^{\alpha_d} iX_1)
    \rho
    X_d^\dagger},\label{lem:eq3}
\end{align}
and similarly, for all $a=1,\dots , d-1$,
we have
\begin{align}
  &\frac{1}{2}\tr{
    (X_{a} + (-1)^{\alpha_a} i X_{a+1})
    \rho
    (X_a + (-1)^{\alpha_a} i X_{a+1})^\dagger
    }\notag\\
    =&
    \frac{1}{2} \tr{X_a \rho X_a^\dagger}
    +
    \frac{1}{2} \tr{X_{a+1} \rho X_{a+1}^\dagger}
    \notag\\
    &+
    \frac{1}{2} \tr{X_a \rho
    ((-1)^{\alpha_a} iX_{a+1})^\dagger}
    +
    \frac{1}{2} \tr{
    ((-1)^{\alpha_a} iX_{a+1})
    \rho
    X_a^\dagger}.\label{lem:eq4}
\end{align}
From \eqref{lem:eq1}, \eqref{lem:eq2}, \eqref{lem:eq3} and \eqref{lem:eq4}, the lemma follows.
\end{proof} 
To see how this Lemma works explicitly for three parameter ($d=3$) scenario, note that
\begin{align}
  & \tr{ {\rm Re}Z} + \frac i {2} \sum_{(a,b) \in E_{d,1} } w_{a,b}\notag\\
 =& \tr{X_1 \rho X_1^\dagger} +\tr{X_2 \rho X_2^\dagger} +\tr{X_3 \rho X_3^\dagger} \notag\\
 & + \frac i 2 (\tr{X_2 \rho X_1^\dagger} + \tr{X_3 \rho X_2^\dagger} + \tr{X_1 \rho X_3^\dagger} )
 \notag\\
 & -   \frac i 2 (\tr{X_1 \rho X_2^\dagger} + \tr{X_2 \rho X_3^\dagger} + \tr{X_3 \rho X_1^\dagger} )
 \notag\\
 =&
 \frac 1 2 \tr{(X_1+iX_2) \rho (X_1+iX_2)^\dagger} \notag\\
 &
 + \frac 1 2 \tr{(X_2+iX_3)\rho (X_2+iX_3)^\dagger}
 \notag\\
 &+ \frac 1 2 \tr{(X_3+iX_1)\rho (X_3+iX_1)^\dagger}.
\end{align}
Then we can rewrite \eqref{opt0-d} as an optimisation over Hermitian matrices $X_1, \dots, X_d$ where
\begin{mini} 
{X_1,\dots ,X_d}{
\max_{{\bm \alpha }\in\{0,1\}^d} V_{\bm \alpha}  ,}
{}{}
\addConstraint{\Tr{\rho X_j}}{=0}
\addConstraint{\Tr{\partial_j \rho X_k}}{=\delta_{jk}.}
\label{opt1-d}
\end{mini}
We can rewrite with an introduction of an auxiliary variable $t \in \mathbb R$ so that \eqref{opt1-d} is equivalent to 
\begin{mini} 
{X_1,\dots ,X_d,t}{
t ,}
{}{}
\addConstraint{\Tr{\rho X_j}}{=0}
\addConstraint{\Tr{\partial_j \rho X_k}}{=\delta_{jk}}
\addConstraint{V_{\bm \alpha} }{\le t}
\addConstraint{{\bm \alpha} }{\in\{0,1\}^d.}
\label{opt2-d}
\end{mini} 
This minimisation problem can be numerically checked for consistency with the optimisation in Eq.~\eqref{opt0-d}. 

 \subsection{Diagonalising the quadratic forms}
 
\noindent
Now note that the lower bound $V_{\bm \alpha}$ that we have for the objective function is a quadratic function of the optimization variables $X_1+i(-1)^{\alpha_1}X_2, \dots, X_{d-1}+i(-1)^{\alpha_{d-1}}X_d, X_d+i(-1)^{\alpha_d}X_1$, and these optimisation variables depend on the binary vector ${\bm \alpha}$.
 We can alternatively write $V_{\bm \alpha}$ in terms of optimization variables $Y_1,\dots,Y_d$ that are independent of ${\bm \alpha}$. We can quantify the linear dependence of the variables $Y_1,\dots, Y_d$ on the variables $X_1, \dots, X_d$ using the following matrix equation
\begin{align}
{\bf Y}    = {\bf S} {\bf X},
\end{align}
where
\begin{align}
{\bf S} = 
       \begin{pmatrix}
    S_{1,1} {\bf 1}& \dots & S_{1,d}  {\bf 1}\\
    \vdots &  & \vdots \\
    S_{d,1} {\bf 1}& \dots & S_{d,d} {\bf 1} \\
    \end{pmatrix} ,
    \quad 
    {\bf Y} =
    \begin{pmatrix}
    Y_1\\ \vdots \\ Y_d 
    \end{pmatrix},
    \quad
    {\bf X} = 
        \begin{pmatrix}
    X_1\\ \vdots \\ X_d 
    \end{pmatrix}
,
\end{align}
and 
\begin{align}
 S= \begin{cases}
    \sum_{j \in \mathbb Z_d} 
    \left( |j\>\<j|  + i |j\>\<j \oplus 1|   \right)  & d \neq 0 \pmod{4} \\
     \sum_{j \in \mathbb Z_d} 
    \left( |j\>\<j|  + (-1)^{\delta_{j,d}}i|j\>\<j \oplus 1|   \right) & \mbox{otherwise }  \label{def:S-matrix}.
    \end{cases}
\end{align}
When $d$ is not a multiple of 4, $S_{j,j} = 1$ and $S_{j,j+1}=i$ for all $j=1,\dots, d-1$, and $S_{d,d}=1, S_{d,1} =i$, and all other matrix elements of $S$ are zero. 
For instance, when $d=3$,
we have
\begin{align}
    S = 
    \begin{pmatrix}
    1 &  i & 0 \\
    0 & 1 & i \\
    i & 0 & 1 \\
    \end{pmatrix}.
\end{align}
but when $d=4$ we have
\begin{align}
    S = 
    \begin{pmatrix}
    1 &  i & 0 & 0 \\
    0 & 1 & i & 0 \\
    0 & 0 & 1 & i \\
    -i & 0 & 0 & 1 \\
    \end{pmatrix}.
\end{align}
Note that the only difference when $d$ is a multiple of 4 is that we flipped the sign of the bottom-left matrix element.

Given such a set of $Y_j$ variables, it then follows that
\begin{align}
V_{\bm \alpha}
    &=
\frac 1 2 \sum_{j=1}^d
\left( \delta_{0, \bar{\alpha}_j}\tr{Y_j \rho Y_j^\dagger} +  \delta_{1, \bar{\alpha}_j} \tr{Y_j^\dagger \rho Y_j}  \right).
\label{eq:Vt-defi-Ys}
\end{align}
where $\bar{\bm{\alpha}}=\bm{\alpha}$ when $d$ is not a multiple of 4, and when $d$ is a multiple of 4 then $\bar{\bm{\alpha}}$ differs from $\bm{\alpha}$ by simply flipping the last bit.

In the following proposition, we determine when the matrix $S$ is full rank.
\begin{proposition}
Let $d$ be a positive integer, and let $S$ be a matrix as defined in \eqref{def:S-matrix}. 
Then $S$ has full rank.
\end{proposition}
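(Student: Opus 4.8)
The plan is to prove invertibility of $S$ by computing its determinant in closed form and verifying that it never vanishes. The starting observation is that $S$ has the sparsity pattern of the identity plus a single cyclic shift: every diagonal entry equals $1$, every first-superdiagonal entry $S_{j,j+1}$ (for $j=1,\dots,d-1$) equals $i$, and the lone corner entry $S_{d,1}=c$ closes the cycle, with $c=i$ when $d\not\equiv 0\pmod 4$ and $c=-i$ when $d\equiv 0\pmod 4$; all remaining entries vanish. First I would note that, because of this pattern, only two permutations survive in the Leibniz expansion of $\det S$. In each row a nonzero term must select either the diagonal entry or the cyclic superdiagonal entry; selecting an off-diagonal entry in any one row consumes its column and forces the selection to cascade all the way around, so the only admissible permutations are the identity and the full $d$-cycle $\sigma=(1\,2\,\cdots\,d)$.

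Evaluating these two contributions is the core step. The identity contributes $1$, and the $d$-cycle contributes $\operatorname{sgn}(\sigma)\, i^{d-1} c = (-1)^{d-1} i^{d-1} c$, since a $d$-cycle is a product of $d-1$ transpositions and the cycle collects the $d-1$ superdiagonal factors $i$ together with the corner factor $c$. Hence
\[
\det S = 1 + (-1)^{d-1} i^{d-1} c .
\]
I would then finish with a short case analysis modulo $4$. When $d\not\equiv 0\pmod 4$ we set $c=i$, so $\det S = 1 + (-1)^{d-1} i^{d}$, which equals $1+i$, $2$, and $1-i$ for $d\equiv 1,2,3 \pmod 4$ respectively, all nonzero. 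When $d\equiv 0\pmod 4$ we set $c=-i$ and use $(-1)^{d-1}=-1$ together with $i^{d}=1$ to obtain $\det S = 1 + i^{d} = 2\neq 0$. In every case $\det S\neq 0$, so $S$ has full rank.

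The conceptual subtlety worth flagging is precisely why the definition twists the corner sign when $4\mid d$: had we kept $c=i$ for $d\equiv 0\pmod4$, the formula would give $\det S = 1 + (-1)^{d-1} i^{d} = 1-1 = 0$, so the untwisted cyclic matrix is singular exactly in the multiple-of-four case. Thus the main obstacle is not any hard computation but the bookkeeping of the three interacting factors $(-1)^{d-1}$, $i^{d-1}$ and $c$, and recognizing that the $\delta_{j,d}$ twist in \eqref{def:S-matrix} is calibrated to cancel the one residue class where the determinant would otherwise vanish. An equivalent route, useful as a cross-check, is spectral: writing $S=\mathbbm{1}_d + iN$ with $N$ the (possibly sign-twisted) cyclic shift, one has $N^{d}=\mathbbm{1}_d$ when $d\not\equiv 0\pmod4$ and $N^{d}=-\mathbbm{1}_d$ when $d\equiv 0\pmod4$; then $S$ is singular iff $i$ is an eigenvalue of $N$, i.e. iff $i^{d}=1$ in the first case or $i^{d}=-1$ in the second, neither of which holds for the corresponding residue classes.
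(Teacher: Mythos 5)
Your proof is correct, and it takes a genuinely different route from the paper. The paper splits into two cases with two unrelated arguments: for $d\not\equiv 0\pmod 4$ it exploits the fact that $S$ is circulant, diagonalises it in the Fourier basis, and observes that an eigenvalue $1+i\omega^k$ can vanish only if $\omega^k=i$, which requires $4\mid d$; for $d\equiv 0\pmod 4$ (where the sign twist destroys circulance) it instead notes that the first $d-1$ rows are upper triangular and rules out the last row being a combination of them via the recursion $c_{j+1}=-ic_j$, reaching a contradiction at $c_{d-1}$. You bypass both by computing $\det S$ in closed form from the Leibniz expansion: since only the identity permutation and the full $d$-cycle survive the sparsity pattern, $\det S = 1+(-1)^{d-1}i^{d-1}c$, and a short residue check mod $4$ shows this is $1+i$, $2$, $1-i$, or $2$, never zero. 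Your approach is more uniform (one formula covering both cases), yields the explicit determinant, and makes transparent \emph{why} the corner sign must be flipped exactly when $4\mid d$ --- the untwisted determinant would be $1+(-1)^{d-1}i^d=0$ there --- a point the paper's definition leaves unmotivated. Your spectral cross-check ($S=\mathbbm{1}_d+iN$ with $N^d=\pm\mathbbm{1}_d$, singular iff $i$ is an eigenvalue of $N$) in effect extends the paper's circulant argument to the twisted case as well, unifying what the paper treats separately. The paper's Fourier argument does buy the full spectrum of $S$ in the untwisted case, but that extra information is not used anywhere, so nothing is lost by your shorter route.
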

\begin{proof}
First, we consider the case when $d$ is not a multiple of 4. Since $S$ is a circulant matrix, its eigenvectors are the Fourier modes 
$|\phi_k\> = \sum_{j=0}^{d-1} \omega^{jk} |j\>$
where $k=0,1,\dots,d-1$, and $\omega = \exp(2\pi i /d)$ is a root of unity. 
The only way to get $S |\phi_k\> = {\bf 0}$ is to have $\omega^{k} = i$ for some integer $k$,
but this is only possible if $d$ divides 4. So if $d$ does not divide 4, we cannot have $S |\phi_k\> = {\bf 0}$, which implies that $S$ does not have any zero eigenvalues. Hence $S$ is full rank.

Next, we consider the case when $d$ is a multiple of 4.  The first $d-1$ rows of $S$ form an upper triangular matrix and are therefore linearly independent.  We just need to show that the last row is linearly independent from the rest. Let us denote the $j$th row of $S$ by $s_j$, where $j$ goes from 1 to $d$. Consider an arbitrary sum of the first $d-1$ rows of the form
\begin{align}
v =   \sum_{j=1}^{d-1} c_j s_j .
\end{align}
We wish to know if there exists a choice of constants $c_j$ such that $v = s_d$ where our definition of $S$ (when $d$ is a multiple of 4) has
\begin{align}
s_d = (-i,0,\ldots,0, 1) .
\end{align}
By setting the $j^{\mathrm{th}}$ element of $v$ equal to the $j^{\mathrm{th}}$ element of $s_d$, we obtain an equation for each $j$
\begin{align}
    c_1 & = -i \\
    i c_j + c_{j+1} & = 0 \\
    i c_{d-1} & = 1 
\end{align}
with the middle line holding for all $1 \leq j < d-1 $.  It is simple to confirm that there does not exist a solution to this set of equations.  In particular, we have the recursive equation $c_{j+1}=(-i)c_j$ with initial condition $c_1=(-i)$ and this solves to $c_{j}=(-i)^j$.  This entails $c_{d-1}=(-i)^{d-1}=i(-1)^d=i$ when $d$ is a multiple of 4.  However, this contradicts $i c_{d-1}=1$ and so no solution exists.  Therefore, the last row $S$ is linearly independent from the rest and the matrix is full-rank.
%
%
%
\end{proof}
From the above proposition, we see that whenever $d$ is not a multiple of 4, the matrix $S$ is full rank, which implies that $T=S^{-1}$ exists. In this scenario, we can write
\begin{align}
  T=
        \begin{pmatrix}
    T_{1,1} & \dots & T_{1,d}  \\
    \vdots &  & \vdots \\
    T_{d,1} & \dots & T_{d,d}  \\
\end{pmatrix},
\end{align}
and it follows that for every $k=1,\dots, d$,
we can express the Hermitian observables $X_1,\dots, X_d$ as linear combinations of the matrices $Y_1,\dots , Y_d$.
\begin{align}
    X_k = \sum_{\ell=1}^d T_{k,\ell} Y_\ell.
    \label{eq:X-to-Y}
\end{align}
Recall that $\rho_j = \partial_j\rho$ for $j=1,\dots,d $. From \eqref{eq:X-to-Y} and the Hermiticity of $X_k$, we recast the equality constraints in \eqref{opt2-d} as
\begin{align}
 c_{0,k} ({\bf Y}) &=   \frac 1 2 \sum_{\ell=1}^d \left( 
        T_{k,\ell} \tr{\rho Y_\ell}  +  T_{k,\ell}^* \tr{\rho Y_\ell^\dagger} 
   \right)= 0,\\
 c_{j,k} ({\bf Y}) &= \frac 1 2 \sum_{\ell=1}^d 
   \left(
        T_{k,\ell} \tr{\rho_j Y_\ell}  +  T_{k,\ell}^* \tr{\rho_j Y_\ell^\dagger}
   \right)-  \delta_{j,k}= 0.
\end{align}
Since the variables $Y_\ell$ are non-Hermitian in general, we need to impose additional constraints, namely the fact that the corresponding $X_k$ are Hermitian.
The Hermiticity of $X_k$ implies from \eqref{eq:X-to-Y} that 
\begin{align}
  \sum_{\ell=1}^d \left(T_{k,\ell} Y_\ell - T_{k,\ell}^* Y_\ell^\dagger \right)
  &= 0. \label{eq:hermitian-constraint-1}
\end{align}
The left side of \eqref{eq:hermitian-constraint-1} is in general an antihermitian matrix, and to make it Hermitian, we multiply both sides by $i$ to get
\begin{align}
 H_k({\bf Y}) = i \sum_{\ell=1}^d \left(T_{k,\ell} Y_\ell - T_{k,\ell}^* Y_\ell^\dagger \right)
  &= 0. \label{eq:hermitian-constraint-2}
\end{align}
With all these constraints, we recast the optimisation problem \eqref{opt2-d} as the following optimisation problem.
\begin{mini} 
{Y_1,\dots ,Y_d,t}{
t ,}
{}{}
\addConstraint{  c_{0,k}({\bf Y}) }{=0}
\addConstraint{  c_{j,k}({\bf Y}) }{=0}
\addConstraint{\frac 1 2 \sum_{\ell=1}^d
\left( \delta_{0,\alpha_\ell}\tr{Y_\ell \rho Y_\ell^\dagger} +  \delta_{1,\alpha_\ell} \tr{Y_\ell^\dagger \rho Y_\ell}  \right) }{\le t} 
\addConstraint{ H_k({\bf Y})  }{=0}
\addConstraint{{\bm \alpha} \in\{0,1\}^d}{.}
\label{opt3-d-Y}
\end{mini}

\subsection{Analysis on the Lagrangian}
\noindent
Here we consider the constraints in \eqref{opt3-d-Y} over $j,k=1,\dots, d$ and
$\alpha_1,\dots, \alpha_d = 0,1$, which gives us a total of $d(d+1)$ regular equality constraints, $d$ matrix equality constraints, and $2^d$ regular inequality constraints.
The Lagrangian corresponding to \eqref{opt3-d-Y} can then be written as 
\begin{align}
{\rm L}_d =&
   t
+ \sum_{j=0}^d \sum_{k=1}^d z_{j,k} c_{j,k}({\bf Y})
+ \sum_{k=1}^d \tr{\xi_k H_k({\bf Y}) } \notag\\
&+
 \frac 1 2 \sum_{{\bm \alpha}\in \{0,1\}^d} v_{\bm \alpha} \sum_{\ell=1}^d
\left( \delta_{0,\bar{\alpha}_\ell}\tr{Y_\ell \rho Y_\ell^\dagger} +  \delta_{1,\bar{\alpha}_\ell} \tr{Y_\ell^\dagger \rho Y_\ell}  \right) 
\notag\\
&- \sum_{{\bm \alpha}\in \{0,1\}^d} v_{\bm \alpha} t.
\label{eq:multiparameter-lagrangian}
\end{align}
Here, the Lagrange multipliers $z_{j,k}$ are real numbers while the Lagrange multipliers $v_{\bm \alpha}$ are non-negative numbers.
The Lagrange multipliers $\xi_k$ are Hermitian matrices in $\mathbb H_N$.
Note that the multiparameter Lagrangian is a quadratic form in ${\bf Y}$, and as such, can be minimised using Lemma \ref{lem:matrix-minimize}.
Before for we do so, we consider the minimisation of the Lagrangian with respect to the primal variable $t$.

If the Lagrangian multiplier $v_{\bm \alpha}$ do not all sum to one, by picking $t$ to either approach positive or negative infinity, the Lagrangian L$_d$ becomes unbounded. Hence the optimal multipliers $v_{\bm \alpha}$ must sum to one. 
By picking a discrete set of values of $v_{\bm \alpha}$ where $v_{\bm \alpha}$ is equal to zero to all but one value of ${\bm \alpha}$, and maximising the Lagrange dual function for each of these cases, we can obtain our lower bound to the multi-parameters \textsc{hcrb}.

Hence without loss of generality, there is some value of the binary vector ${\bm \alpha}$ for which the effective Lagrangian that we need to consider is 
\begin{align}
{\rm L}_{d,{\bm \alpha}} =& 
 \sum_{j=0}^d \sum_{k=1}^d z_{j,k} c_{j,k}({\bf Y})
+ \sum_{k=1}^d \tr{\xi_k H_k({\bf Y}) } \notag\\
&+
 \frac 1 2 \sum_{\ell=1}^d
\left( \delta_{0,\bar{\alpha}_\ell}\tr{Y_\ell \rho Y_\ell^\dagger} +  \delta_{1,\bar{\alpha}_\ell} \tr{Y_\ell^\dagger \rho Y_\ell}  \right)  .\label{eq:multiparameter-lagrangian-bft}
\end{align}
Now define 
\begin{align}
\Gamma_\ell &=
     \sum_{k=1}^d 
    T_{k,\ell}
    \left( 
        \sum_{j=0}^d z_{j,k} \rho_j  
        +
        i \xi_k
    \right).
\end{align}
By rewriting the terms on the first line on the right side
of~\eqref{eq:multiparameter-lagrangian-bft}, the effective Lagrangian becomes
\begin{align}
{\rm L}_{d,{\bm \alpha}} =& 
- \sum_{j=1}^d z_{j,j} + 
\frac 1 2 \sum_{\ell =1}^d
\left( 
    \tr{ \Gamma_\ell Y_\ell } + \tr{ \Gamma_\ell^\dagger Y_\ell^\dagger }
    \right.
    \notag\\
    &\quad + \left. \delta_{0,\bar{\alpha}_\ell}\tr{Y_\ell \rho Y_\ell^\dagger} +  \delta_{1,\bar{\alpha}_\ell} \tr{Y_\ell^\dagger \rho Y_\ell} \right).
\end{align}
Then, given that $\rho$ is a Hermitian full rank matrix, we can use Lemma \ref{lem:matrix-minimize} to get the corresponding Lagrange dual to be 
\begin{align}
g_{\bm \alpha}
=&
    \min_{{\bf Y}} {\rm L}_{d,{\bm \alpha}} \notag\\
=&
- \sum_{j=1}^d z_{j,j} 
- \sum_{\ell=1}^d
\frac{
\delta_{0,\bar{\alpha}_\ell} \tr{\Gamma_\ell \rho^{-1} \Gamma_\ell^\dagger}
+
\delta_{1,\bar{\alpha}_\ell} \tr{\Gamma_\ell^\dagger \rho^{-1} \Gamma_\ell}
}{2}.
\end{align}
 Our lower bound to the \textsc{hcrb} is thus 
\begin{align}
&\max_{{\bm \alpha} \in\{0,1\}^d }\max \{    g_{\bm \alpha}: z_{j,k}\in \mathbb R, \xi_k \in \mathbb H_D \},
\end{align}  
where $j=\{0,\dots, d\}, k=\{1,\dots, d\}$. Any feasible value of $g_{\bm \alpha}$ yields a lower bound to the \textsc{hcrb}.

\section{Minimising the Lagrangian}
\label{app:app_lag_min_arb_u}
\noindent
In this appendix, we extend our formalism to account for arbitrary values of the Lagrange dual variable $u$. When $u+v = 1$, we minimise the Lagrangian which we recall has the form
\begin{widetext}
\begin{align}
    \mathsf{L}(Y,u,{\bf z}) = -{\bf b}^\top {\bf z} 
    + u \tr{Y \rho Y^\dagger}
    + (1-u) \tr{Y^\dagger \rho Y}
    + \tr{A Y}
    + \tr{A^\dagger Y^\dagger},
\label{eqn:orig_Lagrangain}    
\end{align}
where 
\begin{align}
A = (z_1 \rho + z_2 \rho_1 + z_3 \rho_2 -    
i z_4 \rho -i z_5 \rho_1 -i z_6 \rho_2)/2.
\label{eq:A-mat}
\end{align}
Recall the notation $\rho_j=\partial_j\rho, j \in \{1,2\}$. The Frechet derivative of the Lagrangian in the matrix direction $H$ is given by
\begin{align}
    \nabla_Y(\mathsf{L}, H) = \lim_{h \to 0} \frac{ \mathsf{L}(Y + h H , u,{\bf z}) - \mathsf{L}(Y,u,{\bf z})  }{h}.
\end{align}
\begin{lemma}
Let $\mathsf{L}$ be the Lagrangian as defined in \eqref{eqn:orig_Lagrangain} and $A$ be as given in \eqref{eq:A-mat}.
Suppose that 
\begin{align}
    u Y \rho + (1-u) \rho Y + A^\dagger & =0.
    \label{eq:sylvester-equation}
\end{align}
Then $    \nabla_Y(L, H) = 0$.
\end{lemma}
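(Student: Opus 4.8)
The plan is to compute the Fr\'echet derivative $\nabla_Y(\mathsf{L},H)$ term by term and show that the stationarity hypothesis \eqref{eq:sylvester-equation} forces it to vanish for every direction $H$. The constant term $-{\bf b}^\top {\bf z}$ contributes nothing, so I would differentiate the remaining four terms by a direct appeal to Lemma~\ref{lem:Fr\'echet-derivatives}, using that $\rho$ is Hermitian so $\rho^\dagger = \rho$. Concretely, the two quadratic terms give $u\tr{\rho Y^\dagger H + Y\rho H^\dagger}$ and $(1-u)\tr{Y^\dagger \rho H + \rho Y H^\dagger}$, while the two linear terms give $\tr{AH}$ and $\tr{A^\dagger H^\dagger}$.

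Next I would collect all pieces, using linearity and cyclicity of the trace, into a part proportional to $H$ and a part proportional to $H^\dagger$. The $H$-part carries coefficient $u\rho Y^\dagger + (1-u) Y^\dagger \rho + A$, while the $H^\dagger$-part carries coefficient $u Y\rho + (1-u)\rho Y + A^\dagger$. The latter is precisely the left-hand side of the assumed Sylvester equation \eqref{eq:sylvester-equation}, hence zero. For the former, I would take the Hermitian adjoint of \eqref{eq:sylvester-equation}: since $\rho^\dagger=\rho$ and the $z_i$ are real with $\rho,\rho_1,\rho_2$ Hermitian (so that $A$ is the adjoint of $A^\dagger$), the adjoint reads $u\rho Y^\dagger + (1-u) Y^\dagger \rho + A = 0$, which is exactly the $H$-coefficient. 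Both contributions therefore vanish, giving $\nabla_Y(\mathsf{L},H)=0$ for arbitrary $H$.

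The computation is essentially mechanical, so I do not expect a deep obstacle. The only point needing care is the bookkeeping when matching the two trace groupings to the Sylvester equation and its adjoint: one must track which factor is conjugated in each term and invoke the Hermiticity of $\rho$ to recognise that the $H$-coefficient is genuinely the adjoint of the $H^\dagger$-coefficient, rather than an independent stationarity condition. Once this identification is made, the result is immediate, and it simultaneously justifies calling \eqref{eq:sylvester-equation} the defining equation of the Lagrangian's minimiser, since the objective is convex and differentiable in $Y$.
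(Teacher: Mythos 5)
Your proposal is correct and follows essentially the same route as the paper: both compute the Fr\'echet derivative via Lemma~\ref{lem:Fr\'echet-derivatives}, use cyclicity of the trace to collect the result as $\tr{B^\dagger H}+\tr{B H^\dagger}$ with $B = uY\rho+(1-u)\rho Y + A^\dagger$, and conclude from $B=0$ (and hence $B^\dagger=0$) that the derivative vanishes. Your explicit remark that the $H$-coefficient is the adjoint of the $H^\dagger$-coefficient, relying on the Hermiticity of $\rho,\rho_1,\rho_2$ and the reality of the $z_i$, is exactly the observation the paper makes when it writes $B^\dagger$.
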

\begin{proof}
Notice that
\begin{align}
    \nabla_Y(\mathsf{L}, H) &= 
      u ( \tr{H \rho Y^\dagger}
    +  \tr{Y \rho H^\dagger} )
    + (1-u) ( \tr{H^\dagger \rho Y}
    +  \tr{Y^\dagger \rho H} )
    + \tr{A H}
    + \tr{A^\dagger H^\dagger}. 
\end{align}
Now we use the cyclic property of the trace to write
\begin{align}
    \nabla_Y(\mathsf{L}, H) 
    &= 
      u( \tr{\rho Y^\dagger H}
    +  \tr{Y \rho H^\dagger} )
    + (1-u) ( \tr{ \rho Y H^\dagger}
    +  \tr{Y^\dagger \rho H} )
    + \tr{A H}
    + \tr{A^\dagger H^\dagger}
    \notag\\
    &=
      \tr{ B^\dagger H}
    + \tr{ B H^\dagger},
\end{align}
where
$B = u Y \rho + (1-u) \rho Y + A^\dagger$ 
and 
$B^\dagger = u \rho Y^\dagger + (1-u) Y^\dagger \rho + A$.
Since $B= 0$ by the assumption of our lemma, we must have $B^\dagger = 0$, and it follows that
$\nabla_Y(\mathsf{L}, H)=0$.
\end{proof}

Notice that Eq.~\eqref{eq:sylvester-equation} is a Sylvester equation, and solving it is a standard procedure, where a variant of the Bartel-Stewart algorithm can apply. When $\rho$ is a full-rank matrix,
we can solve this equation analytically. 
Let $\rho$ have the spectral decomposition $\rho = \sum_j p_j \ket{\smash{e_j}}\bra{\smash{e_j}}$ where $\ket{\smash{e_j}}$ are normalised eigenvectors of $\rho$.
In this case, \eqref{eq:sylvester-equation} is equivalent to 
\begin{align}
    u \sum_{j,k} \<e_j|Y|e_k\>|e_j\>\<e_k| p_k
    +
    (1-u) \sum_{j,k} p_j|e_j\>\<e_k| \<e_j|Y|e_k\>
    +
     \sum_{j,k}  |e_j\>\<e_k| \<e_j|A^\dagger |e_k\> = 0.
\end{align}
Simplifying this we get
\begin{align}
     \sum_{j,k} \bra{\smash{e_j}}Y\ket{e_k}
     ( u p_k + (1-u) p_j)
     \ket{\smash{e_j}}\bra{e_k} 
    =
     -\sum_{j,k}  \ket{\smash{e_j}}\bra{e_k} \bra{\smash{e_j}}A^\dagger \ket{e_k}, 
\end{align}
from which it follows that 
\begin{align}
   \<e_j|Y|e_k\>  =  -  ( u p_k + (1-u) p_j)^{-1} \<e_j|A^\dagger |e_k\>.
\end{align}
The following lemma then follows. 
\begin{lemma}
\label{lem:Ymin-v1}
The $Y$ that minimises the Lagrangian is given by 
\begin{align}
 Y=  
    -\sum_{j,k} 
  (u p_k + (1-u) p_j)^{-1}
  \bra{\smash{e_j}} A^{\dagger}\ket{e_k}
  \ket{\smash{e_j}}\bra{e_k}.   
\end{align}
\end{lemma}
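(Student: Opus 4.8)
The plan is to exploit the fact that, for fixed dual vector ${\bf z}$ and fixed $u\in[0,1]$, the Lagrangian $\mathsf{L}(Y,u,{\bf z})$ in \eqref{eqn:orig_Lagrangain} is a \emph{strictly convex} quadratic function of the complex matrix $Y$, so that it possesses a unique global minimiser which coincides with its unique stationary point. The stationarity lemma proved immediately above already shows that the Fr\'echet derivative $\nabla_Y(\mathsf{L},H)$ vanishes in every direction $H$ precisely when $Y$ solves the Sylvester equation \eqref{eq:sylvester-equation}. Hence the three steps I would carry out are: (i) verify strict convexity, so that this stationary point is genuinely the minimiser; (ii) invoke the stationarity lemma to reduce the problem to \eqref{eq:sylvester-equation}; and (iii) solve that Sylvester equation explicitly in the eigenbasis of $\rho$.

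For step (i) I would expand the two quadratic terms of $\mathsf{L}$ in the spectral basis $\rho=\sum_j p_j\ket{e_j}\bra{e_j}$. A short computation using $\tr{Y\rho Y^\dagger}=\sum_{j,k}p_k|\<e_j|Y|e_k\>|^2$ and $\tr{Y^\dagger\rho Y}=\sum_{j,k}p_j|\<e_j|Y|e_k\>|^2$ gives
\begin{align}
u\tr{Y\rho Y^\dagger}+(1-u)\tr{Y^\dagger\rho Y}
&=\sum_{j,k}\bigl(u p_k+(1-u)p_j\bigr)\bigl|\<e_j|Y|e_k\>\bigr|^2 .
\end{align}
Since $\rho$ is full rank we have $p_j>0$ for all $j$, and because $u\in[0,1]$ each coefficient $up_k+(1-u)p_j$ is a convex combination of two positive numbers, hence strictly positive. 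The quadratic part is therefore a positive-definite form in the entries of $Y$, while the remaining terms $-{\bf b}^\top{\bf z}$, $\tr{AY}$ and $\tr{A^\dagger Y^\dagger}$ are respectively constant and real-affine in $Y$. Thus $\mathsf{L}$ is strictly convex, and vanishing of the Fr\'echet derivative is both necessary and sufficient for the unique global minimum.

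For steps (ii)--(iii), the stationarity lemma tells me the minimiser solves \eqref{eq:sylvester-equation}. Projecting that equation onto $\<e_j|\,\cdot\,|e_k\>$ turns the left- and right-multiplications by $\rho$ into scalar factors $p_j$ and $p_k$, so the matrix equation decouples into the scalar relations $(up_k+(1-u)p_j)\<e_j|Y|e_k\>=-\<e_j|A^\dagger|e_k\>$. Because every coefficient is strictly positive, each matrix element is uniquely determined, and reassembling $Y=\sum_{j,k}\<e_j|Y|e_k\>\ket{e_j}\bra{e_k}$ yields exactly the claimed expression.

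The only point requiring genuine care---and the step I expect to be the main obstacle---is the upgrade in step (i) from ``stationary point'' (all the preceding lemma delivers) to ``global minimiser'': one must check that the quadratic form is positive definite rather than merely positive semidefinite, as otherwise the minimiser need neither exist nor be unique. This, together with the well-definedness of the denominators $up_k+(1-u)p_j$ appearing in the final formula, hinges entirely on the full-rankness of $\rho$ and on $u\in[0,1]$; granting these, the remainder of the argument is routine linear algebra.
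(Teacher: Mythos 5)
Your proof is correct and takes essentially the same route as the paper: use the stationarity lemma to reduce the problem to the Sylvester equation \eqref{eq:sylvester-equation}, then project onto the eigenbasis of $\rho$ so that the equation decouples into scalar relations $(up_k+(1-u)p_j)\<e_j|Y|e_k\> = -\<e_j|A^\dagger|e_k\>$ with strictly positive coefficients. Your explicit verification of strict convexity via $u\tr{Y\rho Y^\dagger}+(1-u)\tr{Y^\dagger\rho Y}=\sum_{j,k}\bigl(up_k+(1-u)p_j\bigr)\lvert\<e_j|Y|e_k\>\rvert^2$ is a sound tightening of a point the paper leaves implicit (it appeals to the convexity of the overall program), but it does not alter the substance of the argument.
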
 
Crucially, we can write the $Y$ that minimises the Lagrangian as a linear combination of the Lagrange multipliers ${\bf z}$.
\begin{lemma}
\label{lem:Ymin-v2}
The $Y$ that minimises the Lagrangian is given by 
\begin{align}
    Y = 
    z_1 \gamma_1 +
    z_2 \gamma_2 + 
    z_3 \gamma_3 
    +i z_4 \gamma_1 
    +i z_5 \gamma_2 
    +i z_6 \gamma_3.
\label{eqn:ansatz_form}    
\end{align}
where the complex matrices
\begin{align}
 \gamma_1 &=  - {\bf 1}_d/2,\\
 \gamma_2 &=  - \sum_{j,k} 
  (u p_k + (1-u) p_j)^{-1}
  \<e_j| \rho_1/2 |e_k\>  |e_j\>\<e_k|,\\
 \gamma_3 &=  - \sum_{j,k} 
  (u p_k + (1-u) p_j)^{-1}
  \<e_j| \rho_2/2 |e_k\>  |e_j\>\<e_k|.
\end{align}
\end{lemma}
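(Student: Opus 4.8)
The plan is to prove Lemma~\ref{lem:Ymin-v2} by pure substitution: I would insert the explicit form of $A^\dagger$ into the minimiser $Y$ found in Lemma~\ref{lem:Ymin-v1}, and then reorganise the resulting sum by collecting terms according to the six real Lagrange multipliers $z_1,\dots,z_6$. No new optimisation is required, since Lemma~\ref{lem:Ymin-v1} has already solved the Sylvester equation; what remains is linear bookkeeping.

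First I would compute $A^\dagger$ from Eq.~\eqref{eq:A-mat}. Using that $\rho$, $\rho_1$ and $\rho_2$ are Hermitian and that the $z_i$ are real, Hermitian conjugation flips the sign of each imaginary unit, giving
\begin{align}
A^\dagger = \frac{1}{2}\left( z_1 \rho + z_2 \rho_1 + z_3 \rho_2 + i z_4 \rho + i z_5 \rho_1 + i z_6 \rho_2 \right).
\end{align}
Substituting this into $Y = -\sum_{j,k}(u p_k + (1-u)p_j)^{-1}\langle e_j|A^\dagger|e_k\rangle\,|e_j\rangle\langle e_k|$ and using linearity splits $Y$ into six contributions, one per multiplier. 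The contributions built from $\rho_1$ and $\rho_2$ are immediate: the coefficient of $z_2$ reproduces exactly the defining expression for $\gamma_2$, that of $z_3$ reproduces $\gamma_3$, while the coefficients of $z_5$ and $z_6$ equal $i\gamma_2$ and $i\gamma_3$ respectively, since the extra factor of $i$ is carried through unchanged and $\rho_1$, $\rho_2$ enter identically.

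The only terms that require genuine simplification are those proportional to $z_1$ and $z_4$, which involve $\rho$ itself. Here I would use that $\{|e_j\rangle\}$ is the eigenbasis of $\rho$, so $\langle e_j|\rho|e_k\rangle = p_j\delta_{jk}$, leaving only diagonal terms for which the denominator collapses: $u p_j + (1-u)p_j = p_j$. The $z_1$ term then reduces to
\begin{align}
-\frac{z_1}{2}\sum_j \frac{p_j}{p_j}\,|e_j\rangle\langle e_j| = -\frac{z_1}{2}\,\mathbf{1}_d = z_1\gamma_1,
\end{align}
which both confirms $\gamma_1 = -\mathbf{1}_d/2$ and, by the same computation with the extra factor of $i$, gives $i z_4\gamma_1$ for the $z_4$ term. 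Summing all six contributions reproduces Eq.~\eqref{eqn:ansatz_form} exactly.

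Since the argument is a substitution followed by linear collection of terms, there is no real obstacle. The only points demanding care are tracking the signs of the imaginary units under Hermitian conjugation, and verifying the cancellation of the spectral weights $p_j$ in the $\rho$-sector; it is precisely this cancellation that forces $\gamma_1$ to be proportional to the identity rather than a nontrivial operator, a feature that distinguishes the $\rho$-terms from the $\rho_1$- and $\rho_2$-terms.
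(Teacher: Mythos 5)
Your proposal is correct and follows essentially the same route as the paper: substitute $A^\dagger$ from Eq.~\eqref{eq:A-mat} into the minimiser of Lemma~\ref{lem:Ymin-v1}, collect terms by the real multipliers (the paper groups them as $z_1+iz_4$, $z_2+iz_5$, $z_3+iz_6$, which is the same bookkeeping), and use $\langle e_j|\rho|e_k\rangle=p_j\delta_{jk}$ together with $up_j+(1-u)p_j=p_j$ to collapse the $\rho$-sector to $\mathbf{1}_d/2$. The sign tracking under Hermitian conjugation and the cancellation of the spectral weights are handled exactly as in the paper's proof.
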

\begin{proof}
Recall the definition of $A$ in Eq.~\eqref{eq:A-mat}. Then, through Lemma~\ref{lem:Ymin-v1}, we find
\begin{align}
\begin{split}
 Y = &-
    (z_1 + i z_4)\sum_{j,k} 
  (u p_k + (1-u) p_j)^{-1}
  \<e_j| \rho/2 |e_k\>  |e_j\>\<e_k|
  \\
   &- (z_2 + i z_5)\sum_{j,k} 
  (u p_k + (1-u) p_j)^{-1}
  \<e_j| \rho_1/2 |e_k\>  |e_j\>\<e_k|
  \\
&-    (z_3 + i z_6)\sum_{j,k} 
  (u p_k + (1-u) p_j)^{-1}
  \<e_j| \rho_2/2 |e_k\>  |e_j\>\<e_k|.
\end{split}
\label{eqn:something}
\end{align}
Now we can make the simplification
\begin{align}
    \sum_{j,k} 
  (u p_k + (1-u) p_j)^{-1}
  \<e_j| \rho/2 |e_k\>  |e_j\>\<e_k|
  =
  \sum_{j} 
  (u p_j + (1-u) p_j)^{-1}
  p_j/2  |e_j\>\<e_j|  ={\bf 1}_d/2.
\end{align}
Hence, the result follows.
\end{proof}

By substituting the optimal value of $Y$ back into the Lagrangian, we find that Lagrangian is a quadratic in ${\bf z}$. Namely, we have the following.
\begin{lemma}
For fixed $u$ such that $0<u<1$, and where ${\bf z} = (z_1,\dots, z_6) \in \mathbb R^6$, the Lagrange dual of our Lagrangian is
\begin{align}
g(u,{\bf z})
=
    - {\bf b}^\top {\bf z}
+ {\bf z}^\top {\mathcal Q} {\bf z},
\end{align}
where
\begin{align}
    {\mathcal Q} = u {\rm Re}({\mathcal Q_1}) + (1-u) {\rm Re}({\mathcal Q_2})  + {\rm Re}({\mathcal Q_3}),
\end{align}
and
\begin{align}
{\mathcal Q_1} = 
 \begin{pmatrix}
   G_1 &  - i G_1 \\
 i G_1 &  G_1 \\
\end{pmatrix},
\quad
{\mathcal Q_2} = 
 \begin{pmatrix}
   G_2 &  i G_2 \\
- i G_2 &  G_2 \\
\end{pmatrix},
\quad
{\mathcal Q_3}=
    \frac{1}{2}
    \begin{pmatrix}
    G_3 + G_3^* & i(G_3 - G_3^*) \\
    -i(G_3 - G_3^*) & G_3 + G_3^* \\
    \end{pmatrix},
\label{eqn:gamma_J_matrices}
\end{align}
and
\begin{align}
G_1=      \begin{pmatrix}
1/4  &  0  & 0  \\
0
 &
 \tr{\gamma_2 \rho \gamma_2^\dagger}
 &
 \tr{\gamma_2 \rho \gamma_3^\dagger}
 \\
0
 &
 \tr{\gamma_3 \rho \gamma_2^\dagger}
 &
 \tr{\gamma_3 \rho \gamma_3^\dagger}\\
 \end{pmatrix},
 \quad
 G_2
 = \begin{pmatrix}
1/4  &  0  & 0  \\
0
 &
 \tr{\gamma_2^\dagger \rho \gamma_2}
 &
 \tr{\gamma_2^\dagger \rho \gamma_3 }
 \\
0
 &
 \tr{\gamma_3^\dagger \rho \gamma_2 }
 &
 \tr{\gamma_3^\dagger \rho \gamma_3}\\
 \end{pmatrix},
 \quad
 G_3
 = \begin{pmatrix}
-1/2  &  0  & 0  \\
0
 &
 \tr{\rho_1 \gamma_2}
 &
 \tr{\rho_1 \gamma_3 }
 \\
0
 &
 \tr{\rho_2 \gamma_2 }
 &
 \tr{\rho_2 \gamma_3}\\
 \end{pmatrix}
\end{align}
\label{lem:Ju_matrix_G_mats}
\end{lemma}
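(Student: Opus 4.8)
The plan is to evaluate the Lagrange dual $g(u,{\bf z}) = \min_Y \mathsf{L}(Y,u,{\bf z})$ by substituting the minimizing $Y$ from Lemma~\ref{lem:Ymin-v2} directly into the Lagrangian~\eqref{eqn:orig_Lagrangain} and reorganizing the result as a real quadratic form in ${\bf z}$. Writing $Y = \sum_{a=1}^3 c_a \gamma_a$ with $c_a = z_a + i z_{a+3}$, the $-{\bf b}^\top{\bf z}$ term is inert, so it remains to reduce the three pieces $u\tr{Y\rho Y^\dagger}$, $(1-u)\tr{Y^\dagger\rho Y}$ and $\tr{AY}+\tr{A^\dagger Y^\dagger}=2{\rm Re}\tr{AY}$. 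Using the expansion of $A$ in~\eqref{eq:A-mat}, I would first note $A = \tfrac12(\bar c_1\rho + \bar c_2\rho_1 + \bar c_3\rho_2)$, so that all three pieces become sesquilinear forms in the complex coefficients $c_a$: explicitly $\tr{Y\rho Y^\dagger} = \sum_{a,b} c_a\bar c_b[G_1]_{ab}$, $\tr{Y^\dagger\rho Y} = \sum_{a,b}\bar c_a c_b[G_2]_{ab}$, and $2{\rm Re}\tr{AY} = {\rm Re}\sum_{a,b}\bar c_a c_b[G_3]_{ab}$, where $[G_1]_{ab}=\tr{\gamma_a\rho\gamma_b^\dagger}$, $[G_2]_{ab}=\tr{\gamma_a^\dagger\rho\gamma_b}$, and $[G_3]_{ab}=\tr{\sigma_a\gamma_b}$ with $(\sigma_1,\sigma_2,\sigma_3)=(\rho,\rho_1,\rho_2)$.

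The heart of the argument, and the step I expect to be the main obstacle, is the bookkeeping that turns each $3\times3$ complex sesquilinear form in ${\bf c}$ into a $6\times6$ real quadratic form in ${\bf z}=(z_1,\dots,z_6)^\top$ carrying exactly the stated $2\times2$ block structure. I would isolate this as a single algebraic identity: for any $3\times3$ complex matrix $G$,
\begin{align}
\sum_{a,b} c_a\bar c_b [G]_{ab}
&= {\bf z}^\top\begin{pmatrix}{\rm Re}\,G & {\rm Im}\,G\\ -{\rm Im}\,G & {\rm Re}\,G\end{pmatrix}{\bf z},\\
{\rm Re}\sum_{a,b} \bar c_a c_b [G]_{ab}
&= {\bf z}^\top\begin{pmatrix}{\rm Re}\,G & -{\rm Im}\,G\\ {\rm Im}\,G & {\rm Re}\,G\end{pmatrix}{\bf z},
\end{align}
the first holding as a real identity whenever $G$ is Hermitian. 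Here the entrywise real and imaginary parts of $G$ supply the diagonal and off-diagonal blocks, and the opposite sign patterns in the two identities are precisely what distinguish the $-iG$/$+iG$ off-diagonal blocks of $\mathcal{Q}_1$ from the $+iG$/$-iG$ blocks of $\mathcal{Q}_2$. Applying the first identity to the Hermitian $G_1$ gives $\tr{Y\rho Y^\dagger}={\bf z}^\top{\rm Re}(\mathcal{Q}_1){\bf z}$ after checking ${\rm Re}(-iG_1)={\rm Im}(G_1)$ and ${\rm Re}(iG_1)=-{\rm Im}(G_1)$; applying the second to the Hermitian $G_2$ gives $\tr{Y^\dagger\rho Y}={\bf z}^\top{\rm Re}(\mathcal{Q}_2){\bf z}$; and applying the second to the (generally non-Hermitian) $G_3$ gives $2{\rm Re}\tr{AY}={\bf z}^\top{\rm Re}(\mathcal{Q}_3){\bf z}$. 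Here $\tfrac12(G_3+G_3^*)={\rm Re}\,G_3$ and $\tfrac{i}{2}(G_3-G_3^*)=-{\rm Im}\,G_3$, so the matrix $\mathcal{Q}_3$ in the statement already equals its own real part and reproduces the same block template.

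What remains is to verify the explicit entries of $G_1,G_2,G_3$. The value $\gamma_1=-{\bf 1}_d/2$ gives $[G_1]_{11}=[G_2]_{11}=\tr{\rho}/4=1/4$ and $[G_3]_{11}=-\tr{\rho}/2=-1/2$. The vanishing of the remaining first-row and first-column entries follows from the normalization identity $\tr{\partial_j\rho}=\partial_j\tr{\rho}=0$: evaluating $\gamma_2,\gamma_3$ on the diagonal of the eigenbasis collapses the weight $(up_k+(1-u)p_j)^{-1}$ to $1/p_j$ and produces traces proportional to $\tr{\rho_1}$ or $\tr{\rho_2}$, all of which are zero, while Hermiticity of $G_1,G_2$ fills in the symmetric partners. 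Collecting the three pieces then yields $g(u,{\bf z})=-{\bf b}^\top{\bf z}+{\bf z}^\top\mathcal{Q}{\bf z}$ with $\mathcal{Q}=u\,{\rm Re}(\mathcal{Q}_1)+(1-u)\,{\rm Re}(\mathcal{Q}_2)+{\rm Re}(\mathcal{Q}_3)$, as claimed. As a consistency check one can note that the Sylvester optimality condition~\eqref{eq:sylvester-equation} forces $2{\rm Re}\tr{AY}=-2u\tr{Y\rho Y^\dagger}-2(1-u)\tr{Y^\dagger\rho Y}$, so that $\mathcal{Q}$ is in fact negative semidefinite, matching the hypothesis of Theorem~\ref{thrm:hcrb_bounds_arb_u}; I would invoke $0<u<1$ and the full rank of $\rho$ throughout to guarantee that the weights $(up_k+(1-u)p_j)^{-1}$ are finite.
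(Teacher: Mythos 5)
Your proposal is correct and takes essentially the same route as the paper's proof: substitute the minimising $Y=\sum_a c_a\gamma_a$ from Lemma~\ref{lem:Ymin-v2} into the Lagrangian, reduce each of the three terms $u\tr{Y\rho Y^\dagger}$, $(1-u)\tr{Y^\dagger\rho Y}$ and $\tr{AY}+\tr{A^\dagger Y^\dagger}$ to a real quadratic form in ${\bf z}$ with the stated $2\times 2$ block structure, and use $\tr{\partial_j\rho}=0$ together with the collapse of the weights $(up_k+(1-u)p_j)^{-1}$ to $1/p_j$ on the diagonal of the eigenbasis to obtain the zero entries and the values $1/4$ and $-1/2$. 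Your only departures are organisational: you package the block bookkeeping into two general sesquilinear-to-real-quadratic identities (correctly flagging that the first requires Hermitian $G$), where the paper expands each trace term by term and takes real parts at the end, and your closing Sylvester-equation check that ${\bf z}^\top\mathcal{Q}{\bf z}\le 0$ is a valid extra verification of what the paper instead deduces from concavity of the dual.
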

\begin{proof}
The Lagrange dual is given by substituting the optimal solution for $Y$ in the Lagrangian minimization. Recall the definition of the Lagrangian in Eq.~\eqref{eqn:orig_Lagrangain}, then the first term to evaluate is 
\begin{align}
\tr{Y \rho Y^\dagger}
    &=
 \tr{ ( z_1 \gamma_1 +
    z_2 \gamma_2 + 
    z_3 \gamma_3 
    +i z_4 \gamma_1 
    + i z_5 \gamma_2 
    + i z_6 \gamma_3)\rho
     ( z_1 \gamma_1^\dagger +
    z_2 \gamma_2^\dagger + 
    z_3 \gamma_3^\dagger 
    - i z_4 \gamma_1 ^\dagger
    - i z_5 \gamma_2 ^\dagger
    - i z_6 \gamma_3^\dagger)},
\end{align}
where we used the optimal solution for $Y$ as given in Lemma \ref{lem:Ymin-v2}. Writing this in matrix form, we have
\begin{align}
\tr{Y \rho Y^\dagger}
&=
{\bf z}^\top
\begin{pmatrix}
   G_1 &  - i G_1 \\
i G_1 &  G_1 \\
\end{pmatrix}
{\bf z} = {\bf z}^\top {\mathcal Q_1} {\bf z},
\end{align}
where {\bf z} is the column vector of Lagrange multipliers and the block matrix 
\begin{align}
    G_1=\begin{pmatrix}
 \tr{\gamma_1 \rho \gamma_1^\dagger} 
 &
 \tr{\gamma_1 \rho \gamma_2^\dagger}
 &
 \tr{\gamma_1 \rho \gamma_3^\dagger}
 \\
 \tr{\gamma_2 \rho \gamma_1^\dagger} 
 &
 \tr{\gamma_2 \rho \gamma_2^\dagger}
 &
 \tr{\gamma_2 \rho \gamma_3^\dagger}
 \\
 \tr{\gamma_3 \rho \gamma_1^\dagger} 
 &
 \tr{\gamma_3 \rho \gamma_2^\dagger}
 &
 \tr{\gamma_3 \rho \gamma_3^\dagger}\\
 \end{pmatrix}
 =
 \begin{pmatrix}
1/4
 &
 0
 &
 0
 \\
 0
 &
 \tr{\gamma_2 \rho \gamma_2^\dagger}
 &
 \tr{\gamma_2 \rho \gamma_3^\dagger}
 \\
 0
 &
 \tr{\gamma_3 \rho \gamma_2^\dagger}
 &
 \tr{\gamma_3 \rho \gamma_3^\dagger}\\
 \end{pmatrix}.
\end{align}
Here, we have used the definitions for $\gamma_j$ in Lemma~\ref{lem:Ymin-v2} and the traceless property of the state derivatives $\rho_1$ and $\rho_2$:
\begin{align}
 \tr{\rho \gamma_s}
 &=\tr{\rho \gamma_s^\dagger} =  \sum_{j} p_j
  ((1-u) p_j + u p_j)^{-1}
  \<e_j| \rho_{s-1}/2 |e_j\> 
  = \sum_{j} \<e_j| \rho_{s-1} |e_j\>/ 2 = 0,
\end{align}
for $s \in \{2,3\}$. Similarly, we determine the second term in the Lagrangian 
\begin{align}
\tr{Y^\dagger \rho Y}
    &=
 \tr{
 ( z_1 \gamma_1^\dagger +
    z_2 \gamma_2^\dagger + 
    z_3 \gamma_3^\dagger 
    - i z_4 \gamma_1 ^\dagger
    - i z_5 \gamma_2 ^\dagger
    - i z_6 \gamma_3^\dagger)
 \rho
    ( z_1 \gamma_1 +
    z_2 \gamma_2 + 
    z_3 \gamma_3 
    + i z_4 \gamma_1 
    + i z_5 \gamma_2 
    + i z_6 \gamma_3)}.    
\end{align}
In matrix form, this can similarly be written as
\begin{align}
\tr{Y^\dagger \rho Y}
&=
{\bf z}^\top
\begin{pmatrix}
   G_2 &  i G_2 \\
- i G_2 &  G_2 \\
\end{pmatrix}
{\bf z} = {\bf z}^\top {\mathcal Q_2} {\bf z}
\end{align}
where 
\begin{align}
G_2 =
 \begin{pmatrix}
1/4  &  0  & 0  \\
0
 &
 \tr{\gamma_2^\dagger \rho \gamma_2}
 &
 \tr{\gamma_2^\dagger \rho \gamma_3 }
 \\
0
 &
 \tr{\gamma_3^\dagger \rho \gamma_2 }
 &
 \tr{\gamma_3^\dagger \rho \gamma_3}\\
 \end{pmatrix}
\end{align}
Finally, we substitute the optimal solution for $Y$ into the last two terms of the Lagrangian
\begin{align}
\begin{split}
\tr{A Y + A^\dagger Y^\dagger} = \frac{1}{2} \tr{&( z_1 \rho +
    z_2 \rho_1 + 
    z_3 \rho_2 
    - i z_4 \rho
    - i z_5 \rho_1
    - i z_6 \rho_2)( z_1 \gamma_1 +
    z_2 \gamma_2 + 
    z_3 \gamma_3 
    + i z_4 \gamma_1 
    + i z_5 \gamma_2 
    + i z_6 \gamma_3) \\
    &+
    ( z_1 \rho +
    z_2 \rho_1 + 
    z_3 \rho_2 
    + i z_4 \rho
    + i z_5 \rho_1
    + i z_6 \rho_2)( z_1 \gamma_1^\dagger +
    z_2 \gamma_2^\dagger + 
    z_3 \gamma_3^\dagger 
    - i z_4 \gamma_1^\dagger 
    - i z_5 \gamma_2^\dagger 
    - i z_6 \gamma_3^\dagger)},
\end{split}
\end{align}
where we used the Hermicity of the state derivatives. Hence
\begin{align}
\tr{A Y + A^\dagger Y^\dagger} = \frac{1}{2} {\bf z}^\top
\begin{pmatrix}
   G_3 &  i G_3 \\
- i G_3 &  G_3 \\
\end{pmatrix}
{\bf z} + \frac{1}{2} {\bf z}^\top
\begin{pmatrix}
   G_4 &  - i G_4 \\
i G_4 &  G_4 \\
\end{pmatrix}
{\bf z} = {\bf z}^\top {\mathcal Q_3} {\bf z},
\end{align}
where 
\begin{align}
G_3 =
 \begin{pmatrix}
-1/2  &  0  & 0  \\
0
 &
 \tr{\rho_1 \gamma_2}
 &
 \tr{\rho_1 \gamma_3 }
 \\
0
 &
 \tr{\rho_2 \gamma_2 }
 &
 \tr{\rho_2 \gamma_3}\\
 \end{pmatrix},
 \quad
 G_4
 = \begin{pmatrix}
-1/2  &  0  & 0  \\
0
 &
 \tr{\rho_1 \gamma_2^\dagger}
 &
 \tr{\rho_1 \gamma_3^\dagger }
 \\
0
 &
 \tr{\rho_2 \gamma_2^\dagger }
 &
 \tr{\rho_2 \gamma_3^\dagger}\\
 \end{pmatrix}.
\end{align}
Notice that $G_4 = G_3^*$. Since the Lagrange dual must be real, and the dual variables $u$ and ${\bf z}$ must be real, we can take the real part of the matrices ${\mathcal Q_1}$, ${\mathcal Q_2}$ and ${\mathcal Q_3}$ to complete the proof.
\end{proof}

\end{widetext}

Because our optimization problem is convex, we are promised that $\mathsf{L}(Y,u,{\bf z})$ will be a concave function in both $u$ and $z$. 
This implies that for fixed $u$, 
$\mathsf{L}(Y,u,{\bf z}) $ is concave in ${\bf z}$, which implies that ${\mathcal Q}$ is negative definite. From the stationary point of $g(u, {\bf z})$ with respect to ${\bf z}$, we find that the optimal solution to ${\bf z}$ is given by the solution to the linear equation 
\begin{align}
    2 {\mathcal Q} {\bf z}= {\bf b}.
\end{align}
If ${\mathcal Q}$ is full rank, then we reach the optimal values for the Lagrange multipliers
\begin{align}
    {\bf z} = \frac{1}{2} {\mathcal Q} ^{-1} {\bf b}.
    \label{eq:zvec-analytical-solution}
\end{align}
Substituting this into the solution for $Y$ in Eq.~\eqref{eqn:ansatz_form} provides an ansatz that can saturate the \textsc{hcrb}, which is upper and lower bounded by
\begin{align}
\begin{split}
    \mathscr{U}_u &= \max\left\{{\bf z}^{\top} {\mathcal Q_1} {\bf z} , {\bf z}^{\top} {\mathcal Q_2} {\bf z}\right\},\\ 
    \mathscr{L}_u &= - {\bf b}^\top {\bf z} + {\bf z}^\top {\mathcal Q} {\bf z} 
\end{split}
\end{align}
respectively. For fixed $u$, our ansatz for $Y$ gives a tight bound when these two bounds are equivalent. If such a solution exists, we can optimise over the dual variable $u$, to find the optimal value. This can be determined numerically for any application in $\mathcal{O}(\text{polylog}(1/\epsilon))$ time, where $\epsilon$ is the duality gap. Alternatively, we can find the optimal $u$ by looking solely at the lower bound to the \textsc{hcrb}. 
\begin{align}
\begin{split}
    \mathscr{L}_u &= - \frac{1}{2}{\bf b}^\top {\mathcal Q} ^{-1} {\bf b}
    + 
    \frac{1}{4} {\bf b}^\top ({\mathcal Q} ^{-1} )^\top
    {\mathcal Q} {\mathcal Q} ^{-1} {\bf b}\\
    &=
     -\frac{1}{4} {\bf b}^\top  {\mathcal Q} ^{-1} {\bf b},
\end{split}
\end{align}
where we used Eq.~\eqref{eq:zvec-analytical-solution} and the fact that ${\mathcal Q}$ must be a symmetric matrix. The function $l_u$ is continuous and differentiable with respect to $u$. Also, duality theory of convex optimization promises that $\mathscr{L}_u$ is concave in $u$. Hence, in the scenario where the optimal $u$ is not attained for the values $u=\{0,1\}$, we will have that $\mathscr{L}_u$ is optimised at its stationary point $d \mathscr{L}_u/d u = 0$. The optimality condition of our \textsc{hcrb} is hence reduced to finding the roots of the stationary points of $\mathscr{L}_u$. This concludes our proof of theorem~\ref{thrm:hcrb_bounds_arb_u} in the main body of the text for any two-parameter estimation problem.


\section{Complexity analysis for Lagrange minimisation}
\label{subsec:complexity_analysis}

\noindent
For fixed $u$, our method to bound the \textsc{hcrb} requires minimising the Lagrangian. We found that this amounts to solving the Sylvester equation $A^\dagger = -(u Y \rho + (1-u) \rho Y)$ for $Y$. For large dimensional systems, evaluating the \textsc{hcrb} with concrete analytical results becomes increasingly cumbersome. In this scenario, a numerical approach can be used to handle the state diagonalisation to evaluate the \textsc{hcrb}. Here, we bound the complexity of our formalism to evaluating the \textsc{hcrb} numerically.

For large dimensional systems, the Sylvester equation is more efficiently solved by first vectorising the equation to 
\begin{align}
\vect(A^\dagger) = - (\mathbbm{1}_D\otimes (1-u)\rho + u\rho^\top\otimes\mathbbm{1}_D)\vect(Y),
\end{align}
and then solved using any system of linear equations solver. The Bartels-Stewart algorithm is an efficient and robust numerical solver for the Sylvester matrix for large-$D$~\cite{Bartels1972_ACM, Golub1979_IEE}, which outperforms well-known primitive implementations of Gaussian elimination. 
The complexity of the Bartels-Stewart algorithm scales as ${\mathcal{O}(D^3)}$. 

To circumvent any time complexity involved, we must have access to the basis that diagonalises the state. In this case, we solved the Sylvester equation analytically. For large $D$, there are efficient numerical methods that can attain the spectral decomposition of the state in sub-cubic time. For example, with $\rho(\bm{\theta}) \in \mathbb H_D$, the matrix inversion operation can be practically achieved using the Coppersmith-Winograd algorithm, which scales as $\mathcal{O}(D^{2.376})$~\cite{Coppersmith1990_JSC}. 

As a point of comparison, semi-definite programming (\textsc{sdp}) provide an alternative method to optimisation tasks. \textsc{sdp} programs can be applied to general problems and admit polynomial-time solvers, which highlight the power of this approach. The \textsc{hcrb} was recast as an \textsc{sdp} program in Ref.~\cite{Albarelli2019_PRL}. For a consistent complexity comparison with our method, we consider the non-trivial case where the state is full-rank. Hence, by observation of Eq. (11) in Ref.~\cite{Albarelli2019_PRL}, the variable $\bm{X}$ that is optimised has order $D^2$ terms. Further, notice that for each iteration of the \textsc{sdp} algorithm, the first constraint requires knowledge of the spectral decomposition of a matrix parameterised in terms of $\bm{X}$. Therefore, it is easy to observe that this brute force \textsc{sdp} approach has a time complexity greater than $\mathcal{O}(D^{2\times2.376})$. This indicates at least a quadratic speedup, which amounts to a significant improvement with increasing $D$.

\bibliographystyle{ieeetr}

\begin{thebibliography}{10}

\bibitem{Braunstein94}
S.~L. Braunstein and C.~M. Caves, ``{Statistical distance and the geometry of
  quantum states},'' {\em Phys. Rev. Lett.}, vol.~72, no.~22, pp.~3439--3443,
  1994.

\bibitem{Giovannetti2011_NP}
V.~Giovannetti, S.~Lloyd, and L.~Maccone, ``Advances in quantum metrology,''
  {\em Nat. Photon.}, vol.~5, pp.~222--229, April 2011.

\bibitem{Toth14}
G.~T{\'o}th and I.~Apellaniz, ``{Quantum metrology from a quantum information
  science perspective},'' {\em J. Phys. A: Math. Theoret.}, vol.~47, p.~424006,
  Oct. 2014.

\bibitem{Helstrom1968_IEEE}
C.~Helstrom, ``The minimum variance of estimates in quantum signal detection,''
  {\em IEEE Transactions on Information Theory}, vol.~14, pp.~234--242, March
  1968.

\bibitem{Helstrom1976}
C.~W. Helstrom, {\em Quantum Detection and Estimation Theory}.
\newblock Academic Press Inc., 1976.

\bibitem{Holevo2011}
A.~S. Holevo, {\em {Probabilistic and Statistical Aspects of Quantum Theory}}.
\newblock Springer, 1~ed., 2011.

\bibitem{Matsumoto2002_JPA}
K.~Matsumoto, ``A new approach to the cram{\'e}r-rao-type bound of the
  pure-state model,'' {\em J. Phys. A: Math. Gen.}, vol.~35, p.~3111, March
  2002.

\bibitem{asymptotic_theory_book_2005}
M.~Hayashi, {\em Asymptotic Theory of Quantum Statistical Inference}.
\newblock World Scientific Publishing, February 2008.

\bibitem{Gill2012_arxiv}
R.~D. Gill and M.~Guta, ``On asymptotic quantum statistical inference,'' {\em
  arXiv.org}, January 2012.

\bibitem{Burgarth2017_PRL}
D.~Burgarth and A.~Ajoy, ``Evolution-free hamiltonian parameter estimation
  through zeeman markers,'' {\em Phys. Rev. Lett.}, vol.~119, p.~030402, July
  2017.

\bibitem{Tsang2011_PRL}
M.~Tsang, H.~M. Wiseman, and C.~M. Caves, ``Fundamental quantum limit to
  waveform estimation,'' {\em Phys. Rev. Lett.}, vol.~106, p.~090401, March
  2011.

\bibitem{Ang2013_NJP}
S.~Z. Ang, G.~I. Harris, W.~P. Bowen, and M.~Tsang, ``Optomechanical parameter
  estimation,'' {\em New J. Phys.}, vol.~15, p.~103028, October 2013.

\bibitem{Baumgratz2016_PRL}
T.~Baumgratz and A.~Datta, ``Quantum enhanced estimation of a multidimensional
  field,'' {\em Phys. Rev. Lett.}, vol.~116, p.~030801, January 2016.

\bibitem{Tsang2009_PRL}
M.~Tsang, ``Quantum imaging beyond the diffraction limit by optical centroid
  measurements,'' {\em Phys. Rev. Lett.}, vol.~102, p.~253601, June 2009.

\bibitem{Humphreys2013_PRL}
P.~C. Humphreys, M.~Barbieri, A.~Datta, and I.~A. Walmsley, ``Quantum enhanced
  multiple phase estimation,'' {\em Phys. Rev. Lett.}, vol.~111, p.~070403,
  August 2013.

\bibitem{Gagatsos2016_PRA}
C.~N. Gagatsos, D.~Branford, and A.~Datta, ``Gaussian systems for
  quantum-enhanced multiple phase estimation,'' {\em Phys. Rev. A}, vol.~94,
  p.~042342, October 2016.

\bibitem{Pezze2017_PRL}
L.~Pezz\`e, M.~A. Ciampini, N.~Spagnolo, P.~C. Humphreys, A.~Datta, I.~A.
  Walmsley, M.~Barbieri, F.~Sciarrino, and A.~Smerzi, ``Optimal measurements
  for simultaneous quantum estimation of multiple phases,'' {\em Phys. Rev.
  Lett.}, vol.~119, p.~130504, September 2017.

\bibitem{Rubio2020_PRA}
J.~Rubio and J.~Dunningham, ``Bayesian multiparameter quantum metrology with
  limited data,'' {\em Phys. Rev. A}, vol.~101, p.~032114, March 2020.

\bibitem{Proctor2017_arxiv}
T.~J. Proctor, P.~A. Knott, and J.~A. Dunningham, ``Networked quantum
  sensing.'' unpublished, February 2017.

\bibitem{Sidhu2017_PRA}
J.~S. Sidhu and P.~Kok, ``Quantum metrology of spatial deformation using arrays
  of classical and quantum light emitters,'' {\em Phys. Rev. A}, vol.~95,
  p.~063829, June 2017.

\bibitem{Sidhu2018_arxiv}
J.~S. Sidhu and P.~Kok, ``Quantum fisher information for general spatial
  deformations of quantum emitters,'' {\em ArXiv}, February 2018.

\bibitem{Rubio2020_JPA}
J.~R. Jim{\'e}nez, P.~Knott, T.~Proctor, and J.~A. Dunningham, ``Quantum
  sensing networks for the estimation of linear functions,'' {\em J. Phys. A:
  Math. Theoret.}, June 2020.

\bibitem{Paris2009_IJQI}
M.~G.~A. Paris, ``Quantum estimation for quantum technologies,'' {\em Int. J.
  Quantum Inf.}, vol.~7, pp.~125--137, 2009.

\bibitem{Fraser1964_JRSS}
D.~A.~S. Fraser, ``On local unbiased estimation,'' {\em J. R. Stat. Soc. Ser.
  B}, vol.~26, pp.~46--51, October 1964.

\bibitem{Hall2012_PRX}
M.~J.~W. Hall and H.~M. Wiseman, ``Does nonlinear metrology offer improved
  resolution? answers from quantum information theory,'' {\em Phys. Rev. X},
  vol.~2, p.~041006, October 2012.

\bibitem{DemkowiczDobrzanski15}
R.~Demkowicz-Dobrzanski, M.~Jarzyna, and J.~Ko{\l}ody{\'{n}}ski, ``{Quantum
  Limits in Optical Interferometry},'' {\em Progress in Optics}, vol.~60,
  pp.~345--435, 2015.

\bibitem{Kay1993}
S.~Kay, {\em Fundamentals of Statistical Signal Processing, Volume I:
  Estimation Theory (v. 1)}.
\newblock United States: Prentice Hall, 1~ed., 1993.

\bibitem{Pezze2014_AI}
L.~Pezz\'{e} and A.~Smerzi, ``Quantum theory of phase estimation,'' {\em Proc.
  Int. School Phys. Enrico Fermi}, vol.~Course 188, p.~691, November 2014.

\bibitem{Degen2017_RMP}
C.~L. Degen, F.~Reinhard, and P.~Cappellaro, ``Quantum sensing,'' {\em Rev.
  Mod. Phys.}, vol.~89, p.~035002, July 2017.

\bibitem{Sidhu2020_AVS}
J.~S. Sidhu and P.~Kok, ``Geometric perspective on quantum parameter
  estimation,'' {\em AVS Quantum Science}, vol.~2, p.~014701, February 2020.

\bibitem{Zhu2015_SR}
H.~Zhu, ``Information complementarity: A new paradigm for decoding quantum
  incompatibility,'' {\em Scientific Reports}, vol.~5, pp.~14317--14329,
  September 2015.

\bibitem{Heinosaari2016_JPA}
T.~Heinosaari, T.~Miyadera, and M.~Ziman, ``An invitation to quantum
  incompatibility,'' {\em Journal of Physics A: Mathematical and Theoretical},
  vol.~49, p.~123001, February 2016.

\bibitem{Ragy2016_PRA}
S.~Ragy, M.~Jarzyna, and R.~Demkowicz-Dobrza{\'{n}}ski, ``{Compatibility in
  multiparameter quantum metrology},'' {\em Phys. Rev. A}, vol.~94,
  pp.~052108--11, Nov. 2016.

\bibitem{Guta2006_PRA}
M.~Gu{\c{t}}{\u{a}} and J.~Kahn, ``Local asymptotic normality for qubit
  states,'' {\em Physical Review A}, vol.~73, May 2006.

\bibitem{Hayashi2008_JMP}
M.~Hayashi and K.~Matsumoto, ``Asymptotic performance of optimal state
  estimation in qubit system,'' {\em J. Math. Phys.}, vol.~49, no.~10,
  p.~102101, 2008.

\bibitem{Yamagata2013_AS}
K.~Yamagata, A.~Fujiwara, and R.~D. Gill, ``Quantum local asymptotic normality
  based on a new quantum likelihood ratio,'' {\em Ann. Statist.}, vol.~41,
  pp.~2197--2217, October 2013.

\bibitem{Yang2019_CMP}
Y.~Yang, G.~Chiribella, and M.~Hayashi, ``Attaining the ultimate precision
  limit in quantum state estimation,'' {\em Commun. Math. Phys.}, vol.~368,
  pp.~223--293, May 2019.

\bibitem{Suzuki2016_JMP}
J.~Suzuki, ``Explicit formula for the holevo bound for two-parameter
  qubit-state estimation problem,'' {\em J. Math. Phys.}, vol.~57,
  pp.~042201--29, Apr. 2016.

\bibitem{Bradshaw2017_PLA}
M.~Bradshaw, S.~M. Assad, and P.~K. Lam, ``A tight cram{\'e}r-rao bound for
  joint parameter estimation with a pure two-mode squeezed probe,'' {\em
  Physics Letters A}, vol.~381, no.~32, pp.~2598 -- 2607, 2017.

\bibitem{Knott2016_PRA}
P.~A. Knott, T.~J. Proctor, A.~J. Hayes, J.~F. Ralph, P.~Kok, and J.~A.
  Dunningham, ``{Local versus global strategies in multiparameter
  estimation},'' {\em Phys. Rev. A}, vol.~94, pp.~062312--7, Dec. 2016.

\bibitem{Rehacek17}
J.~{\v{R}}eh{\' a\v c}ek, Z.~Hradil, B.~Stoklasa, M.~Paur, J.~Grover, A.~Krzic,
  and L.~L. S{\'a}nchez-Soto, ``{Multiparameter Quantum Metrology of Incoherent
  Point Sources: Towards Realistic Superresolution},'' {\em arXiv.org},
  p.~arXiv:1709.07705, Sept. 2017.

\bibitem{Zhang17}
L.~Zhang and K.~W.~C. Chan, ``{Quantum multiparameter estimation with
  generalized balanced multimode NOON-like states},'' {\em Phys. Rev. A},
  vol.~95, pp.~032321--6, Mar. 2017.

\bibitem{Proctor2018_PRL}
T.~J. Proctor, P.~A. Knott, and J.~A. Dunningham, ``Multiparameter estimation
  in networked quantum sensors,'' {\em Phys. Rev. Lett.}, vol.~120, p.~080501,
  February 2018.

\bibitem{Albarelli2019_PRL}
F.~Albarelli, J.~F. Friel, and A.~Datta, ``Evaluating the holevo cram{\'e}r-rao
  bound for multiparameter quantum metrology,'' {\em Phys. Rev. Lett.},
  vol.~123, no.~20, p.~200503, 2019.

\bibitem{Frowis2015_PRA}
F.~Fr\"owis, R.~Schmied, and N.~Gisin, ``Tighter quantum uncertainty relations
  following from a general probabilistic bound,'' {\em Phys. Rev. A}, vol.~92,
  p.~012102, July 2015.

\bibitem{Gill2000_PRA}
R.~D. Gill and S.~Massar, ``State estimation for large ensembles,'' {\em Phys.
  Rev. A}, vol.~61, p.~042312, March 2000.

\bibitem{BarndorffNielsen2000}
O.~E. Barndorff-Nielsen and R.~D. Gill, ``{Fisher information in quantum
  statistics},'' {\em J. Phys. A: Math. Gen.}, vol.~33, pp.~4481--4490, June
  2000.

\bibitem{Bagan2004_PRA}
E.~Bagan, M.~Baig, R.~Mu\~noz Tapia, and A.~Rodriguez, ``Collective versus
  local measurements in a qubit mixed-state estimation,'' {\em Phys. Rev. A},
  vol.~69, p.~010304, January 2004.

\bibitem{Bagan2006_PRA}
E.~Bagan, M.~A. Ballester, R.~D. Gill, A.~Monras, and R.~Mu\~noz Tapia,
  ``Optimal full estimation of qubit mixed states,'' {\em Phys. Rev. A},
  vol.~73, p.~032301, March 2006.

\bibitem{Kahn09}
J.~Kahn and M.~Guta, ``{Local Asymptotic Normality for Finite Dimensional
  Quantum Systems},'' {\em Commun. Math. Phys.}, vol.~289, pp.~597--652, Mar.
  2009.

\bibitem{Helstrom67}
C.~W. Helstrom, ``{Minimum mean-squared error of estimates in quantum
  statistics},'' {\em Phys. Lett. A}, vol.~25, no.~2, pp.~101--102, 1967.

\bibitem{Yuen73}
H.~P. Yuen and M.~Lax, ``Multiple-parameter quantum estimation and measurement
  of nonselfadjoint observables.,'' {\em IEEE Trans. Inf. Theory}, vol.~19,
  no.~6, pp.~740--750, 1973.

\bibitem{Belavkin1976_TMP}
V.~P. Belavkin, ``Generalized uncertainty relations and efficient measurements
  in quantum systems,'' {\em Theoret. Math. Phys.}, vol.~26, pp.~213--222,
  March 1976.

\bibitem{Nagaoka82}
H.~Nagaoka, ``A new approach to {Cram\'er-Rao} bounds for quantum state
  estimation,'' it 89-42, IEICE Technical Report, 1989.
\newblock reprinted in~\cite{asymptotic_theory_book_2005}, Chapter 8, pages
  100--112.

\bibitem{Genoni2013_PRA}
M.~G. Genoni, M.~G.~A. Paris, G.~Adesso, H.~Nha, P.~L. Knight, and M.~S. Kim,
  ``Optimal estimation of joint parameters in phase space,'' {\em Phys. Rev.
  A}, vol.~87, p.~012107, January 2013.

\bibitem{Holevo82}
A.~S. Holevo, {\em {Probabilistic and Statistical Aspects of Quantum Theory}}.
\newblock North Holland, Amsterdam, 1982.

\bibitem{Carollo2019_JSM}
A.~Carollo, B.~Spagnolo, A.~A. Dubkov, and D.~Valenti, ``On quantumness in
  multi-parameter quantum estimation,'' {\em J. Stat. Mech.-Theory E},
  vol.~2019, p.~094010, September 2019.

\bibitem{Tsang2019_arxiv}
M.~Tsang, ``The holevo cram{\'e}r-rao bound is at most thrice the helstrom
  version.'' unpublished, November 2019.

\bibitem{Albarelli2019_arxiv}
F.~Albarelli and A.~Datta, ``Upper bounds to the holevo cram{\'e}r-rao bound
  for multiparameter quantum metrology.'' unpublished, November 2019.

\bibitem{Ragy16}
S.~Ragy, M.~Jarzyna, and R.~Demkowicz-Dobrza{\'{n}}ski, ``{Compatibility in
  multiparameter quantum metrology},'' {\em Phys. Rev. A}, vol.~94,
  pp.~052108--11, Nov. 2016.

\bibitem{Suzuki2019_E}
J.~Suzuki, ``Information geometrical characterization of quantum statistical
  models in quantum estimation theory,'' {\em Entropy}, vol.~21, p.~703, July
  2019.

\bibitem{nocedal2006numerical}
J.~Nocedal and S.~Wright, {\em Numerical optimization}.
\newblock Springer Science \& Business Media, 2006.

\bibitem{Bartels1972_ACM}
R.~H. Bartels and G.~W. Stewart, ``Algorithm 432: Solution of the matrix
  equation ax+xb = c,'' {\em Commun. ACM}, vol.~15, pp.~820--826, September
  1972.

\bibitem{Gorski2007_MMOR}
J.~Gorski, F.~Pfeuffer, and K.~Klamroth, ``Biconvex sets and optimization with
  biconvex functions: a survey and extensions,'' {\em Math. Methods Oper.
  Res.}, vol.~66, pp.~373--407, June 2007.

\bibitem{Razzoli2019_PRA}
L.~Razzoli, L.~Ghirardi, I.~Siloi, P.~Bordone, and M.~G.~A. Paris, ``Lattice
  quantum magnetometry,'' {\em Phys. Rev. A}, vol.~99, p.~062330, June 2019.

\bibitem{Liu2019_PRL}
Y.-X. Liu, A.~Ajoy, and P.~Cappellaro, ``Nanoscale vector dc magnetometry via
  ancilla-assisted frequency up-conversion,'' {\em Phys. Rev. Lett.}, vol.~122,
  p.~100501, March 2019.

\bibitem{Pang2014_PRA}
S.~Pang and T.~A. Brun, ``Quantum metrology for a general hamiltonian
  parameter,'' {\em Phys. Rev. A}, vol.~90, p.~022117, August 2014.

\bibitem{Jones2009_S}
J.~A. Jones, S.~D. Karlen, J.~Fitzsimons, A.~Ardavan, S.~C. Benjamin, G.~A.~D.
  Briggs, and J.~J.~L. Morton, ``Magnetic field sensing beyond the standard
  quantum limit using 10-spin noon states,'' {\em Science}, vol.~324,
  pp.~1166--1168, May 2009.

\bibitem{Ouyang2014_PRA}
Y.~Ouyang, ``Permutation-invariant quantum codes,'' {\em Phys. Rev. A},
  vol.~90, p.~062317, December 2014.

\bibitem{Ouyang19_arxiv}
Y.~Ouyang, N.~Shettell, and D.~Markham, ``Robust quantum metrology with
  explicit symmetric states.'' unpublished, 2019.

\bibitem{GKP01}
D.~Gottesman, A.~Kitaev, and J.~Preskill, ``Encoding a qubit in an
  oscillator,'' {\em Phys. Rev. A}, vol.~64, p.~012310, Jun 2001.

\bibitem{noh2019-bosonic2bosonic}
K.~Noh, S.~M. Girvin, and L.~Jiang, ``Encoding an oscillator into many
  oscillators,'' 2019.

\bibitem{CLY97}
I.~L. Chuang, D.~W. Leung, and Y.~Yamamoto, ``{Bosonic quantum codes for
  amplitude damping},'' {\em Phys. Rev. A}, vol.~56, p.~1114, 1997.

\bibitem{WaB07}
W.~Wasilewski and K.~Banaszek, ``Protecting an optical qubit against photon
  loss,'' {\em Phys. Rev. A}, vol.~75, p.~042316, Apr 2007.

\bibitem{BvL16}
M.~Bergmann and P.~van Loock, ``Quantum error correction against photon loss
  using {NOON} states,'' {\em Phys. Rev. A}, vol.~94, p.~012311, Jul 2016.

\bibitem{ouyang2019permutation}
Y.~Ouyang and R.~Chao, ``Permutation-invariant constant-excitation quantum
  codes for amplitude damping,'' {\em IEEE Transactions on Information Theory},
  vol.~66, no.~5, pp.~2921--2933, 2019.

\bibitem{BinomialCodes2016}
M.~H. Michael, M.~Silveri, R.~T. Brierley, V.~V. Albert, J.~Salmilehto,
  L.~Jiang, and S.~M. Girvin, ``New class of quantum error-correcting codes for
  a bosonic mode,'' {\em Phys. Rev. X}, vol.~6, p.~031006, Jul 2016.

\bibitem{GCB20-PhysRevX.10.011058}
A.~L. Grimsmo, J.~Combes, and B.~Q. Baragiola, ``Quantum computing with
  rotation-symmetric bosonic codes,'' {\em Phys. Rev. X}, vol.~10, p.~011058,
  Mar 2020.

\bibitem{Terhal2015_RMP}
B.~M. Terhal, ``Quantum error correction for quantum memories,'' {\em Rev. Mod.
  Phys.}, vol.~87, pp.~307--346, April 2015.

\bibitem{Michael2016_PRX}
M.~H. Michael, M.~Silveri, R.~T. Brierley, V.~V. Albert, J.~Salmilehto,
  L.~Jiang, and S.~M. Girvin, ``New class of quantum error-correcting codes for
  a bosonic mode,'' {\em Phys. Rev. X}, vol.~6, p.~031006, July 2016.

\bibitem{Terhal2020_QST}
B.~Terhal, J.~Conrad, and C.~Vuillot, ``Towards scalable bosonic quantum error
  correction,'' {\em Quant. Sci. Tech.}, June 2020.

\bibitem{ouyang2020weight}
Y.~Ouyang and N.~Rengaswamy, ``Weight distribution of classical codes
  influences robust quantum metrology,'' {\em arXiv preprint arXiv:2007.02859},
  2020.

\bibitem{deadman2016}
E.~Deadman and S.~D. Relton, ``Taylor's theorem for matrix functions with
  applications to condition number estimation,'' {\em Linear Algebra and its
  Applications}, vol.~504, pp.~354 -- 371, 2016.

\bibitem{ouyang2019mems}
Y.~Ouyang, ``Quantum storage in quantum ferromagnets,'' {\em arXiv preprint
  arXiv:1904.01458}, 2019.

\bibitem{Best2010_book}
M.~J. Best, {\em Portfolio Optimization}.
\newblock Chapman and Hall/CRC, 1~ed., March 2010.

\bibitem{Golub1979_IEE}
S.~N. G.~Golub and C.~V. Loan, ``A hessenberg-schur method for the problem $ax
  + xb = c$,'' {\em IEEE Trans. Autom. Control}, vol.~24, pp.~909--913,
  December 1979.

\bibitem{Coppersmith1990_JSC}
D.~Coppersmith and S.~Winograd, ``Matrix multiplication via arithmetic
  progressions,'' {\em J. Symb. Comput.}, vol.~9, pp.~251--280, March 1990.

\end{thebibliography}

\end{document}